\def\cqedsymbol{\ifmmode$\lrcorner$\else{\unskip\nobreak\hfil
\penalty50\hskip1em\null\nobreak\hfil$\lrcorner$
\parfillskip=0pt\finalhyphendemerits=0\endgraf}\fi} 
\newcommand{\cqed}{\renewcommand{\qed}{\cqedsymbol}}
\newtheorem{lemma}{Lemma}[section]
\newtheorem{corollary}[lemma]{Corollary}
\newtheorem{theorem}[lemma]{Theorem}
\newtheorem{claim}[lemma]{Claim}
\theoremstyle{definition}
\newtheorem{definition}[lemma]{Definition}
\newcommand{\wei}{\omega}
\newcommand{\Oh}{\mathcal{O}}
\newcommand{\tree}{\mathcal{T}}
\newcommand{\brleaf}{\zeta}
\newcommand{\brnoose}{\gamma}
\newcommand{\brcut}{\mathrm{mid}}
\newcommand{\Aa}{\mathcal{A}}
\newcommand{\Bb}{\mathcal{B}}
\newcommand{\Pp}{\mathcal{P}}
\newcommand{\Qq}{\mathcal{Q}}
\newcommand{\tspname}{\textsc{Directed Subset TSP}}
\newcommand{\Gbunch}{G^\Theta}
\newcommand{\Wbunch}{W^\Theta}
\newcommand{\Pbunch}{P_G^\Theta}
\newcommand{\Ggrid}{G^\square}
\newcommand{\Wgrid}{W^\square}
\newcommand{\grid}{\Gamma}
\newcommand{\Pgrid}{P_G^\square}
\newcommand{\Tedges}{F}
\renewcommand{\leq}{\leqslant}
\renewcommand{\geq}{\geqslant}
\renewcommand{\ge}{\geqslant}
\title{A subexponential parameterized algorithm for Directed Subset Traveling Salesman Problem on planar graphs\thanks{%
The results of this paper have been presented in an extended abstract at FOCS 2018~\cite{MarxPP18}.
This research is a part of projects that have received funding from the European Research Council (ERC) under the European Union's Horizon 2020 research and innovation programme
under grant agreements No.~280152 and 725978 (D\'{a}niel Marx) and 714704 (Marcin Pilipczuk).
The research of Micha\l{} Pilipczuk is supported by Polish National Science Centre grant UMO-2013/11/D/ST6/03073.
Micha\l{} Pilipczuk is also supported by the Foundation for Polish Science (FNP) via the START stipend programme.}}
\author{ 
  D\'aniel Marx\thanks{%
  CISPA Helmholtz Center for Information Security, Saarland Informatics Campus, Germany (\texttt{marx@cispa.de})}
  \and 
  Marcin Pilipczuk\thanks{
    Institute of Informatics, University of Warsaw, Poland (\texttt{marcin.pilipczuk@mimuw.edu.pl}).
  }
  \and 
  Micha\l{} Pilipczuk\thanks{
    Institute of Informatics, University of Warsaw, Poland (\texttt{michal.pilipczuk@mimuw.edu.pl}).
  }
}
\date{}
\begin{document}

\maketitle

\begin{textblock}{20}(0, 13.0)
\includegraphics[width=40px]{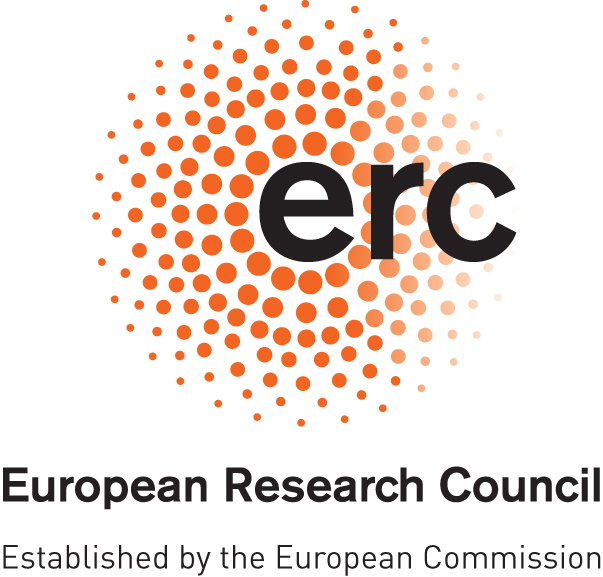}%
\end{textblock}
\begin{textblock}{20}(0, 13.8)
\includegraphics[width=40px]{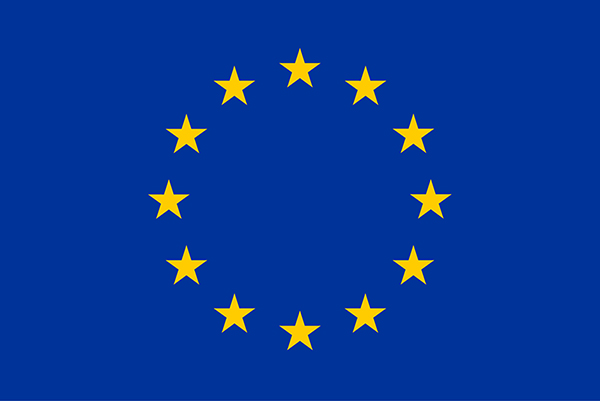}%
\end{textblock}

\begin{abstract}
There are numerous examples of the so-called ``square root phenomenon'' in the field of parameterized algorithms: many of the most fundamental graph problems, parameterized by some natural parameter $k$, become significantly simpler when restricted to planar graphs and in particular the best possible running time is exponential in $\Oh(\sqrt{k})$ instead of $\Oh(k)$ (modulo standard complexity assumptions). We consider a classic optimization problem \textsc{Subset Traveling Salesman}, where we are asked to visit all the terminals $T$ by a minimum-weight closed walk.
We investigate the parameterized complexity of this problem in planar graphs, where the number $k=|T|$ of terminals is
regarded as the parameter. We show that 
{\sc{Subset TSP}} can be solved in time $2^{\Oh(\sqrt{k}\log k)}\cdot n^{\Oh(1)}$ even on edge-weighted directed planar graphs. This improves upon the algorithm of Klein and Marx [SODA 2014] with the same running time that worked only on undirected planar graphs with polynomially large integer weights. 
\end{abstract}

\section{Introduction}\label{sec:intro}
It has been observed in the context of different algorithmic paradigms
that planar graphs enjoy important structural properties that allow
more efficient solutions to many of the classic hard algorithmic
problems. The literature on approximation algorithms contains many examples of optimization
problems that are APX-hard on general graphs, but admit
polynomial-time approximation schemes (PTASes) when restricted to planar
graphs (see, e.g., \cite{DBLP:journals/talg/BorradaileK16,DBLP:conf/stoc/FoxKM15,DBLP:conf/stacs/KleinMZ15,DBLP:conf/soda/EisenstatKM14,DBLP:conf/soda/EisenstatKM12,DBLP:conf/soda/BateniHKM12,DBLP:journals/talg/BorradaileKM09,DBLP:journals/siamcomp/Klein08,DBLP:conf/stoc/BateniDHM16,DBLP:journals/jacm/BateniHM11}). When looking for exact solutions, even though the planar versions of most NP-hard problems
remain NP-hard, a more
fine-grained look reveals that significantly better running times are
possible for planar graphs. As a typical example, consider the
\textsc{3-Coloring} problem: it can be solved in time $2^{\Oh(n)}$ in
general graphs and, assuming the Exponential-Time Hypothesis (ETH),
this is best possible as there is no $2^{o(n)}$-time
algorithm. However, when restricted to planar graphs,
\textsc{3-Coloring} can be solved in time $2^{\Oh(\sqrt{n})}$, which is
again best possible assuming ETH: the existence of a $2^{o(\sqrt{n})}$-time algorithm would contradict ETH.
(A detailed discussion on these and similar results can be found in Section~14.2 of~\cite{platypus}.)
There are many other problems
that behave in a similar way and this can be attributed to the
combination of two important facts: (1) every planar graph on $n$
vertices has treewidth $\Oh(\sqrt{n})$ and (2) given an $n$-vertex graph
of treewidth $t$, most of the natural combinatorial problems can be
solved in time $2^{\Oh(t)}\cdot n^{\Oh(1)}$ (or perhaps $2^{\Oh(t\cdot \textup  {polylog } t)} \cdot n^{\Oh(1)}$). On the lower bound side, to rule out $2^{o(\sqrt{n})}$-time algorithms, it is sufficient to observe that most planar NP-hardness proofs increase the size of the instance at most quadratically (because of the introduction of crossing gadgets). For example, there is a reduction
that given an instance of \textsc{3SAT} with $n$ variables and $m$ clauses produce 
an instance of \textsc{3-Coloring} that is a planar graph with $\Oh((n+m)^2)$ vertices. Together with ETH, such a reduction rules out $2^{o(\sqrt{n})}$-time algorithms for planar \textsc{3-Coloring}. Thus the existence of this ``square root phenomenon'' giving $2^{\Oh(\sqrt{n})}$ time complexity is well-understood both from the algorithmic and complexity viewpoints.

Our understanding of this phenomenon is much less complete for
parameterized problems. A large fraction of natural fixed-parameter
tractable graph problems can be solved in time $2^{\Oh(k)}\cdot
n^{\Oh(1)}$ (with notable exceptions
\cite{DBLP:journals/siamcomp/CyganPP16,DBLP:journals/siamcomp/LokshtanovMS18})
and a large fraction of W[1]-hard problems can be solved in time
$n^{\Oh(k)}$. There are tight or almost-tight lower bounds showing the
optimality of these running times. By now, there is a growing list of
problems where the running time improves to
$2^{\Oh(\sqrt{k}\cdot\textup{polylog }k)}\cdot n^{\Oh(1)}$ or to
$n^{\Oh(\sqrt{k}\cdot\textup{polylog }k)}$ when restricted to planar
graphs. For a handful of problems (e.g., \textsc{Independent Set},
\textsc{Dominating Set}, \textsc{Feedback Vertex Set},
\textsc{$k$-Path}) this improvement can be explained in a compact way
by the elegant theory of bidimensionality \cite{DemaineFHT05}.
However, there is no generic argument (similar to the simple argument
described above for the existence of $2^{\Oh(\sqrt{n})}$ algorithms)
why such an improvement should be possible for most parameterized
problems. The fact that every $n$-vertex planar graph has treewidth
$\Oh(\sqrt{n})$ does not seem to help in improving the $2^{\Oh(k)}$
factor to $2^{\Oh(\sqrt{k})}$ in the running time. The algorithmic
results of this form are thus very problem-specific, exploiting
nontrivial observations on the structure of the solution or invoking
other tools tailored to the problem's nature. Recent results include
algorithms for \textsc{Subset TSP} \cite{KleinM14}, \textsc{Multiway
  Cut} \cite{KleinM12,Marx12}, unweighted \textsc{Steiner Tree}
parameterized by the number of edges of the solution
\cite{pst-kernel,PilipczukPSL13}, \textsc{Strongly Connected Steiner
  Subgraph} \cite{ChitnisHM14}, \textsc{Subgraph Isomorphism}
\cite{FominLMPPS16}, facility location problems
\cite{DBLP:conf/esa/MarxP15}, \textsc{Odd Cycle Transversal}
\cite{DBLP:conf/fsttcs/LokshtanovSW12}, and \textsc{3-Coloring}
parameterized by the number of vertices with degree $\ge 4$
\cite{DBLP:journals/corr/AboulkerBHMT15}.

      It is plausible to expect that other natural problems also have
      significantly faster parameterized algorithms on planar
      graphs. The reason for this optimism is twofold. First, even
      though the techniques used to obtain the results listed above
      are highly problem-specific, they suggest that planar graphs
      have rich structural properties, connected to the existence of sublinear separators, that can be exploited in various ways and in multiple settings. Second, lower bounds ruling out subexponential algorithms for
      planar problems intuitively require large expressive power of the combinatorics of the problem at hand, which is lacking in the case most natural problems. More precisely, to prove that a parameterized algorithm with running time $2^{o(k)}\cdot n^{\Oh(1)}$ violates ETH, one needs to give a reduction from \textsc{3SAT} with $m$ clauses to a planar instance with parameter $k=\Oh(m)$. However, in a typical reduction for a typical problem, the output planar graph has $\Omega(m^2)$ ``crossing gadgets'', each increasing the parameter, which ultimately yields $k = \Omega(m^2)$. 
      
      The intuition presented in the paragraph above is, however, not quite right.
      In a very recent result, we have found a novel type of reduction that gets around the discussed limitations and, assuming ETH, rules out the existence of $2^{o(k)} \cdot n^{\Oh(1)}$-time algorithms for \textsc{Steiner Tree} parameterized by the number of terminals~\cite{MarxPP18}.
 A result of similar flavor has been reported by Bodlaender et al.~\cite{BodlaenderNZ16}, who, under the same assumption, ruled out the existence of a $2^{o(k / \log k)} \cdot n^{\Oh(1)}$-time
 algorithm for \textsc{Subgraph Isomorphism} (and a few related problems) in planar graphs, parameterized by the size of the pattern graph. These results put the search for subexponential parameterized algorithms in planar graphs in a new perspective, as they show that the boundary between subexponential tractability and intractability is much more wild --- and therefore interesting --- than previously expected.

      \paragraph*{Our contribution.} In this paper we address a classic
      problem on planar graphs for which the existence
      of subexponential parameterized algorithm was open. Given a
      graph $G$ with a subset $T$ of vertices distinguished as
      terminals, the \textsc{Subset TSP} problem asks for a shortest
      closed walk visiting the terminals in any order. Parameterized by the
      number $k=|T|$ of terminals, the problem is fixed-parameter
      tractable in arbitrary graphs: it can be solved in time
      $2^k\cdot n^{\Oh(1)}$ by first computing the distance between
      every pair of terminals, and then solving the resulting $k$-terminal
      instance using the standard Bellman-Held-Karp dynamic programming algorithm. Klein and Marx \cite{KleinM14}
      showed that if $G$ is an undirected planar graph with
      polynomially bounded edge weights, then the problem can be
      solved significantly faster, in time $2^{\Oh(\sqrt{k}\log k)}\cdot
      n^{\Oh(1)}$. The limitations of polynomial weights and
      undirected graphs are inherent to this algorithm: it starts with
      computing a locally 4-step optimal solution (which requires
      polynomial weights to terminate in polynomial time) and relies
      on an elaborate subtour-replacement argument (which breaks down if
      the tour has an orientation). The main argument is the unexpected 
      claim that the union of an optimal and a
      locally 4-step optimal tour has treewidth $\Oh(\sqrt{k})$. 
      
      Our
      result is a more robust and perhaps less surprising
      algorithm that achieves the same running time, but does not
      suffer from these limitations.

\begin{theorem}\label{thm:dirtsp}
  Given an edge-weighted directed planar graph $G$ with terminals $T$,
  \textsc{Subset TSP} parameterized by $k=|T|$ can be solved in time
  $2^{\Oh(\sqrt{k}\log k)} n^{\Oh(1)}$.
\end{theorem}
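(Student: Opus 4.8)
The plan is to reduce \textsc{Subset TSP} to dynamic programming on a planar graph of branchwidth $\Oh(\sqrt k)$ --- which accounts for the $2^{\Oh(\sqrt k)}$ part of the running time --- with the extra $\log k$ factor coming from the size of a DP state at a noose and from a controlled amount of guessing. First I would fix a convenient normal form for an optimal solution. An optimal closed walk $W$ is essentially the same object as a minimum-weight sub-multigraph $H$ of $G$ that is connected, spans $T$, and has equal in- and out-degree at every vertex, since any such $H$ carries a closed Eulerian walk realizing the tour. Cutting $W$ at its visits to terminals shows it consists of segments, each of which may be taken to be a canonical shortest $T$-avoiding path $P_{uv}$ between its endpoints $u,v\in T$; since there are only $\Oh(k^2)$ ordered pairs, the whole solution may be assumed to lie inside the fixed subgraph $\bigcup_{u,v} P_{uv}$. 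Finally, an inclusion-minimal such $H$ is a Steiner-tree-like skeleton plus an $\Oh(k)$-sized ``balancing circulation'', so after suppressing degree-$2$ vertices it has $\Oh(k)$ vertices, and therefore $H$ has branchwidth $\Oh(\sqrt k)$.

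The core step is to make this bound usable without knowing $H$: I would produce, in $2^{\Oh(\sqrt k\log k)}\cdot n^{\Oh(1)}$ time, a graph $\grW$ of branchwidth $\Oh(\sqrt k)$ that still contains an optimal solution of the right weight, and then run the DP on a sphere-cut decomposition of $\grW$. To build $\grW$ I would recurse on the plane: in the current region pick a noose $\brnoose$ of small radial length that splits the terminals (and the skeleton of $H$) into two almost balanced parts; guess the $\Oh(\sqrt k)$-bounded interface of $H$ with $\brnoose$, namely which vertices of $\brcut(\brnoose)$ it uses and how the parts of $H$ on the two sides attach to them; and recurse on each side after applying the projection $\proj$ that collapses the region cut off from the current one to a single ``highway'' path. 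Because every interface has size $\Oh(\sqrt k)$ and its vertices can be assumed to lie in a $\mathrm{poly}(k)$-sized canonical set --- intersection points of the paths $P_{uv}$, whose number is controlled through the face-cover quantity $\fc$ --- a careful amortization keeps the total number of guesses across all recursion levels at $2^{\Oh(\sqrt k\log k)}$, and the assembled $\grW$ admits a sphere-cut decomposition in which every noose is crossed by an optimal solution $\Oh(\sqrt k)$ times.

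On such a decomposition the dynamic program is standard in outline: a table entry at a noose $\brnoose$ stores (i) which terminals lie on the inside, (ii) the interface pattern, i.e., the multiset of points of $\brcut(\brnoose)$ at which the sought closed walk crosses together with the non-crossing matching telling which entrances are joined to which exits by the inside part, and (iii) a connectivity certificate that the inside part forms a single component. Since $|\brcut(\brnoose)| = \Oh(\sqrt k)$ there are $2^{\Oh(\sqrt k\log k)}$ states (the $\log k$ from the labelled matchings), so the DP runs in $2^{\Oh(\sqrt k\log k)}\cdot n^{\Oh(1)}$, and together with the guessing in the construction of $\grW$ this stays within the claimed bound.

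I expect the construction step to be the main obstacle: one must show that an optimal walk can be rerouted so that it simultaneously lives on a graph of branchwidth $\Oh(\sqrt k)$ and meets a suitable family of nooses through interfaces of size $\Oh(\sqrt k)$, while the guessing needed to reconstruct these interfaces amortizes to $2^{\Oh(\sqrt k\log k)}$. The specifically directed difficulty is that cutting the walk at a noose and reconnecting it along the guessed interface and the highways must preserve the in-degree/out-degree balance on \emph{both} sides at once, and the Klein--Marx device of reversing a sub-tour to repair such defects is unavailable here; this seems to force pairing the crossings of $H$ with each noose into ``entering'' and ``leaving'' crossings and re-routing them by a flow argument, and choosing the nooses carefully --- small radial length, balanced with respect to $T$ --- so that the projection $\proj$ genuinely brings the branchwidth down to $\Oh(\sqrt k)$.
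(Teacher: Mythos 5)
There is a genuine gap, and it sits exactly where you predict it does, but it starts earlier than you think. The structural claim in your first paragraph --- that an inclusion-minimal solution is ``a Steiner-tree-like skeleton plus an $\Oh(k)$-sized balancing circulation'', hence has $\Oh(k)$ vertices after suppressing degree-$2$ vertices and so branchwidth $\Oh(\sqrt k)$ --- is unsupported and, as stated, false in the directed setting. The segments of an optimal tour between consecutive terminals are shortest paths, but in a directed planar graph two such segments (or a segment with itself) can cross a number of times that is not bounded by any function of $k$; each crossing is a degree-$4$ vertex of the union, so the suppressed graph can have $\Omega(n)$ vertices, and the ``circulation'' view does not control these crossing vertices at all. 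The paper never proves a branchwidth bound for the solution itself. Instead it first \emph{uncrosses} the optimal walk into a reduced, ``first-cross'' normal form, decomposes it into $\Oh(k)$ blocks such that each block crosses at most one other block and has its endpoints in a polynomial-size set of \emph{important} vertices (Lemma~\ref{lem:clean-decomposed}), and only after untwisting crossing pairs and contracting block interiors --- and after adding an artificial spanning tree $\grW$ that controls homotopy relative to the terminals --- does it obtain an $\Oh(k)$-vertex plane graph with a sphere-cut decomposition of width $\Oh(\sqrt k)$ (Claim~\ref{cl:Hx-decomp}). Your bound would need an argument of this kind; it does not follow from minimality.

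The second and larger gap is the ``core step'': producing, within $2^{\Oh(\sqrt k\log k)}$ guesses, a host graph of branchwidth $\Oh(\sqrt k)$ that provably contains an optimal solution. This is where essentially all of the work lies, and your sketch (balanced nooses, guessed $\Oh(\sqrt k)$ interfaces, ``highway'' projections, amortized guessing) gives no mechanism for certifying either the existence of such a host or the amortization --- as you yourself note. Moreover, the in/out-degree-balance difficulty you flag is an artifact of formulating the problem as a DP for a balanced Eulerian sub-multigraph on a branch decomposition of a subgraph of $G$. The paper avoids this formulation entirely: it enumerates a family $\Bb$ of $k^{\Oh(\sqrt k)}$ candidate terminal subsets (Theorem~\ref{thm:enum}), obtained by guessing the $\Oh(\sqrt k)$ contracted blocks that a decomposition noose passes through --- their endpoints come from the polynomial set of important vertices, their routes are shortest paths recovered up to parallel shifts --- and then applying the noose-enumeration lemma (Lemma~\ref{lem:noose-enum}) to the guessed structure plus $\grW$. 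The algorithm then runs a restricted Bellman--Held--Karp-style DP whose states are pairs (subset from $\Bb$, family of $\Oh(\sqrt k)$ terminal pairs) and whose transitions concatenate partial walks via shortest paths (Lemma~\ref{lem:dp}); the output is a closed walk by construction, so no degree-balance bookkeeping, rerouting, or flow-repair argument is ever needed. Without a concrete replacement for both the normalization/decomposition step and the enumeration step, your plan does not yet constitute a proof.
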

The similarity of \textsc{Subset TSP} and \textsc{Steiner Tree}, for which a lower bound ruling out $2^{o(k)} n^{\Oh(1)}$ time algorithms in planar graphs has been recently shown~\cite{MarxPP18}, 
    suggests a very intricate boundary between parameterized problems that admit and do not admit subexponential parameterized algorithms in planar graphs.

The proof of Theorem~\ref{thm:dirtsp} has the same high-level idea as the algorithm of Klein and Marx \cite{KleinM14}: a family of $2^{\Oh(\sqrt{k}\log k)}$ subsets of terminals is computed,
followed by applying a variant of the  Bellman-Held-Karp dynamic programming algorithm that considers only subsets of terminals that appear in this family. However, the way we compute 
such a family is very different: the construction of Klein and Marx \cite{KleinM14} crucially relies on how the optimal solution interacts with the locally 4-step optimal solution (e.g., they cross each other $\Oh(k)$ times), while our argument here does not use any such assumption. For directed graphs, we can extract much fewer properties of the structure of the solution or how it interacts with some other object. For example, we cannot require that the optimum solution is non-self-crossing and the number of self-crossings cannot be even bounded by a function of $k$. Thus in order to find an algorithm working on directed graphs, we need to use more robust algorithmic ideas that better explain why it is possible to have subexponential parameterized algorithms for this problem.

In Section~\ref{sec:overview}, we highlight these new ideas 
in an overview of the algorithm of Theorem~\ref{thm:dirtsp}.
After brief preliminaries in Section~\ref{sec:prelims} and 
an auxiliary result on noose enumeration in Section~\ref{sec:nooses},
    we prove Theorem~\ref{thm:dirtsp} in Section~\ref{sec:tsp}.

\section{An overview of the algorithm}\label{sec:overview}

In this section we give an overview of the approach leading to the subexponential parameterized algorithm for \tspname{}, that is, the proof of Theorem~\ref{thm:dirtsp}.
We first describe the high-level strategy of restricting the standard dynamic programming algorithm to a smaller family of candidate states.
Then we explain the main idea of how such a family of candidate states can be obtained; however, we introduce multiple simplifying assumptions and hide most of the technical problems.
Finally, we briefly review the issues encountered when making the approach work in full generality, and explain how we cope with them.
We strongly encourage the reader to read this section before proceeding to the formal description, as in the formal layer many of the key ideas become somehow obfuscated by the technical details surrounding them.

\subsection{Restricted dynamic programming}

Restricting dynamic programming to a small family of candidates states is by now a commonly used technique in parameterized complexity.
The idea is as follows.
Suppose that we search for a minimum-cost solution to a combinatorial problem, and this search
can be expressed as solving a number of subproblems in a dynamic programming fashion, where each subproblem corresponds
to a {\em{state}} from a finite state space $\cal S$. Usually, subproblems correspond to partial solutions, and transitions between states correspond to extending one partial solution to a larger
partial solution at some cost, or combining two or more partial solutions to a larger one. 
For simplicity, assume for now that we only extend single partial solutions to larger ones, rather than combine multiple partial solutions.
Then the process of assembling the final solution from partial solutions may be described as a nondeterministic algorithm that guesses consecutive extensions, leading
from a solution to the most basic subproblem to the final solution for the whole instance. 
The sequence of these extensions is a path (called also a {\em{computation path}}) in a directed graph on $\cal S$ where the transitions between the states are the arcs.
Then the goal is to find a minimum-weight path from the initial state to any final state, which can be done in time linear in the size of this state graph, provided it is acyclic.

In order to improve the running time of such an algorithm one may try the following strategy. Compute a subset of states $\cal S'\subseteq \cal S$ with the following guarantee: there is a computation path leading
to the discovery of a minimum-weight solution that uses only states from $\cal S'$. Then we may restrict the search only to states from $\cal S'$. So the goal is to find a subset of states $\cal S'$ that
is rich enough to ``capture'' some optimum solution, while at the same time being as small as possible so that the algorithm is efficent.

Let us apply this principle to \tspname{}. Consider first the most standard dynamic programming algorithm for this problem, working on general graphs in time $2^{k}\cdot n^{\Oh(1)}$,
where we denote $k=|T|$ by convention. 
Each subproblem is described by a subset of terminals $S\subseteq T$ and two terminals $s_1,s_2\in S$. The goal in the subproblem is to find the shortest tour that starts in $s_1$, ends in $s_2$, and visits
all terminals of $S$ along the way. The transitions are modelled by a possibility of extending a solution for the state $(S,s_1,s_2)$ to a solution for the state $(S\cup \{s'\},s_1,s')$ for any $s'\notin S$ at the cost of adding the shortest path
from $s_2$ to $s'$. The minimum-weight tour can be obtained by taking the best among solutions obtained as follows: for any $s_1,s_2\in T$, take the solution for the subproblem $(T,s_1,s_2)$ and augment it by adding
the shortest path from $s_2$ to $s_1$. Observe that the above algorithm is essentially
the standard Bellman-Held-Karp dynamic programming algorithm for \textsc{TSP}, applied
to the shortest path metric on $T$.

This is not the dynamic programming algorithm we will be improving upon. The reason is that restricting ourselves to constructing one interval on the tour at a time makes it difficult to enumerate a small
subfamily of states capturing an optimum solution. Also, the above dynamic programming algorithm computes an optimum partial solution to every subproblem. In our dynamic programming algorithm
we will be only able to ensure optimality for states appearing on the chosen computation path for some chosen optimal solution.

Instead, we consider a more involved variant of the above dynamic programming routine, which intuitively keeps track of $\Oh(\sqrt{k})$ intervals on the tour at a time.
More precisely, each subproblem is described by a state defined as a pair $(S,\mathcal{M})$, 
where $S\subseteq T$ is a subset of terminals to be visited, and $\mathcal{M}$ (also called {\em{connectivity pattern}}) is a set of pairwise disjoint pairs of terminals from $S$, where
$|\mathcal{M}|\leq C\sqrt{k}$ for some universal constant $C$.
The goal in the subproblem is to compute a family of paths $\mathcal{P}_{(S,\mathcal{M})}$ of minimum possible weight having the following properties:
for each $(s_1,s_2)\in \mathcal{M}$ there is a path in $\mathcal{P}_{(S,\mathcal{M})}$ that leads from $s_1$ to $s_2$, and each terminal from $S$ lies on some path in $\mathcal{P}_{(S,\mathcal{M})}$.
Note, however, that we do not specify, for each terminal from $S$, on which of the paths it has to lie.

Solutions to such subproblems may be extended by single terminals as in the standard dynamic programming, but they can be also combined in pairs.
More precisely, consider two solutions $\Pp_1$ and $\Pp_2$ respectively for $(S_1,\mathcal{M}_1)$ and $(S_2,\mathcal{M}_2)$ where $S_1\cap S_2=\emptyset$.
For $i=1,2$, let $X_i$ and $Y_i$ be the starting and the ending terminals of the matching of $\mathcal{M}_i$.
Let $X \subseteq X_1 \cup X_2$ and $Y \subseteq Y_1 \cup Y_2$ be two equal-sized sets, let $X' = (X_1 \cup X_2) \setminus X$ and $Y' = (Y_1 \cup Y_2) \setminus Y$; note that $|X| = |Y|$ implies $|X'| = |Y'|$.
Let $\mathcal{M}'$ be a matching between $Y'$ and $X'$ and let $\Pp'$ be the family of shortest paths between the pairs in $\mathcal{M}'$. 
Then $\Pp := \Pp_1 \cup \Pp' \cup \Pp_2$ is a family of walks starting in $X$ and ending in $Y$ plus possibly some closed walks. 
If $\Pp$ contains no closed walks and $\mathcal{M}$ is a matching between $X$ and $Y$ matching starting and ending terminals of $\Pp$, then $\Pp$ is a candidate solution 
to $(S_1 \cup S_2,\mathcal{M})$. The dynamic programming algorithm is able to choose the minimum-weight solution to $(S_1 \cup S_2, \mathcal{M})$ obtained
for different choices of $(S_1,\mathcal{M}_1)$, $(S_2,\mathcal{M}_2)$, and $\mathcal{M}'$ (which induces the choice of $X'$, $Y'$, $X$, and $Y$). 

Since 
we assume that $|\mathcal{M}_1|,|\mathcal{M}_2|,|\mathcal{M}|\leq C\sqrt{k}$, there are only $k^{\Oh(\sqrt{k})}$ ways to perform a merge as in the previous paragraph. 
While this dynamic programming formally does not conform to the ``linear view'' described in the paragraphs above, as it may merge partial solutions for two simpler states into a larger partial solution,
it is straightforward to translate the concept of restricting the state space to preserve the existence of a computation path (here, rather a computation tree) leading 
to a minimum-cost solution.

Observe that since in a state $(S,\mathcal{M})$ we stipulate the size of $\mathcal{M}$ to be $\Oh(\sqrt{k})$, the total number of states with a fixed subset $S\subseteq T$ is $k^{\Oh(\sqrt{k})}$.
Thus, from the discussion above we may infer the following lemma, stated here informally.

\begin{lemma}[Lemma~\ref{lem:dp}, informal statement]\label{inflem:dp}
Let $(G,T)$ be an instance of \tspname{}.
Suppose we are also given a family $\Bb$ of subsets of $T$ with the following guarantee: 
there is a computation path of the above dynamic programming leading to an optimum solution that uses only states of the form $(S,\mathcal{M})$ where $S\in \Bb$.
Then we can find an optimum solution for the instance $(G,T)$ in time $k^{\Oh(\sqrt{k})}\cdot (|\Bb|\cdot |G|)^{\Oh(1)}$.
\end{lemma}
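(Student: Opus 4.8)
The plan is to make the informal description of the restricted dynamic programming fully precise, and then bound the running time by counting states and transitions. First I would fix the state space: a \emph{state} is a pair $(S,\mathcal{M})$ where $S\in\Bb$ and $\mathcal{M}$ is a set of at most $C\sqrt{k}$ pairwise disjoint pairs of terminals from $S$. For a fixed $S$ there are at most $\sum_{i\le C\sqrt{k}} \binom{|S|}{2i}\cdot(2i-1)!! = k^{\Oh(\sqrt{k})}$ choices of $\mathcal{M}$, so the total number of states is $|\Bb|\cdot k^{\Oh(\sqrt{k})}$. Next I would define, for each state $(S,\mathcal{M})$, the value $\mathrm{opt}(S,\mathcal{M})$ as the minimum total weight of a family of walks $\mathcal{P}$ such that each pair $(s_1,s_2)\in\mathcal{M}$ is joined by a walk of $\mathcal{P}$ and every terminal of $S$ lies on some walk of $\mathcal{P}$ (using the precomputed all-pairs shortest-path metric on $T$, obtained in polynomial time, so that from now on we work entirely in the metric closure on the terminals). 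The base cases are the singleton-type states $(\{s\},\emptyset)$ with value $0$, and more generally $(S,\mathcal{M})$ with $S$ covered by the endpoints of $\mathcal{M}$, where the value is $\sum_{(s_1,s_2)\in\mathcal{M}} \mathrm{dist}_G(s_1,s_2)$.

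Then I would spell out the two transition rules and argue they compute $\mathrm{opt}$ correctly by a standard exchange argument. The \emph{extension} rule: from $(S,\mathcal{M})$ with a walk ending at $s_2$ we move to $(S\cup\{s'\},\mathcal{M}')$, $s'\notin S$, where $\mathcal{M}'$ replaces the pair containing $s_2$ by the pair with $s_2$ swapped for $s'$ (or analogously for a fresh singleton interval), paying $\mathrm{dist}_G(s_2,s')$; there are at most $k\cdot\Oh(\sqrt{k})$ such transitions from a given state. The \emph{merge} rule: from $(S_1,\mathcal{M}_1)$ and $(S_2,\mathcal{M}_2)$ with $S_1\cap S_2=\emptyset$ we move to $(S_1\cup S_2,\mathcal{M})$ for any connectivity pattern $\mathcal{M}$ on $S_1\cup S_2$ obtainable by chaining the walks of the two families via shortest paths between chosen endpoints; since all three patterns have size $\Oh(\sqrt{k})$ the number of ways to pair up and reconnect endpoints is $k^{\Oh(\sqrt{k})}$, and the extra cost is the sum of the inserted shortest-path lengths. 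Finally the optimum tour is read off as $\min_{s\in T}\big(\mathrm{opt}(T,\{(s,s)\})\big)$ — or, avoiding the degenerate pair, $\min_{s_1,s_2\in T}\big(\mathrm{opt}(T,\{(s_1,s_2)\})+\mathrm{dist}_G(s_2,s_1)\big)$ — restricting in every step to states whose first coordinate lies in $\Bb$, which is legitimate precisely by the hypothesis that some computation tree reaching an optimum uses only such states.

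For the running time: the ``computation tree'' view means we can evaluate $\mathrm{opt}$ on the relevant states by a fixpoint/iterative relaxation (each state needs at most $|\mathcal{S}|$ rounds to stabilize, or one can observe the natural well-founded order given by $|S|$, with merges strictly increasing $|S|$ and extensions strictly increasing $|S|$ too). Each state has $k^{\Oh(\sqrt{k})}$ outgoing extension transitions and each ordered pair of states yields $k^{\Oh(\sqrt{k})}$ merge transitions; multiplying by the number of states $|\Bb|\cdot k^{\Oh(\sqrt{k})}$, the total work is $(|\Bb|\cdot k^{\Oh(\sqrt{k})})^{2}\cdot k^{\Oh(\sqrt{k})}\cdot |G|^{\Oh(1)} = k^{\Oh(\sqrt{k})}\cdot(|\Bb|\cdot|G|)^{\Oh(1)}$, absorbing the square of $|\Bb|$ into the polynomial factor and the square of $k^{\Oh(\sqrt{k})}$ into $k^{\Oh(\sqrt{k})}$. (The $|G|^{\Oh(1)}$ covers both the initial all-pairs shortest path computation and the bookkeeping per transition.)

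The main obstacle I expect is not the counting but the \emph{correctness} of the merge rule: one has to verify that the family of walks realizing a state really can be assembled, in the required cost, from families realizing two sub-states whose terminal sets partition $S$, and conversely that an optimum solution admits such a decomposition along the computation tree guaranteed by $\Bb$. This is where the definition of $\mathrm{opt}(S,\mathcal{M})$ as "some walks covering $S$, with prescribed endpoints but unprescribed which terminal sits on which walk" is delicate — the flexibility in where terminals lie is exactly what makes merging possible, and one must check that the relaxation does not overcount (a walk in the merged family could revisit terminals, but since we only need a closed walk visiting all terminals, revisits are harmless and never decrease cost in a shortest-path metric). Making this precise is routine but fiddly; in the formal Lemma~\ref{lem:dp} it is handled by a careful exchange argument, and I would follow the same route.
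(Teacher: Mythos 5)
Your state space, transition rules (extend by one terminal, merge two disjoint-terminal-set states), counting of states and transitions, and the final read-off are essentially the paper's own construction for Lemma~\ref{lem:dp}, and the running-time analysis matches.

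The one genuine misstep is the stated correctness goal: you propose to ``argue they compute $\mathrm{opt}$ correctly by a standard exchange argument.'' Under the restriction to states with first coordinate in $\Bb$ this is in general false --- $\Bb$ need not contain the subsets required to decompose an optimal realization of an \emph{arbitrary} state $(S,\mathcal{M})$, so the restricted DP cannot be expected to compute the true $\mathrm{opt}(S,\mathcal{M})$ everywhere --- and, more importantly, it is not needed. The paper proves only a one-sided guarantee: (i) every value the DP stores is the weight of an actual family of walks (so whatever the algorithm outputs is a genuine closed walk visiting $T$, hence its weight is at least the optimum), and (ii) by induction along the computation tree guaranteed by the hypothesis on $\Bb$ (in the formal version, along the good decomposition $(\tree,\beta)$), the stored value at each state of that tree is at most the total weight of the corresponding collection of subwalks of the fixed optimum walk $W$; combining the two gives optimality of the output. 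Your last paragraph already gestures at this (``restricting \ldots{} is legitimate precisely by the hypothesis''), and the worry you raise about the merge rule dissolves once you replace exactness by this soundness-plus-upper-bound induction: for soundness you only need that concatenating walks of the two sub-realizations via shortest paths yields a realization of the merged state (using each walk exactly once, which the paper handles by an inner $2^{\Oh(\sqrt{k})}$-state DP tracking the set of used walks, consistent with your $k^{\Oh(\sqrt{k})}$ bound), and for the upper bound you only need that the particular split of $W$ prescribed by the computation tree is among the candidates the merge enumerates. With that reformulation your proposal coincides with the paper's proof.
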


Concluding, we are left with constructing a family $\Bb$ of subsets of $T$ that satisfies the prerequisites of Lemma~\ref{inflem:dp} and has size $k^{\Oh(\sqrt{k})}$, provided the underlying
graph $G$ is planar. For this, we will crucially use topological properties of $G$ given by its planar embedding.

\subsection{Enumerating candidate states}\label{sec:enumeration}

Suppose $(G,T)$ is the input instance of \tspname{} where $G$ is planar.
Without loss of generality we may assume that $G$ is strongly connected.
Fix some optimum solution $W$, which is a closed walk in the input graph $G$ that visits every terminal. 

\paragraph*{Simplifying assumptions.} We now introduce a number of simplifying assumptions.
These assumptions are made with loss of generality, and we introduce them in order to present our main ideas in a setting that is less obfuscated by technical details.
\begin{enumerate}[({A}1)]
\item Walk $W$ is in fact a simple directed cycle, without any self-intersections. In particular, the embedding of $W$ in the plane is a closed curve without self-intersections; denote this curve by~$\delta$.
\item The walk $W$ visits every terminal exactly once, so that we may speak about the (cyclic) order of visiting terminals on $W$.
\end{enumerate}
Note that Assumption A2 follows from A1, but we prefer to state them separately as later we first obtain Assumption A2
and then discuss Assumption A1.

We will also assume that shortest paths are unique in $G$, but this can be easily achieved by perturbing the weights of edges of $G$ slightly.

Suppose now that we have another closed curve $\gamma$ in the plane, without self-intersections, that crosses $\delta$ in $p=\Oh(\sqrt{k})$ points, none of which is a terminal. 
Curve $\gamma$ divides the plane into two open regions (maximal connected parts of the plane after removal of $\gamma$)---say $R_1,R_2$---and thus $\delta$ is divided into 
$p$ intervals which are alternately contained in $R_1$ and $R_2$. Let $S$ be the set of terminals visited on the intervals contained in $R_1$. 
Then it is easy to see that $S$ is a good candidate for a subset of terminals
that we are looking: $S$ forms at most $\Oh(\sqrt{k})$ contiguous intervals in the order of visiting terminals by $W$, and hence for the connectivity pattern $\mathcal{M}$
consisting of the first and last terminals on these intervals, 
the state $(S,\mathcal{M})$ would describe a subproblem useful for discovering $W$ as the part of $W$ inside $\gamma$ is a solution to this state.

However, we are not really interested in capturing one potentially useful state, but in enumerating a family of candidate states that contains a complete computation path leading to the discovery of an optimum solution.
Hence, we rather need to capture a hierarchical decomposition of $T$ using curves $\gamma$ as above, so that terminal subsets $S$ induce the sought computation path.
For this, we will use the notion of {\em{sphere-cut decompositions}} of planar graphs, and the well-known fact that every $k$-vertex planar graph admits a sphere-cut decomposition of width $\Oh(\sqrt{k})$.

\paragraph*{Sphere-cut decompositions.}
A {\em{branch decomposition}} of a graph $G$ is a ternary tree $\tree$ (i.e. one with every internal node of degree $3$), together with a bijection $\brleaf$ between the leaves of $\tree$ and the edges of $G$.
For every edge $e$ of $\tree$, the removal of $e$ from $\tree$ splits $\tree$ into two subtrees, say $\tree^1$ and $\tree^2$. 
The {\em{cut}} (or {\em{middle set}}) of $e$, 
denoted $\brcut(e)$, is the set of those vertices of $G$ that are incident to both an edge corresponding (via $\brleaf$) to a leaf contained in $\tree_1$, and to an edge corresponding
to a leaf contained in $\tree_2$.
The {\em{width}} of a branch decomposition $(\tree,\brleaf)$ is the maximum size of a cut in it. The {\em{branchwidth}} of a graph $G$ is the minimum possible width of a branch decomposition
of $G$. It is well-known that a planar graph on $k$ vertices has branchwidth $\Oh(\sqrt{k})$ (see e.g.~\cite{FominT06}).

After rooting a branch decomposition $(\tree,\brleaf)$ in any node, it can be viewed as a hierarchical decomposition of the edge set of $G$ using vertex cuts of size bounded by the width of the decomposition.
Seymour and Thomas~\cite{SeymourT94} proved that in plane graphs we can always find an optimum-width branch decomposition that somehow respects the topology of the plane embedding of a graph.
Precisely, having fixed a plane embedding of a connected graph $G$, call a closed curve $\gamma$ in the plane a {\em{noose}} if $\gamma$ has no self-intersections and it crosses the embedding of $G$ only at 
vertices\footnote{In standard literature, e.g.~\cite{SeymourT94}, a noose is moreover required to visit every face of $G$ at most once; 
in this paper we do not impose this restriction.}; in particular it does not intersect any edge of $G$. Such a curve $\gamma$ divides the plane into two regions, which naturally induces a partition of the edge set of $G$ into edges that are embedded in the first,
respectively the second region. A {\em{sphere-cut decomposition}} of $G$ is a branch decomposition $(\tree,\brleaf)$ where in addition every edge $e$ of $\tree$ is assigned its noose $\gamma(e)$ such that
$\gamma(e)$ traverses the vertices of $\brcut(e)$ and 
the partition of the edge set induced by $\gamma(e)$ corresponds (via $\brleaf$) to the partition of the leaf set of $\tree$ induced by removing $e$ from $\tree$.
Then the result of Seymour and Thomas~\cite{SeymourT94} may be stated as follows:
every connected planar graph has a sphere-cut decomposition of width equal to its 
branchwidth\footnote{In~\cite{SeymourT94} it is also assumed that the graph is bridgeless, which corresponds to the requirement that every face is visited by a noose at most once. 
It is easy to see that in the absence of this requirement it suffices to assume the connectivity of the graph.}. Together with the square-root behavior of the branchwidth of a planar graph, this implies the following.

\begin{theorem}[see e.g.~\cite{FominT06}]
Every connected plane graph that has at most $k$ vertices of degree at least $3$
has a sphere-cut decomposition of width at most $\alpha\sqrt{k}$, for some constant $\alpha$.
\end{theorem}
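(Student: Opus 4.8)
The plan is to derive the theorem directly from two facts recalled above, with no new combinatorial idea: the $\Oh(\sqrt{k})$ bound on the branchwidth of a planar graph on $k$ vertices (Fomin and Thilikos~\cite{FominT06}), and the theorem of Robertson et al.~\cite{SeymourT94} that a connected plane graph has a sphere-cut decomposition whose width equals its branchwidth. The entire content is in composing these two statements and checking that their hypotheses line up.

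First I would fix the constant. By~\cite{FominT06} there is a universal constant $\alpha$ (one may take $\alpha=\sqrt{4.5}$) such that every planar graph on $k$ vertices has branchwidth at most $\alpha\sqrt{k}$; this rests on the fact that such a graph has treewidth $\Oh(\sqrt{k})$ together with the standard inequality between branchwidth and treewidth, with a more careful direct argument giving the stated constant. We may assume $G$ has at least two edges, as the statement is vacuous otherwise. Hence the branchwidth $b$ of the connected plane graph $G$ satisfies $b\le \alpha\sqrt{k}$. Next I would invoke~\cite{SeymourT94} in the form stated above: since $G$ is connected and plane, it has a sphere-cut decomposition $(\tree,\eta)$ — with a noose $\gamma(e)$ assigned to every edge $e$ of $\tree$ inducing the partition $\{F_e^1,F_e^2\}$ — whose width equals $b$, hence is at most $\alpha\sqrt{k}$. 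Taking $\alpha$ as the constant in the theorem statement then finishes the argument.

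The one subtle point, and the main (though modest) obstacle, is that the classical formulation of~\cite{SeymourT94} additionally assumes that $G$ is bridgeless and that each noose visits every face at most once, whereas the weaker notion of noose adopted here drops the latter restriction. As indicated in the footnote above, once a noose may revisit a face, ``bridgeless'' can be relaxed to ``connected'': one takes sphere-cut decompositions of the maximal bridgeless subgraphs obtained by deleting the bridges of $G$, and then reattaches each bridge $e$ by routing a short noose that enters and leaves a single face incident to $e$ and bounds a region containing only $e$; splicing the decomposition trees along the leaves created for the bridges yields a sphere-cut decomposition of $G$ in which neither the branchwidth nor the width of any noose increases. Verifying that this gluing produces a legitimate sphere-cut decomposition of the prescribed width is the only real work beyond citing the two results, and it is routine; everything else is an immediate consequence of~\cite{FominT06} and~\cite{SeymourT94}.
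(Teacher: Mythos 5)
Your proposal is correct and follows essentially the same route as the paper: it combines the $\Oh(\sqrt{k})$ planar branchwidth bound of~\cite{FominT06} with the Seymour--Thomas result~\cite{SeymourT94} that a connected plane graph admits a sphere-cut decomposition of width equal to its branchwidth, handling the bridgeless-versus-connected issue exactly as the paper's footnote and remark do (decompose into bridgeless components and reattach the bridges, which is possible because nooses here may revisit faces).
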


Turning back to our \tspname{} instance $(G,T)$ and its optimum solution $W$, our goal is to enumerate a possibly small family of subsets of $T$ that contains some complete computation path leading to the discovery
of $W$. The remainder of the construction is depicted in Figure~\ref{fig:overview} (on page \pageref{fig:overview}) and we encourage the reader to analyze it while reading the description. The description is divided into ``concepts'', 
which are not steps of the algorithm, but of the analysis leading to its formulation.

\paragraph*{Concept 1: adding a tree.}
Take any (inclusionwise) minimal tree $H_0$ in the underlying undirected graph of $G$ spanning all terminals of $T$.
We may assume that $H_0$ contains at most $k$ leaves that are all terminals, at most $k-2$ vertices of degree at least $3$, and otherwise it consists of at most $2k-3$ simple paths 
connecting these leaves and vertices of degree at least $3$ (further called {\em{special}} vertices of $H_0$).
To avoid technical issues and simplify the picture, we introduce another assumption.
\begin{enumerate}[({A}3)]
\item Walk $W$ and tree $H_0$ do not share any edges.
\end{enumerate}

Let $H$ be the graph formed by the union of $W$ and $H_0$. Even though both $W$ and $H_0$ consist of at most $2k$ simple paths in $G$, the graph $H$ may have many vertices of degree more than $3$. 
One of the possible scenarios for that is when a subpath $Q$ between two consecutive terminals on $W$ and a path $P$ in $H_0$ that connects two special vertices of $H_0$  cross many times.
The intuition is, however, that the planar structure of $H$ roughly resembles a structure of a planar graph on $\Oh(k)$ vertices, and a sphere-cut decomposition of this planar graph of width $\Oh(\sqrt{k})$ 
should give rise to the sought hierarchical partition of terminals leading to the discovery of $W$ by the dynamic programming algorithm.

Another way of looking at the tree $H_0$ is that we can control the homotopy types of closed curves in the plane punctured at the terminals, by examining how they cross with $H_0$.

Let us remark that, of course, the definition of the graph $H$ relies on the (unknown to the algorithm) solution $W$, though the tree $H_0$ can be fixed and used by the algorithm.
At the end we will argue that having fixed $H_0$, we may enumerate a family of $k^{\Oh(\sqrt{k})}$ candidates for nooses in a sphere-cut decomposition of $H$. 
Roughly, for each such noose $\gamma$ we consider the bi-partition of terminals according to the regions of the plane minus $\gamma$ in which they lie, 
and we put all terminal subsets constructed in this manner into a family $\Bb$, which
is of size $k^{\Oh(\sqrt{k})}$. Then restricting the dynamic programming algorithm to $\Bb$ as in Lemma~\ref{inflem:dp} gives us the required time complexity.

\paragraph*{Concept 2: Contracting subpaths of $W$.}
Hence, the goal is to simplify the structure of $H$ so that it admits a sphere-cut decomposition of width $\Oh(\sqrt{k})$.
Consider any pair of terminals $t_1,t_2$ visited consecutively on $W$, and let $P$ be the subpath of $W$ from $t_1$ to $t_2$. Consider contracting all internal vertices on $P$ into a single vertex, thus
turning $P$ into a path $P'$ on $2$ edges and $3$ vertices. Let $H'$ be the graph obtained from $H$ by contracting each path between two consecutive terminals on $W$ in the manner described above. Observe that
thus, $H'$ has less than $3k$ vertices of degree at least $3$: there are at most $2k$ vertices on the contracted $W$ in total, and there can be at most $k-2$ vertices of degree at least $3$
on $H_0$ that do not lie on $W$. 
Then $H'$ has a sphere-cut decomposition of width $\leq \alpha\sqrt{3k}$, say $(\tree,\eta,\gamma(\cdot))$.

Consider the family $\mathcal{D}$ of subsets of terminals constructed as follows. For each noose $\gamma(e)$ for $e\in \tree$, that is, appearing in the sphere-cut decomposition $(\tree,\eta,\gamma(\cdot))$,
and each partition $(X,Y)$ of terminals traversed by $\gamma(e)$ (there are at most $\alpha\sqrt{3k}$ such terminals, so $2^{\Oh(\sqrt{k})}$ such partitions),
add to $\mathcal{D}$ the following two terminal subsets: the set of terminals enclosed by $\gamma(e)$ plus $X$, and the set of terminals excluded by $\gamma(e)$ plus $Y$. 
It can be now easily seen that $\mathcal{D}$ contains a complete
computation path that we are looking for, as each terminal subset included in $\mathcal{D}$ forms at most $\Oh(\sqrt{k})$ contiguous intervals in the cyclic order of terminals on $W$, 
and the decomposition tree $\tree$
shows how our dynamic programming should assemble subsets appearing in $\mathcal{D}$ in pairs up to the whole terminal set. In other words, if we manage to construct a family $\mathcal{B}$
of size $k^{\Oh(\sqrt{k})}$ with a guarantee that it contains the whole $\mathcal{D}$, then we will be done by Lemma~\ref{inflem:dp}.

\paragraph*{Concept 3: Enumeration by partial guessing.}
Obviously, the graph $H'$ is not known to the algorithm, as its definition depends on the fixed optimum solution $W$.
Nevertheless, we may enumerate a reasonably small family of candidates for nooses used in its sphere-cut decomposition $(\tree,\eta,\gamma(\cdot))$.
The main idea is that even though the full structure of $H'$ cannot be guessed at one shot within $k^{\Oh(\sqrt{k})}$ possibilities, 
each noose we are interested in traverses only at most $\alpha\sqrt{3k}$ vertices of $H'$, and hence it is sufficient to guess only this small portion of $H'$.

More precisely, let $Q$ be the subset of those vertices of $H'$ that are obtained from contracting the subpaths of $W$ between consecutive terminals.
Fix a noose $\gamma$ appearing in the sphere-cut decomposition of $H'$, that is, $\gamma=\gamma(e)$ for some $e\in \tree$.
Then $\gamma$ traverses at most $\alpha\sqrt{3k}$ vertices of $Q$; say that $R\subseteq Q$ is the set of these vertices.
We can now enumerate a set of $k^{\Oh(\sqrt{k})}$ candidates for $\gamma$ by performing the following steps (by {\em{guessing}} we mean iterating through all options):
\begin{enumerate}[(1)]
\item Guess a set $\mathcal{R}$ of at most $\alpha\sqrt{3k}$ pairs of distinct terminals.
\item For each $(s,t)\in \mathcal{R}$, take the shortest path $P_{(s,t)}$ from $s$ to $t$ and consider contracting it to a single vertex $p_{(s,t)}$.
\item Take the fixed tree $H_0$ that spans terminals in $G$, apply the above contractions in $G$, and let $H_{\mathcal{R}}$ be the graph to which $H_0$ is transformed under these contractions.
\item Enumerate all nooses $\gamma$ that meet $H_{\mathcal{R}}$ only at terminals and vertices of degree at least $3$, and traverse at most $\alpha\sqrt{3k}$ such vertices.
\end{enumerate}
In Step 1 we have at most $k^{\Oh(\sqrt{k})}$ options for such a set $\mathcal{R}$, and the contractions in Steps~2 and~3 turn $H_0$ into a planar graph $H_{\mathcal{R}}$ with $\Oh(k)$ vertices.
It is not hard to convince oneself that in such a graph, there are only $k^{\Oh(\sqrt{k})}$ nooses satisfying the property expressed in the Step 4, so all in all we enumerate at most
$k^{\Oh(\sqrt{k})}$ curves in the plane, each traversing at most $\alpha\sqrt{3k}$ terminals.
Now, for each enumerated curve $\gamma$, we include into $\Bb$ two terminal subsets: the set of terminals enclosed by $\gamma$ and the set of terminals excluded by $\gamma$.
Thus $|\Bb|=k^{\Oh(\sqrt{k})}$.

It remains to argue that $\Bb$ contains the whole family $\mathcal{D}$ that was given by the sphere-cut decomposition $(\tree,\eta,\gamma(\cdot))$ of $H'$, so that Lemma~\ref{inflem:dp} may be applied.
It should be quite clear that it is sufficient to show that every noose $\gamma$ appearing in $(\tree,\eta,\gamma(\cdot))$ is enumerated in Step 4 of the procedure from the previous paragraph.
However, nooses with respect to $H_{\mathcal{R}}$ are formally not necessarily nooses with respect to $H'$, as we wanted. Nevertheless, if a noose $\gamma$ appears in the sphere-cut decomposition
$(\tree,\eta,\gamma(\cdot))$ of $H'$, and we take $\mathcal{R}$ to be the set of pairs of consecutive terminals on $W$ such that $\gamma$ passes through the contracted vertices $p_{(s,t)}$ exactly for
$(s,t)\in \mathcal{R}$, then after dropping parts of $H'$ not appearing in $H_{\mathcal{R}}$, $\gamma$ becomes a noose enumerated for $H_{\mathcal{R}}$.
Therefore, the terminal partitions raised by $\gamma$ are still included in $\Bb$ as we wanted, and we are done.

\subsection{Traps, issues, and caveats}

The plan sketched in the previous section essentially leads to an algorithm with the promised time complexity, modulo Assumptions A1, A2, A3, and a number of technical details of minor relevance.
Assumptions A2 and A3 are actually quite simple to achieve without loss of generality. It is Assumption A1 that was a major conceptual obstacle.

For Assumption A2, we may at the very beginning perform the following reduction. For every original terminal $t$, introduce a new terminal $t'$ and edges $(t,t')$ and $(t',t)$ of weight $0$ to the graph;
$t'$ and these edges are embedded in any face incident to $t$.
The new terminal set consists of terminals $t'$ for all original terminals $t$.
In this way, any closed walk visiting any new terminal $t'$ has to make a detour of weight $0$ using arcs $(t,t')$ and $(t',t)$, and we may assume that an optimal solution makes only one such detour for
each new terminal $t'$. In this way we can achieve Assumption A2; the actual proof makes a slightly more complicated construction to add a few extra properties.

For Assumption A3, observe that in the reasoning we relied only on the fact that $H_0$ is a tree spanning all terminals that has at most $k$ leaves and at most $k-2$ vertices of degree at least $3$.
In particular, we did not use any metric properties of $H_0$. In fact, the reader may think of $H_0$ as a combinatorial object used to control the homotopy group of the plane with terminals pierced out: for any
non-self-intersecting curve $\gamma$ on the plane, we may infer how terminals are partitioned into those enclosed by $\gamma$, excluded by $\gamma$, and lying on $\gamma$  just by examining the consecutive intersections
of $\gamma$ with $H_0$. Therefore, instead of choosing $H_0$ arbitrarily, we may add it to the graph artificially at the very beginning, say using edges of weight $+\infty$. In this way we make sure that the
optimum solution $W$ does not use any edge of $H_0$.

\begin{figure}[tb]
\begin{center}
\def\svgwidth{\textwidth}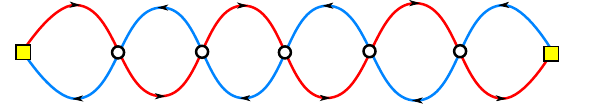
\end{center}
\caption{A planar \tspname{} instance with two terminals. The only solution consists of the union of the red path from $s$ to $t$ and the blue path from $t$ to $s$.
These two paths cross each other many times, which gives many self-intersections of the solution.}\label{fig:oscillator}
\end{figure}

Finally, let us examine Assumption A1: the optimum solution $W$ is a simple directed cycle without self-intersections. Unfortunately, this assumption may not hold in general. 
Consider the example depicted in Figure~\ref{fig:oscillator},
where we have a directed planar graph with two terminals $s,t$, and the only closed walk visiting both $s$ and $t$ consists of two paths, one from $s$ to $t$ and the second from $t$ to $s$, that
intersect each other an unbounded number of times. Therefore, in general the optimum solution $W$ may have an unbounded number of self-intersections. Nevertheless, we may still develop some kind of a combinatorial
understanding of the topology of $W$.

It will be convenient to assume that no edge of the graph is traversed by $W$ more than once; this can be easily achieved by copying each edge $|T|$ times, and using a different copy for each traversal.
Consider two visits of the same vertex $u$ by $W$; let $e_1,e_2$ be the edges incident to $u$ used by $W$ just before and just after the first visit, and define $f_1,f_2$ in the same way for the second visit.
Examine how $e_1,e_2,f_1,f_2$ are arranged in the cyclic order of edges around vertex $u$. If they appear in the interlacing order, i.e., $(e_1,f_1,e_2,f_2)$ or $(e_1,f_2,e_2,f_1)$, then we say that these two
visits form a {\em{self-crossing}} of $W$. Intuitively, if the order is not interlacing, then we may slightly pull the two parts of the embedding of $W$ near $u$ corresponding to the visits so
that they do not intersect. So topologically we do not consider such a self-intersection as a self-crossing. For two walks $W_1,W_2$ in $G$ that do not share any common edges we define their {\em{crossing}} 
in a similar manner, as a common visit of a vertex $u$ such that the cyclic order of edges used by $W_1$ and $W_2$ immediately before and immediately after these visits is interlacing.

We will use the following structural statement about self-crossings of $W$:
We may always choose an optimal solution $W$ so that the following holds. %
\begin{quote}
Consider any self-crossing of $W$ at some vertex $u$ (recall it consists of two visits of $u$) and say it divides $W$ into two closed subwalks $W_1$ and $W_2$: $W_1$ is from the first visit of $u$ to the second,
and $W_2$ is from the second visit of $u$ to the first. Then the subwalks $W_1$ and $W_2$ do not cross at all.
\end{quote}
This statement can be proved by iteratively ``uncrossing'' an optimum solution $W$ as long as the structure of its self-crossings is too complicated.
However, one needs to be careful in order not to split $W$ into two closed curves when uncrossing.

It is not hard to observe that the statement given in the previous paragraph actually shows that the topology of $W$ roughly resembles a cactus where each 2-connected component is a cycle (here, we assume that
self-intersections that are not self-crossings are pulled slightly apart so that $W$ does not touch itself there). See the left panel of Figure~\ref{fig:partition} in Section~\ref{sec:cleaning} for reference.
Then we show (see Lemma~\ref{lem:clean-decomposed}) that $W$ can be decomposed into $\Oh(k)$ subpaths $\Pp=\{B_1,\ldots,B_\ell\}$ such that:
\begin{itemize}
\item each path $B_i$ has no terminal as an internal vertex and is the shortest path between its endpoints; and
\item each path $B_i$ may cross with at most one other path $B_j$.
\end{itemize}
To see this, note that the cactus structure of $W$ may be described as a tree $\tree$ with at most $k$ leaves and at most $k-2$ vertices of degree at least $3$.
We have a pair of possibly crossing subpaths in the decomposition $\Pp$ per each maximal path with internal vertices of degree $2$ in $\tree$. 

The idea now is as follows. 
In the previous section we essentially worked with the partition of $W$ into subpaths between consecutive terminals, as Assumption A1 allowed us to do so.
In the absence of this assumption, we work with the finer partition $\Pp$ as above. The fact that the paths of $\Pp$ interact with each other only in pairs, and in a controlled manner, makes the whole 
reasoning go through with the conceptual content essentially unchanged, but with a lot more technical details.

In the previous description, by Assumption A1, the paths between consecutive terminals do not intersect, and hence they do not interfere with each other while contracting them to three-vertex-paths.
While now the paths $B_i$s may cross, they cross in a very limited setting as described above, causing little turbulence to the argument. 

Another nontrivial difference is that in the previous section we were contracting shortest paths between pairs of consecutive terminals, so we had a small set of candidates for the endpoints of these paths: the terminals
themselves. In the general setting, the decomposition statement above a priori does not give us any small set of candidates for endpoints of paths $B_i$. If we chose those endpoints
as arbitrary vertices of the graph, we would end up with time complexity $n^{\Oh(\sqrt{k})}$ instead of promised $k^{\Oh(\sqrt{k})}\cdot \textrm{poly}(n)$. Fortunately, the way we define 
the decomposition $\Pp=\{B_1,\ldots,B_\ell\}$ allows us to construct alongside also a set of at most $k^4$ {\em{important}} vertices such that each path $B_i$ is the shortest path from one important vertex to
another important vertex.

Finally, there are more technical problems regarding handling possible self-inter-sections of $W$ that are not self-crossings. Recall that in our topological view of $W$, we would like not to regard 
such self-intersections as places where $W$ touches itself. In particular, when examining a sphere-cut decomposition of the union of $W$ and $H_0$ after appropriate contractions, the nooses in this sphere-cut
decomposition should not see such self-intersections as vertices through which they may or should travel. A resolution to this problem is to consider a ``blow-up'' of the original graph where each vertex
is replaced by a large grid and each edge is replaced by a large  matching of parallel edges leading from one grid to another. 
Walks in the original graph naturally map to walks in the blow-up.
Every original self-crossing maps to a self-crossing, and every original self-intersection that is not a self-crossing actually is ``pulled apart'': there is no self-intersection at 
this place anymore. This blow-up has to be performed quite early in the proof.
Unfortunately, while this step is intuitively easy, it does not work very well together with the other simplification steps described above. In particular, it ruins the property of unique shortest paths. 
 Luckily, we are able to extract the essential properties of the blow-up under an abstract definition
of a \emph{canonical instance} and work mostly only with this abstraction.
We will first present a delicate (but self-contained) way of reducing the instances to this form and then we need to solve the problem in simpler, cleaner form.

\begin{figure}[p!]

\centering

\begin{subfigure}{.45\textwidth}
  \begin{center}
{\footnotesize 
\def\svgwidth{\textwidth}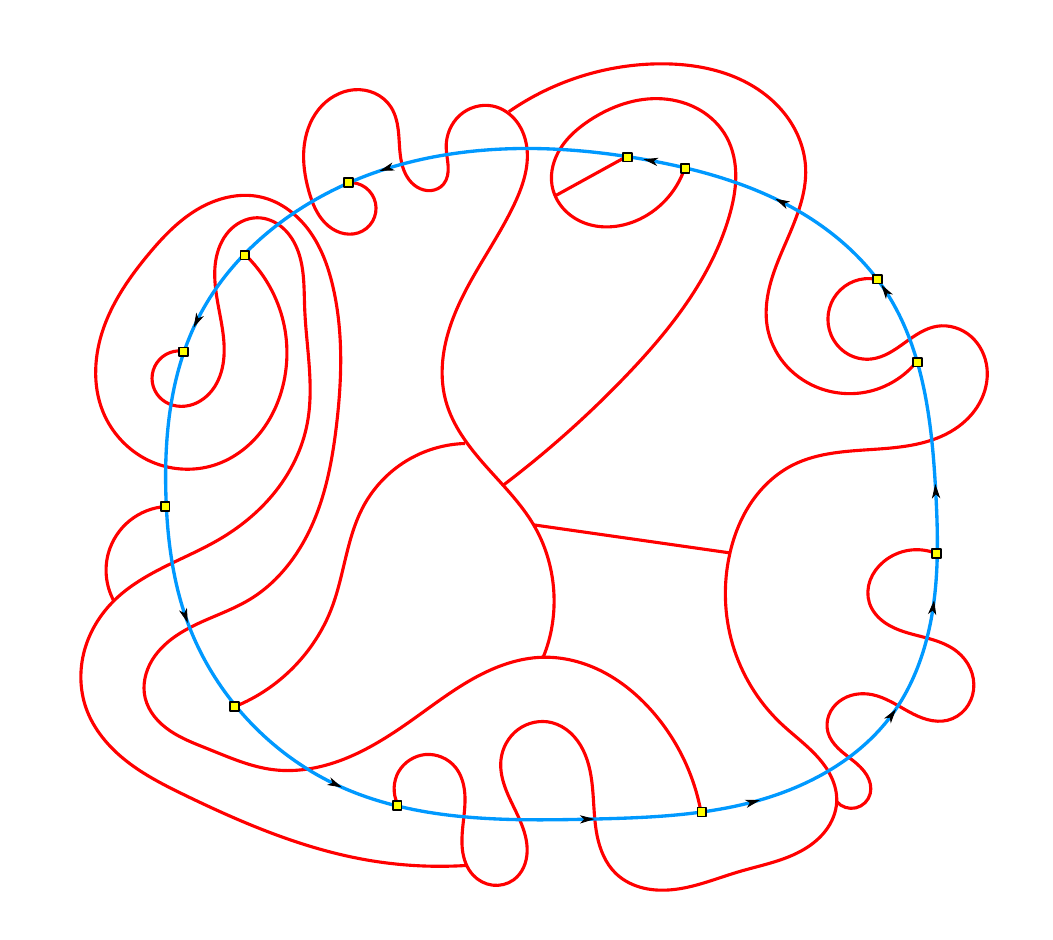
\subcaption{Graph $H$.}\label{p:H}
}
\end{center}
\quad
\end{subfigure}
\begin{subfigure}{.45\textwidth}
  \begin{center}
{\footnotesize 
\def\svgwidth{\textwidth}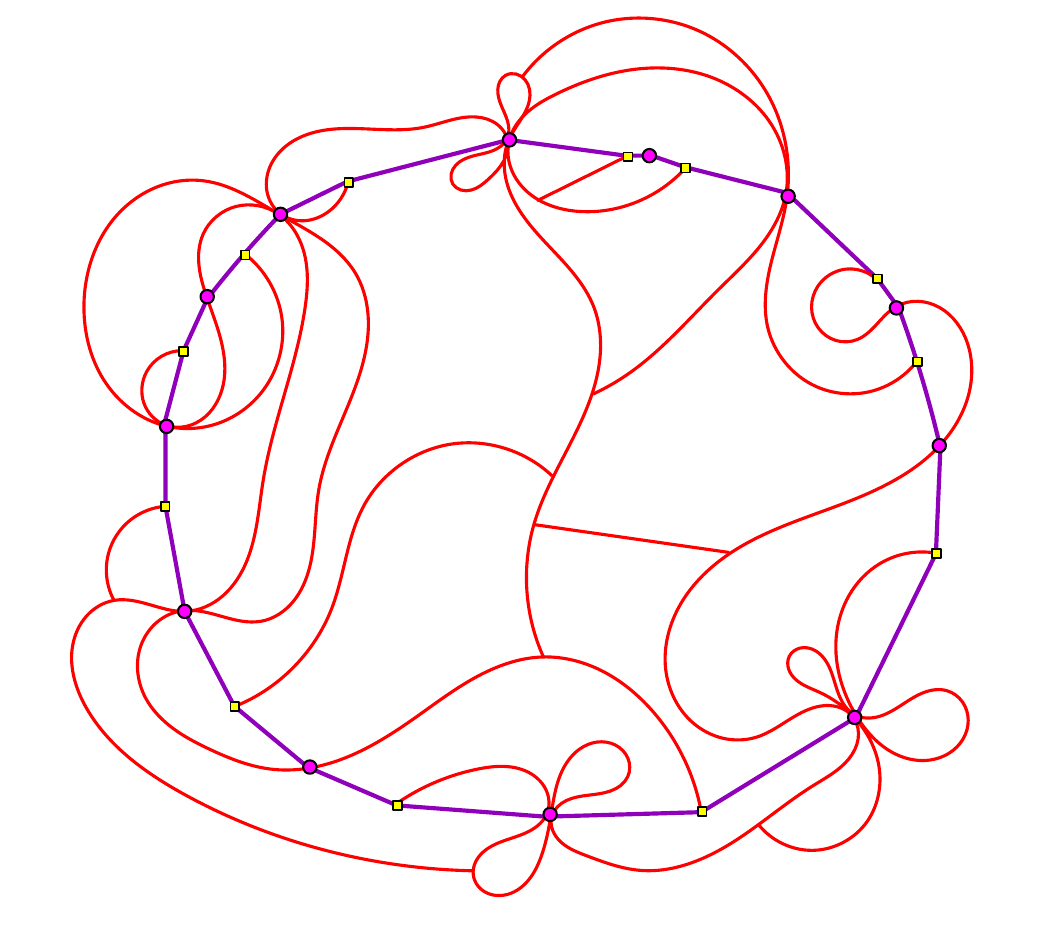
\subcaption{Graph $H'$.}\label{p:Hpp}
}
\end{center}
\end{subfigure}
\begin{subfigure}{.45\textwidth}
  \begin{center}
{\footnotesize 
\def\svgwidth{\textwidth}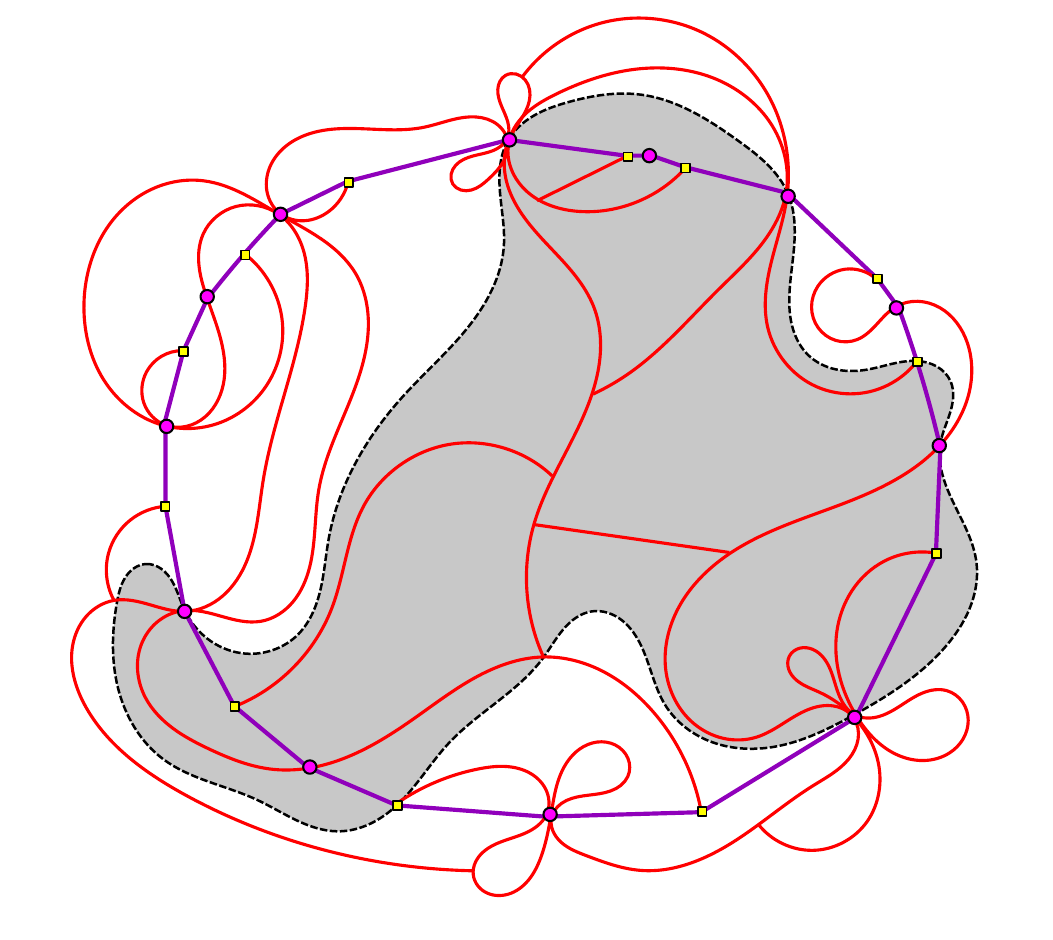
\subcaption{Graph $H'$ with noose $\gamma$.}\label{p:noose}
}
\end{center}
\end{subfigure}
\begin{subfigure}{.45\textwidth}
  \begin{center}
{\footnotesize 
\def\svgwidth{\textwidth}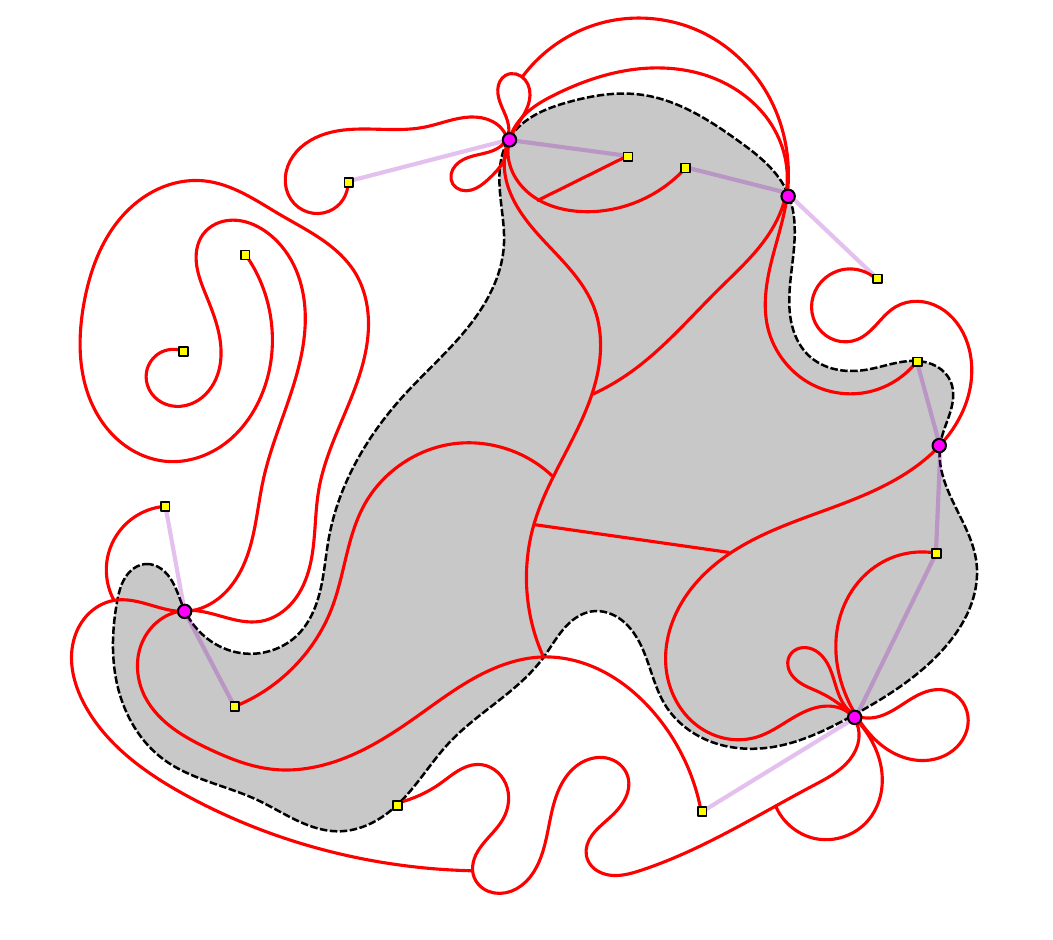
\subcaption{Graph $H_{\mathcal{R}}$ and $\gamma$ as a noose in it.}\label{p:HR}
}
\end{center}
\end{subfigure}
\caption{Construction of Section~\ref{sec:enumeration}. In panel~\ref{p:H}, we see graph $H$ consisting of the union of solution $W$ (in blue) and tree $H_0$ (in red). Terminals are depicted as yellow squares.
In panel~\ref{p:Hpp}, we see graph $H'$, obtained from $H$ by contracting the interior of every subpath of $W$ between two consecutive terminals to one vertex (in violet). Also, paths of vertices of degree $2$ in $H_0$ are replaced by single
edges, though this is not visible. Panel~\ref{p:noose} depcicts noose $\gamma$ in the graph $H'$. Then $\gamma$ partitions the plane into two regions: $R_1$ (grayed) and $R_2$ (non-grayed), which induces a partition of the terminals
into those contained in $R_1$, those contained in $R_2$, and those traversed by $\gamma$. Note that $\gamma$ traverses two terminals, five vertices obtained from contracting subpaths of $W$ to single vertices,
and two vertices of $H_0$ of degree $3$. Finally, in panel~\ref{p:HR} we see the graph $H_{\mathcal{R}}$ used to enumerate $\gamma$. Here, $\mathcal{R}$ consists of those pairs of terminals that
are consecutive on $W$ and moreover $\gamma$ traverses the vertex of $H'$ obtained from contracting the shortest path between them (which is a subpath of $W$). 
These constracted shortest paths are not a part of $H_{\mathcal{R}}$, so they are depicted with reduced opacity.
}\label{fig:overview}
\end{figure}

\section{Preliminaries}\label{sec:prelims}
Throughout the paper  we denote $[n]=\{1,2,\ldots,n\}$  for any positive integer $n$,

We will consider directed or undirected planar graphs $G$ with a terminal set $T \subseteq V(G)$ ($|T|\geq 2$) and weight function $\wei_G\colon E(G) \to \mathbb{Z}_{\geq 0}$;
we omit the subscript if it is clear from the context.
Furthermore, we assume that $G$ does not contain loops, but may contain multiple edges or arcs with the same endpoints.

For a directed path $P$ in a directed graph $G$ and two vertices $u,v \in V(P)$ such that $u$ appears on $P$
not later than $v$, by $P[u,v]$ we denote the subpath of $P$ from $u$ to $v$.
A \emph{$\ell \times \ell$ acyclic grid} consists of vertices $v_{i,j}$ for $1 \leq i,j \leq \ell$, arcs
$(v_{i,j}, v_{i+1,j})$ for every $1 \leq i < \ell$, $1 \leq j \leq \ell$, and arcs
$(v_{i,j}, v_{i,j+1})$ for every $1 \leq i \leq \ell$, $1 \leq j < \ell$.

A \emph{walk} in a directed graph $G$ is a sequence $(e_1,\ldots,e_p)$ of edges of $G$ such that the head of $e_i$ is the tail of $e_{i+1}$, for all $i=1,\ldots,p-1$.
A \emph{walk} is closed if additionally the head of $e_p$ is equal to the tail of $e_1$. The weight of a walk is the sum of the weights of its edges.

\paragraph{Nooses and branch decompositions.}
Given a plane graph $G$, a \emph{noose} is a closed curve without self-intersections that 
meets the drawing of $G$ only in vertices. 
Contrary to some other sources in the literature, we explicitly allow a noose to visit one face multiple times, however each vertex is visited at most once.

We now briefly recall the formal layer of branch and sphere-cut decompositions for convenience.
A \emph{branch decomposition} of a graph $G$ is a pair $(\tree, \brleaf)$
where $\tree$ is an unrooted ternary tree and $\brleaf$ is a bijection between
the leaves of $\tree$ and the edges of $G$. For every edge $e \in E(\tree)$,
 we define the \emph{cut} (or \emph{middle set}) $\brcut(e) \subseteq V(G)$ as follows:
if $\tree_1$ and $\tree_2$ are the two components of $\tree-e$, then $v \in \brcut(e)$
if $v$ is incident both to an edge corresponding to a leaf in $\tree_1$ and to an edge corresponding
to a leaf in $\tree_2$.
The width of a decomposition is the maximum size of a cut in it, and the branchwidth
of a graph is a minimum width of a branch decomposition of a graph.
It is well known that planar graphs have sublinear branchwidth.
\begin{theorem}[see e.g. \cite{FominT06}]\label{thm:planar-branch}
Every planar graph with $n \geq 2$ vertices of degree at least $3$ has branchwidth bounded by $\sqrt{4.5 n}$.
\end{theorem}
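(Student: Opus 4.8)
The plan is to follow the argument of Fomin and Thilikos~\cite{FominT06}, so I will only sketch the route. First I would reduce to a convenient special case: branchwidth is minor-monotone, it is unaffected by isolated vertices, and the branchwidth of a graph is the maximum of the branchwidths of its connected components; moreover any connected planar graph on $n$ vertices is a spanning subgraph — hence a minor — of a planar triangulation on the same vertex set, and passing to a component only decreases $n$. Hence it suffices to bound the branchwidth of a connected plane triangulation $G$ on $n \ge 3$ vertices; such a $G$ has exactly $3n-6$ edges and $2n-4$ faces, and the cases $n \le 2$ are trivial since then the branchwidth is at most $1$. Next, invoking the theorem of Robertson, Seymour and Thomas quoted above, the task becomes to construct a sphere-cut decomposition of $G$ of width at most $\sqrt{4.5\,n}$, i.e. a branch decomposition $(\tree,\brleaf)$ in which every cut $\brcut(e)$ is the set of vertices traversed by a noose realizing the corresponding edge-bipartition.

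Next I would build such a decomposition top-down by a balanced divide-and-conquer. I maintain a disk $\Delta$ on the sphere bounded by a noose $N$ and enclosing a set $F$ of edges (initially a short noose enclosing all of $G$). If $|F| = 1$ we are at a leaf; otherwise I pick a noose $N'$ inside $\Delta$ that partitions $F$ into $F_1 \uplus F_2$ so that each part contains at most, say, two thirds of the faces of $G$ lying inside $\Delta$, and recurse into the two resulting subdisks. The width of the decomposition produced this way is the maximum, over all recursion steps, of the number of vertices traversed by the noose used at that step, so everything reduces to a planar isoperimetric statement: a disk containing $f$ faces of the triangulation can be split by a noose through few vertices into two parts each containing at most $\tfrac23 f$ faces.

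The core of the proof is thus a sharp, face-weighted version of this isoperimetric inequality that composes through the recursion: for every plane graph equipped with weights on its faces there is a balancing noose whose length is controlled by (essentially) $\tfrac32\sqrt{2\cdot(\#\text{faces})}$, which for our triangulation is at most $\tfrac32\sqrt{2(2n-4)} \le \sqrt{4.5\,n}$. I expect the main obstacle to be exactly this point: obtaining the constant $\sqrt{4.5}$ rather than a weaker $\Oh(\sqrt{n})$ bound requires not analyzing the recursion level by level with an off-the-shelf cycle separator, but arguing globally with a potential tied to the enclosed face weights, charging the length of each balancing noose against the potential drop. Establishing this weighted isoperimetric inequality together with its amortized accounting is the technical heart of~\cite{FominT06}; by comparison, the reduction to triangulations and the passage to a sphere-cut decomposition are routine.
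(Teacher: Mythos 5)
There is no in-paper proof to compare against: the paper imports this bound as a black box from Fomin and Thilikos~\cite{FominT06}. Judged on its own terms, your proposal is a plan rather than a proof -- the ``sharp, face-weighted isoperimetric inequality that composes through the recursion,'' which you explicitly defer to~\cite{FominT06}, \emph{is} the theorem; everything you actually argue (minor-monotonicity, reduction to triangulations, a generic balanced divide-and-conquer) only gives $\Oh(\sqrt{n})$, and the whole point of the statement is the constant $\sqrt{4.5}$.

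Two concrete problems beyond the deferral. First, the arithmetic of your target inequality does not work: with $\#\text{faces}=2n-4$ you claim $\tfrac32\sqrt{2(2n-4)}\le\sqrt{4.5\,n}$, but $\tfrac32\sqrt{2(2n-4)}=3\sqrt{n-2}$, which exceeds $\sqrt{4.5\,n}\approx 2.12\sqrt{n}$ for every $n\ge 5$; to land at $\sqrt{4.5\,n}$ you need an isoperimetric bound of roughly $\tfrac32\sqrt{\#\text{faces}}$, i.e.\ a $\sqrt2$-factor stronger than the one you wrote, so the stated conclusion does not follow from the stated lemma. Second, the width accounting in your recursion is not right as described: the middle set of a tree edge is not ``the vertices traversed by the noose used at that step,'' because after one split the child disk is bounded by arcs of both the old noose $N$ and the balancing noose $N'$, so boundary vertices accumulate down the recursion; one must balance a weight that involves the current boundary (not only the enclosed faces) and run an amortized argument to keep every cut below $\sqrt{4.5\,n}$ -- this is precisely where the constant is won, and it is missing. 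A small additional point: invoking the Seymour--Thomas sphere-cut theorem is unnecessary and points the wrong way; for an upper bound on branchwidth it suffices to exhibit any branch decomposition of small width, and nooses are merely a convenient device for constructing one.
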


In planar graphs, one can compute good branch decompositions, where the cuts $\brcut(e)$ correspond
to nooses. More formally, a triple $(\tree, \brleaf, \brnoose)$ is an \emph{sc-branch decomposition}
(for \emph{sphere-cut branch decomposition}) if $(\tree, \brleaf)$ is a branch decomposition
and for every $e \in E(\tree)$, $\brnoose(e)$ is a noose that traverses the vertices of $\brcut(e)$
and separates the edges corresponding to the leaves of the two components of $\tree-e$ from
each other.

We need the following result of Seymour and Thomas~\cite{SeymourT94}, with the algorithmic part following from~\cite{GuT08,DornPBF10}. We remark that the $\Oh(|V(G)|^3)$ factor coming from this
theorem is a part of a the polynomial factor in the running time bound of our algorithm (that we do nnot analyse in detail). 
\begin{theorem}[\cite{SeymourT94,GuT08,DornPBF10}]\label{thm:scbranch}
Given a connected plane graph $G$, %
one can in time $\Oh(|V(G)|^3)$ compute an sc-branch decomposition of $G$
of width equal to the branchwidth of~$G$.
\end{theorem}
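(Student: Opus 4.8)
The plan is to combine the purely combinatorial bound of Theorem~\ref{thm:planar-branch} (every planar graph has branchwidth $\Oh(\sqrt{n})$) with the structural result of Seymour and Thomas on ``call routing'' in planar graphs, which establishes that the branchwidth of a planar graph can always be witnessed by a branch decomposition whose cuts are realised topologically by nooses, and then invoke the algorithmic refinements of Gu--Tamaki~\cite{GuT08} and Dorn et al.~\cite{DornPBF10} to make this constructive in time $\Oh(|V(G)|^3)$.

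First I would recall the correspondence between branch decompositions of a plane graph and partitions of its edge set by nooses. Given a noose $\brnoose$ in a plane graph $G$, the two open regions of the plane it bounds induce a partition $\{F^1,F^2\}$ of $E(G)$ (each edge, having no interior intersection with $\brnoose$, lies entirely in one region), and the set of vertices incident to edges of both parts is exactly the set of vertices traversed by $\brnoose$. Thus every hierarchical family of nooses, organised along a ternary tree $\tree$ so that the partition refines consistently, yields an sc-branch decomposition whose width at edge $e$ equals the number of vertices on $\brnoose(e)$. Conversely, Seymour and Thomas show that an optimal-width branch decomposition can be chosen so that every cut is realised in this way; here one uses connectedness of $G$ to build, for each edge $e$ of $\tree$, a noose separating the two edge-parts, possibly revisiting faces but each vertex at most once (this is precisely the mild relaxation of the classical definition highlighted in the footnote of Section~\ref{sec:overview}). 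Combining this with Theorem~\ref{thm:planar-branch} gives existence of an sc-branch decomposition of width equal to $\mathrm{bw}(G) = \Oh(\sqrt{|V(G)|})$.

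For the algorithmic statement, the plan is to cite the ``ratcatcher'' method: Seymour and Thomas~\cite{SeymourT94} give a procedure that, for a fixed target width $w$, decides in polynomial time whether $\mathrm{bw}(G) \le w$; Gu and Tamaki~\cite{GuT08} refine this to compute an optimal-width sphere-cut decomposition of a plane graph in time $\Oh(|V(G)|^3)$, and Dorn et al.~\cite{DornPBF10} give a closely related construction with the same cubic bound. One runs the decision procedure to pin down $\mathrm{bw}(G)$ by searching over $w \in \{1,\dots,\sqrt{4.5\,|V(G)|}\}$, then extracts the witnessing decomposition together with its nooses; handling the noose bookkeeping (ensuring each $\brnoose(e)$ meets the drawing only in vertices, traverses exactly $\brcut(e)$, and separates the two edge-parts) is done exactly as in those references. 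Since we explicitly allow a noose to visit a face more than once, no bridgelessness hypothesis is needed and connectedness of $G$ suffices.

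The main obstacle — or rather, the only nontrivial point — is verifying that the decomposition produced is genuinely an \emph{sc-branch} decomposition in the sense defined just above, i.e.\ that every cut is realised by a noose with the stated separation property, rather than merely that the width is optimal. This is exactly the content of the Seymour--Thomas planar-routing theorem and its algorithmic counterparts, so in the write-up I would state Theorem~\ref{thm:scbranch} as a direct citation to \cite{SeymourT94,GuT08,DornPBF10} and limit the prose to (i) the noose/edge-partition dictionary, (ii) the remark that our relaxed noose definition only makes the statement easier, and (iii) the reduction of the construction to a bounded search over candidate widths using the cubic-time decision/extraction subroutine. No further calculation is required.
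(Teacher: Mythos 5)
Your proposal takes the same route as the paper: Theorem~\ref{thm:scbranch} is not reproved but cited from \cite{SeymourT94,GuT08,DornPBF10}, with the only original content being the observation that the paper's relaxed notion of noose (faces may be revisited, vertices may not) lets one state the result for all connected plane graphs. (One small inaccuracy: no search over candidate widths is needed, since Gu--Tamaki already return an optimal sphere-cut decomposition in $\Oh(n^3)$; but that is cosmetic.)

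The one place where your write-up is too quick is exactly the place where the paper adds an argument. You assert that ``no bridgelessness hypothesis is needed and connectedness of $G$ suffices,'' as if this followed automatically from allowing nooses to revisit faces, and you suggest the constructions of \cite{SeymourT94,GuT08,DornPBF10} already deliver this. They do not: those results are stated for nooses visiting every face at most once, and under that convention bridgeless-ness is genuinely necessary (e.g.\ for a path $a\,b\,c\,d$ the cut $\{bc\}$ vs.\ $\{ab,cd\}$ has middle set $\{b,c\}$, and any separating noose must use two arcs through the unique face), so you cannot simply invoke them for a connected graph with bridges. The paper's remark after the theorem supplies the missing step: first decompose $G$ into its bridgeless components, apply the cited result to each component separately, and then stitch the decompositions together, using the relaxed nooses (which may revisit faces) to realize the cuts involving bridges; see also \cite{MarxP15}. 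Your plan should either include this short reduction or at least acknowledge that the relaxed noose definition is what makes the \emph{stitched} decomposition an sc-branch decomposition, rather than what makes the citations directly applicable.
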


We remark that in \cite{SeymourT94,GuT08,DornPBF10} one considers nooses that can visit every face at most once, which makes it necessary to assume also that the graph is bridgeless; see e.g.~\cite{MarxP15}.
It is easy to see that without this assumption on nooses, one can extend the theorem also to connected graphs with bridges.
One way to obtain it is to first decompose into bridgeless components, and then decompose each such component separately.
Alternatively, one can add a number of dummy edges without violating the plane embedding but ensuring $2$-edge-connectivity.

\section{Nooses}\label{sec:nooses}
In this section, we prove a combinatorial result showing that if we consider nooses that go through only a limited number of vertices of a connected graph with some vertices being terminals,
then there is only a bounded number of potential partitions of terminals such nooses can realize. A slight technical complication is deciding how to handle terminals that are on the noose itself; to avoid this complication, we consider the terminals to be edges instead.

Let $G$ be a connected plane (directed or undirected) graph 
with a set $\Tedges \subseteq E(G)$ of \emph{terminal edges}
and let $\gamma$ be a noose in $G$ that visits at most $\ell$ vertices. 
In this section we show that if $\ell \ll |\Tedges|$, then there 
are much less than $2^{\Theta(|\Tedges|)}$ ways of how the noose can partition the set of terminal edges.

More formally, we think of the planar embedding of $G$ as a spherical one
(i.e., without distinguished outer face)
  and with a noose $\gamma$ we associate a \emph{partition} 
  $\{\Tedges_1,\Tedges_2\}$ of $\Tedges$, 
where and $\Tedges_1$ and $\Tedges_2$ are the sets of terminal edges that lie in the two components
of the sphere minus $\gamma$. 
Since we consider spherical embeddings and the two sides of $\gamma$ are symmetric,
the pair $\{\Tedges_1,\Tedges_2\}$ is an unordered pair.

Our main claim in this section is that there are only $|\Tedges|^{\Oh(\ell)}$ ``reasonable''
partitions for nooses visiting at most $\ell$ vertices.

\begin{lemma}\label{lem:noose-enum}
Assume we are given a plane connected graph $G$ with a set $\Tedges \subseteq E(G)$ of terminal edges and an integer~$\ell$. 
Then one can in time $|\Tedges|^{\Oh(\ell)} n^{\Oh(1)}$ compute a family $\Aa$ of
$|\Tedges|^{\Oh(\ell)}$ of partitions of $\Tedges$ such that, for every noose of $G$ that visits at most
$\ell$ vertices, its corresponding partition of the terminal edges belongs to~$\Aa$.
\end{lemma}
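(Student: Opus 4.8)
We need to enumerate $|T|^{O(\ell)}$ candidate tri-partitions such that every noose visiting at most $\ell$ vertices has its tri-partition among them. Let me think about the structure.

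A noose $\gamma$ in a plane connected graph $G$ visits at most $\ell$ vertices, say $v_1, \dots, v_r$ with $r \le \ell$, in cyclic order. The noose determines a tri-partition $(T_0, \{T_1, T_2\})$. The key idea: the noose's behavior with respect to the terminals is determined by a bounded amount of "combinatorial data" — essentially, which vertices it passes through and, between consecutive such vertices, which face of $G$ it traverses (or more precisely, how it winds around the graph).

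The cleanest approach: Take a spanning tree $H_0$ of $G$ (or rather of some auxiliary structure connecting terminals). Actually — since the paper later uses $H_0$ as a tree spanning terminals to "control the homotopy group of the plane with terminals pierced out," I expect the proof here to be more elementary and self-contained. Let me think about what data determines the tri-partition.
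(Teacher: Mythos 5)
Your proposal stops before it becomes a proof: after restating the problem and observing that a noose should be describable by ``bounded combinatorial data,'' you never say what that data is, never construct the family $\Aa$, and never argue that the data determines the tri-partition. As it stands there is a genuine gap, and in fact the two hard points of the lemma are exactly the ones you defer. First, the noose's $\le \ell$ visited vertices can be arbitrary vertices of $G$, so enumerating them directly gives $n^{\Oh(\ell)}$ candidates, which is too many: the family must have size $|T|^{\Oh(\ell)}$, with $n$ entering only polynomially. The paper's proof handles this by a monotonicity observation (deleting vertices or edges of $G$ only enlarges the set of curves that are nooses, so it suffices to enumerate for an inclusion-wise minimal subtree spanning the terminals), and then by a normalization step: that tree splits into fewer than $2|T|$ maximal paths between terminals and branch vertices, and the intersections of the noose with the interior of each such path can be shifted, preserving the induced partition, onto a canonical copy of the path with exactly $\ell$ internal vertices. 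Only after this does the relevant graph have $\Oh(|T|\ell)$ vertices, so that a ``combinatorial representation'' of the noose (the visited vertex together with the pairs of incident edges between which the noose enters and leaves it, listed in order along the noose) has only $|T|^{\Oh(\ell)}$ possibilities. Your suggestion of taking a spanning tree of $G$ does not achieve this, since such a tree still has $n$ vertices.

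Second, you would still need to show that this bounded representation actually determines the tri-partition of the terminals; the paper gets this from the fact that the normalized graph is a tree, so its unique face is a disc and the cyclic data at the visited vertices pins down how the noose separates the terminals. Without the reduction-to-a-tree step this is not automatic (in a general plane graph, knowing only which vertices are crossed and how does not immediately tell you which faces the noose threads between crossings). So the missing ideas are: the monotonicity/restriction to a minimal terminal-spanning tree, the shift argument bounding the number of relevant vertices by $\Oh(|T|\ell)$, and the disc argument showing the combinatorial representation determines the partition.
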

\begin{proof}

The crucial observation is that deleting an edge or a (nonterminal) vertex from $G$ only increases
the family of curves in the plane that are nooses with respect to $G$. Consequently, if one replaces $G$ with any of its
connected subgraphs that contains all the terminal edges and enumerate a family of partitions satisfying the statement of the lemma for this subgraph,
then the same family will be also a valid output for the original graph $G$.
Thus, by restricting attention to an inclusion-wise minimal connected subgraph containing all terminal edges, without loss of generality we may assume that every edge of $G$ that is not a terminal edge
is a bridge connecting two parts of $G$ that both contain a terminal edge.

Without loss of generality, assume $\ell < |\Tedges|$, as otherwise we just enumerate all partitions of $\Tedges$.

Let $S$ be the set of \emph{special} vertices in $G$: endpoints of terminal edges
and vertices of degree at least $3$. Note that every vertex of $S \setminus V(\Tedges)$
is a vertex incident with at least three nonterminal edges; each such edge is a bridge
connecting two components containing a terminal edge. Hence, $|S \setminus V(\Tedges)| < |\Tedges|$
and thus $|S| < 3|\Tedges|$. 
Furthermore, it follows that $G$
decomposes into $\Tedges$ and $r < 2|\Tedges|$ paths $Q_1,Q_2,\ldots,Q_r$
such that each path $Q_i$ consists of nonterminal edges only, 
has both endpoints in $S$ but no internal vertices in $S$.
That is, every path $Q_i$ is disjoint from $\Tedges$, has degree-2 vertices as internal vertices,
and either endpoints of terminal edges or vertices of degree at least $3$ as endpoints.

Construct now a graph $G'$ from $G$ by replacing every path $Q_i$ with a path $Q_i'$ with the same drawing in the plane,
but with exactly $\ell$ internal vertices.
We have 
\begin{align*}
|V(G')| &\leq |S| + \ell \cdot r < |\Tedges|(2\ell + 3),\\
|E(G')| &\leq |\Tedges| + (\ell+1) \cdot r < |\Tedges|(2\ell + 3).
\end{align*}
Furthermore, for every noose $\gamma$ in $G$ that visits at most $\ell$ vertices of $G$, construct its \emph{shift} $\gamma'$,
being a noose with respect to $G'$, as follows: for every path $Q_i$, move all intersections of $\gamma$ with the internal vertices
of $Q_i$ to distinct internal vertices of $Q_i'$, keeping the relative order of the intersections along the paths $Q_i$ and $Q_i'$ the same.
Since $Q_i'$ has $\ell$ internal vertices, this is always possible. 
Furthermore, we can obtain $\gamma'$ from $\gamma$ by local modifications
within close neighborhoods of the paths $Q_i$, but not near its endpoints. Consequently, the partitions of the terminal edges induced
by $\gamma$ and $\gamma'$ are the same.

Observe now that $\gamma'$ is a noose with respect to a connected graph with $\Oh(|\Tedges| \ell)$ vertices and edges.
With every intersection of $\gamma'$ with $G'$, say at a vertex $v$, we associate three pieces of information: the vertex $v$ itself,
between which pair of edges incident with $v$ the noose $\gamma'$ entered $v$, and between which pair of edges it left $v$.
Since there are only $\Oh(|\Tedges|\ell) = \Oh(|\Tedges|^2)$ choices for every piece of information,
there are only $|\Tedges|^{\Oh(\ell)}$ possible \emph{combinatorial representations} of $\gamma'$, defined as a sequence
of the aforementioned triples of pieces of information at every vertex traversed $\gamma'$, in the order of a walk along~$\gamma'$.
Finally, as the connectedness of $G'$ implies that every face of $G'$ is isomorphic to a disc, we can see that knowing the combinatorial representation of $\gamma'$
is sufficient to deduce the partition of the terminal edges induced by~$\gamma'$.
This finishes the proof.
\end{proof}

\section{The algorithm}\label{sec:tsp}

In this section we provide a full proof of Theorem~\ref{thm:dirtsp}. 
We assume that we are given an instance $(G,T)$ of \tspname{}.
We start by fixing a plane embedding of $G$ and introducing a few useful definitions.

Let $W$ be a walk that visits every terminal exactly once.
A permutation $\pi = (t_1,t_2,\ldots,t_{|T|})$ of $T$ is a \emph{witnessing permutation}
of $W$ if it is exactly the (cyclic) order of the terminals visited by $W$.
A closed walk $W$ is a \emph{locally short walk} if it visits every terminal exactly once
and the subwalks of $W$ between the consecutive terminals are actually shortest paths between their endpoints.

For two edge-disjoint paths $P_1$, $P_2$ and a nonterminal vertex $v \in V(P_1) \cap V(P_2)$
we say that $v$ is a \emph{transversal intersection} of $P_1$ and $P_2$ if $v$ is not an endpoint
of neither $P_1$ nor $P_2$ and if $e_i^1$, $e_i^2$ are the two edges of $P_i$ incident with $v$
for $i=1,2$, then they are in the order $e_1^1,e_2^1,e_1^2,e_2^2$ clockwise or counter-clockwise
around $v$.

We proceed in a number of steps.
The crucial definition that allows us to control self-crossings of the solution via a ``cactus-like'' structure is the following.

\begin{definition}\label{def:reduced}
Suppose $W$ is a closed walk in $G$ that visits every terminal at most once.
Then $W$ is called {\em{cactuslike}} if every terminal is visited by $W$ exactly once, 
     every vertex of $G$ is visited by $W$ at most twice, and, moreover, the following condition holds.
Whenever a vertex $x$ is visited twice by $W$, then the two proper subwalks of $W$ obtained by following $W$ from one visit of $x$ to the other have no intersection other than $x$.
\end{definition}

\begin{figure}[tb]
\begin{center}
\def\svgwidth{0.4\textwidth}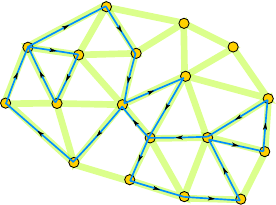
\end{center}
\caption{An example cactuslike walk (in blue). There are two vertices at which the walk self-crosses and two vertices that
are visited twice without self-crossing.}\label{fig:cactus}
\end{figure}

An example cactuslike walk is depicted in Figure~\ref{fig:cactus}. As the reader may see, the walk has a shape roughly resembling a cactus, or more formally a tree consisting of pairs of interlacing directed paths. 

In Section~\ref{sec:bunch} we study the notion of a \emph{canonical instance} $(\Ggrid,T,\Pp)$ %
of \tspname{}.
The main of this notion is to exclude some degenerate scenarios, such as the solution visiting a vertex more than twice, using the same edge twice, or intersecting itself without a good reason. 
However, to exclude the above degenerate scenarios, we cannot at the same time ensure the shortest path property of the instance; instead, we offer a family of \emph{canonical paths} between 
terminals that need to be used by the solution.
\begin{definition}\label{def:blow-up}
A triple $(\Ggrid, T, \mathcal{P})$ is a \emph{canonical instance} of \tspname{}
if $(\Ggrid,T)$ is a \tspname{} instance and 
$\mathcal{P}$ is a family of $|T|(|T|-1)|$ paths $\Pgrid(s,t)$ for every terminal pair $(s,t)$, $s \neq t$ 
(called the \emph{canonical $s-t$ path}) with the following properties:
\begin{enumerate}[(A)]
\item the path $\Pgrid(s,t)$ does not visit any other terminals and is a shortest
path from $s$ to $t$;\label{i:blow-up:path}
\item the paths $\Pgrid(s,t)$ are pairwise edge-disjoint and every nonterminal vertex of $\Ggrid$ lies on at most two paths $\Pgrid(s,t)$;\label{i:blow-up:disjoint}
\item if two paths $\Pgrid(s_1,t_1)$ and $\Pgrid(s_2,t_2)$ intersect at a nonterminal vertex $x$,
  then $s_1 \neq t_1$, $s_2 \neq t_2$, and the intersection at $x$ is transversal;\label{i:blow-up:intersection}
\item there exists a minimum-weight solution $\Wgrid$ to \tspname{} on $(\Ggrid,T)$
that is a concatenation of $|T|$ canonical paths (and thus is locally short) and that is cactuslike.\label{i:blow-up:canonical}
\end{enumerate}
\end{definition}
A solution to \tspname{} on $(\Ggrid,T,\mathcal{P})$ is \emph{canonical} if it is a concatenation of $|T|$ canonical paths. 
Note that a canonical solution visits every terminal exactly once and every nonterminal vertex at most twice (in particular, it is locally short).

\medskip

In Section~\ref{sec:bunch}, we describe how to turn the input instance $(G,T)$
into a canonical instance $(\Ggrid,T,\mathcal{P})$ with the existence of the canonical cactuslike solution
$W$ 
proven in Section~\ref{sec:cleaning}. That is, we prove the following statement.
\begin{lemma}\label{lem:input-to-blow-up}
Given a \tspname{} instance $(G,T)$, one can in polynomial time
compute a canonical instance 
$(\Ggrid,T,\mathcal{P})$ with $|E(\Ggrid)| + |V(\Ggrid)| \leq (|E(G)| + |V(G)|)|T|^{\Oh(1)}$,
        with the same terminal set $T$ and with the following properties:
\begin{itemize}
\item given a solution $\Wgrid$ that is a solution to \tspname{}
in $(\Ggrid,T,$ $\mathcal{P})$ of minimum possible weight, one can in polynomial time find
a solution $W$ to \tspname{} in $(G,T)$ that is of minimum possible weight;
\item given a solution $W$ to \tspname{} in $(G,T)$ that is of minimum possible weight
one can in polynomial time find
a solution $\Wgrid$ to \tspname{} in $(\Ggrid,T,\mathcal{P})$ that is of minimum possible weight.
\end{itemize}
\end{lemma}
Note that Lemma~\ref{lem:input-to-blow-up} reduces the \tspname{} problem on $(G,T)$
to \tspname{} on a canonical instance $(\Ggrid,T,\mathcal{P})$. Thus it is sufficient to solve the canonical version of problem.

In Section~\ref{sec:reduced} we formalize the intuition that the notion of a \emph{cactuslike} walk
gives a cactus-like structure on the walk. Sections~\ref{sec:enum} and~\ref{sec:dp} give an algorithm for \tspname{} on canonical instances.
\begin{lemma}\label{lem:alg-blow-up}
Given a canonical instance $(\Ggrid,T,\mathcal{P})$, one can in time \[ |T|^{\Oh(\sqrt{|T|})} |\Ggrid|^{\Oh(1)} \]
find a minimum weight solution $\Wgrid$ to \tspname{} on $(\Ggrid,T,\mathcal{P})$. 
\end{lemma}
In Section~\ref{sec:enum} we show how to enumerate a small family of states for a dynamic 
programming algorithm and then in Section~\ref{sec:dp} we present the dynamic programming routine
itself. 
By pipelining Lemmas~\ref{lem:input-to-blow-up} and~\ref{lem:alg-blow-up} one derives
Theorem~\ref{thm:dirtsp}.

\subsection{Constructing a canonical instance}\label{sec:bunch}
\newcommand{\cross}{\mathrm{cross}}

\paragraph{Initial preprocessing}
We start with the following preprocessing steps on $G$.

First, we ensure that shortest paths in the input instance $G$ are unique
and that the edge weights are strictly positive.
Since we do not analyze the polynomial factor in the running time bound of our algorithms, this can be ensured in a standard manner
by replacing a weight $\wei(e)$ of the $i$-th arc with
$M \cdot (\wei(e) \cdot n^{|E(G)|+1} + n^i)$ for $M = 2|T|(|E(G)| +1)+1$ 
(the factor $M$ is used to later add some weight-1 edges without changing the structure 
 of the minimum-weight solution).
Let $P_G(u,v)$ be the unique shortest path from $u$ to $v$ in $G$.  

Second, we ensure that every terminal $t \in T$ has only one neighbor $w_t$, with two
arcs $(w_t, t)$ and $(t,w_t)$ of weight $1$.
To obtain such a property, for every terminal $t_0 \in T$ we can make its copy $t$, connect
$t$ and $t_0$ with arcs in both direction of weight $1$, and rename $w_t = t_0$. The new terminal set is the set of the copies of the old terminals.
Note that this property implies that 
we can consider only solutions to the \tspname{}
problem that visit every terminal exactly once. 
Note also that this operation does not spoil the property that $G$ has unique shortest paths.

Third, we ensure that every nonterminal vertex $v$ has in-degree $1$ and out-degree $2$ or in-degree $2$ and out-degree $1$. 
To this end, we first iteratively remove all nonterminal vertices of in- or out-degree $0$; they surely are not used in any solution.
Then, for every remaining nonterminal vertex $v$ with $d$ edges, we replace $v$ with a directed cycle of length $d$ and each arc of weight $1$, attaching every arc incident with $v$
to a different vertex on the cycle. We perform this operation so that the graph remains planar: for the fixed embedding of $G$ we attach arcs incident with $v$
in the cyclic order in this embedding.

Observe that for a terminal $t$ with a sole neighbor $w_t$, after this operation the terminal $t$ is still incident with two arcs, $(w_1,t)$ and $(t,w_2)$ where $w_1$ and $w_2$
are two consecutive vertices on the cycle corresponding to the vertex $w_t$. Furthermore, $w_1$ has out-degree $2$ and in-degree $1$ while $w_2$ has in-degree $2$ and out-degree $1$.
Again, we also observe that this operation does not spoil the property that $G$ has unique shortest paths. Here, the crucial fact is that we put weight $1$ (as opposed to $0$) on the arcs
incident through a terminal, so a detour from $w_1$ to $w_2$ via $t$ is more expensive than following the (weight-$1$) arc $(w_1,w_2)$ directly. See Figure~\ref{fig:tsp:preprocess} for an illustration.

\begin{figure}[tb]
\begin{center}
\includegraphics[width=\linewidth]{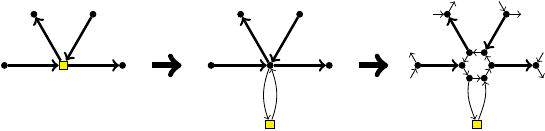}
\end{center}
\caption{Initial preprocessing for \tspname{}. The new edges, marked as thinner lines, are of weight $1$, much smaller than the original (thicker) arcs.}\label{fig:tsp:preprocess}
\end{figure}

Finally, note that after this operation the length of any terminal-to-terminal path in $G$
increased its weight by at most $2|E(G)|+2$ and hence the length of any solution to \tspname{}
increased its weight by at most $2|T|(|E(G)|+1) = M-1$. Thus, a minimum-weight solution
in the modified graph will project back to a minimum-weight solution in the original graph
and vice versa. By somehow abusing the notation, we keep $(G,T)$ as the name for the instance after the above initial 
preprocessing.

To sum up, by the above operations we ensure that in the input graph $G$ is embedded on a plane and:
\begin{enumerate}
\item the shortest paths in $G$ are unique and the edge weights in $G$ are positive integers;
\item every terminal $t$ is of in-degree $1$ and out-degree $1$ with incident edges $(w_1, t)$ and $(t,w_2)$ such that $(w_1,w_2)$ is an arc of weight $1$
and the three arcs $(w_1,t)$, $(t,w_2)$, $(w_1, w_2)$ bound a face;
\item for every two vertices $v, w \in V(G)$, the unique  shortest path between its endpoints does not visit any terminal as an internal vertex;
\item every nonterminal vertex $v$ is of in-degree $2$ and out-degree $1$ or out-degree $2$ and in-degree $1$; we henceforth call the vertices
of the first type the \emph{in-2 out-1 vertices} while of the second type the \emph{in-1 out-2 vertices}. 
\end{enumerate}

\paragraph{Construction of the canonical instance}
We now move to the construction of the canonical instance $(\Ggrid, T,\mathcal{P})$.

Recall that $P_G(u,v)$ denotes the (unique) shortest path from $u$ to $v$ in $G$. 
Let $\mathcal{T} = \{(s,t) \in T \times T~|~s \neq t\}$ and $\mathcal{P} = \{P_G(s,t)~|~(s,t) \in \mathcal{T}\}$. 
For every edge $e \in E(G)$, let 
$\mathcal{T}(e) = \{(s,t) \in \mathcal{T}~|~e \in E(P_G(s,t))\}$
and $\mathcal{P}(e) = \{P \in \mathcal{P}~|~e \in E(P)\} = \{P_G(s,t)~|~(s,t)\in \mathcal{T}(e)\}$.
Similarly define $\mathcal{T}(v)$ and $\mathcal{P}(v)$ for a nonterminal vertex $v$.

Let $e=(u,v) \in E(G)$.
Let $T_e^\rightarrow \subseteq T$ be the family of terminals $t \in T$ 
for which there exists some $(s_1,t_1) \in \mathcal{T}(e)$ with $t_1 = t$.
Similarly, let 
$T_e^\leftarrow \subseteq T$ be the family of terminals $s \in T$ 
for which there exists some $(s_1,t_1) \in \mathcal{T}(e)$ with $s_1 = s$.
Consider the union $H_e^\rightarrow$ of all shortest paths $P_G(v,t)$ for all $t \in T_e^\rightarrow$.
It is clear that $H_e^\rightarrow$ is an outbranching rooted at $v$.
Subdivide for a moment the edge $e$ with a new vertex $x_e$, add the arc $(x_e, v)$ to $H_e^\rightarrow$ and consider $H_e^\rightarrow$ as an outbranching rooted at $x_e$.
Since no shortest path in $G$ contains a terminal as an internal vertex, while if $v \in T$ then $T_e^\rightarrow = \{v\}$, 
all terminals are leaves of $H_e^\rightarrow$.
Consequently, $H_e^\rightarrow$ imposes an order $\preceq_e^\rightarrow$ on $T$: starting from the root, we traverse the unique face of $H_e^\rightarrow$ in counter-clockwise direction and order $T$ in the order of this traversal.
The order $\preceq_e^\rightarrow$ is the \emph{destination order} for the edge $e$. 

Similarly we define the order $\preceq_e^\leftarrow$ by taking $H_e^\leftarrow$ to be an inbranching rooted at $x_e$ consisting of the edge $(u, x_e)$ and all shortest paths $P_G(t, u)$ for $t \in T_e^\leftarrow$ and traversing the unique face of $H_e^\leftarrow$ in the clockwise direction (note that if $u \in T$, then $T_e^\leftarrow = \{u\}$).
The order $\preceq_e^\leftarrow$ is the \emph{source order} for the edge $e$.
Finally, we define $\preceq_e$ as an order on $\mathcal{T}(e)$ where we order all pairs $(s,t) \in \mathcal{T}(e)$ lexicographically first by the destination order $\preceq_e^\rightarrow$ for $e$ of $t$ and then
by the source order $\preceq_e^\leftarrow$ for $e$ of $s$. That is, $(s_1,t_1) \preceq_e (s_2,t_2)$ if and only if $t_1 \prec_e^\rightarrow t_2$ or $t_1=t_2$ and $s_1 \preceq_e^\leftarrow s_2$.

We define the graph $\Gbunch$ as $G$ with every arc $e = (u,v)$ replaced with $|\mathcal{T}(e)|$ parallel copies (of the same weight, drawn next to each other in the plane). Furthermore, we label the copies of $e$ with distinct elements of $\mathcal{T}(e)$: we go around $u$ in the counter-clockwise order and label the copies according to the order $\preceq_e$. (Note that we will obtain the same labelling if we go around $v$ in the clockwise order.) The set of all copies of $e$ in $\Gbunch$ is called the \emph{bunch of $e$}. 
Finally, for every path $P_G(s,t)$ for $(s,t) \in \mathcal{T}$, we define 
\emph{the canonical path in $\Gbunch$}, denoted $\Pbunch(s,t)$, as the path that for every $e \in E(P_G(s,t))$ traverses the copy of $e$
assigned label $(s,t)$. Note that $V(G) = V(\Gbunch)$.

Finally, we define the graph $\Ggrid$ as follows. Start with $\Gbunch$ and for every
nonterminal vertex $v$ proceed as follows.
Assume first that $v$ is in $G$ 
an in-2 out-1 vertex with incident edges $e_1 = (u_1, v)$, $e_2 = (u_2, v)$ and $e = (v, w)$
lying around $v$ in this counter-clockwise order.
Replace $v$ with a $|\mathcal{T}(v)| \times |\mathcal{T}(v)|$ 
acyclic grid $\grid(v)$ with all arcs of weight $0$. 
Assume that $\grid(v)$ is drawn such that all edges go rightwards and upwards.
In $\Ggrid$, we attach the edges of $\Gbunch$ incident with $v$ as follows.
Attach all $|\mathcal{T}(v)| = |\mathcal{T}(e)|$ copies of $e$ to the right side of $\grid(v)$,
one edge per vertex of the grid, in the same order as the cyclic order around $v$ in $\Gbunch$
(that is, in the order $\preceq_e$ from bottom to top).
The label of a row of $\grid(v)$ is the label of the edge outgoing from the right endpoint of the row.
Attach all $|\mathcal{T}(e_1)|$ copies of $e_1$ to the left side of $\grid(v)$
so that a copy with label $(s,t)$ is attached to the vertex in the row with the same label.
Attach all $|\mathcal{T}(e_2)|$ copies of $e_2$ to the bottom side of the grid,
at most one copy per vertex of $\grid(v)$, in the same order as the cyclic order
around $v$ in $\Gbunch$ (that is, in the order $\preceq_{e_1}$ from right to left). 
Again, for each column where a copy of $e_2$ is attached to the bottom vertex of the column,
  the label of the column is the label of the attached copy of $e_2$.
The label of an edge of $\grid(v)$ is the label of its row and column; note that some vertical edges do not receive any label if no edge is attached to the bottom endpoint of the column.

The construction for in-1 out-2 vertex $v$
is symmetric with all directions of arcs reversed (see Figure~\ref{fig:canonical:construct}). 
This finishes the description of the graph $\Ggrid$; note that we do not modify the terminals.
Again, as in $\Gbunch$, the set of all copies of an edge $e \in E(G)$ in $\Ggrid$ is called the \emph{bunch} of $e$.

For every path $P_G(s,t)$ for $(s,t) \in \mathcal{T}$, we define
\emph{the canonical path in $\Ggrid$}, denoted $\Pgrid(s,t)$, as follows.
For every $e \in E(P_G(s,t))$, we traverse the copy of $e$ labeled $(s,t)$.
For every nonterminal vertex $v$ on $P_G(s,t)$ with preceding edge $e = (u,v)$ and
succeeding edge $e' = (v,w)$ on $P_G(s,t)$, we connect the head of the copy of $e$
labeled $(s,t)$ with the tail of $e'$ labeled $(s,t)$ as follows. 
If these copies are attached to opposite sides of $\grid(v)$, we connect via the corresponding
row of $\grid(v)$ that has label $(s,t)$. Otherwise, if these copies are attached to perpendicular sides,
we connect with an L-shape via the row and column labeled $(s,t)$, taking only one turn at the intersection of the row and column
labeled $(s,t)$. See Figure~\ref{fig:canonical:construct} for an illustration.

\begin{figure}[t]
\begin{center}
\includegraphics[width=0.9\linewidth]{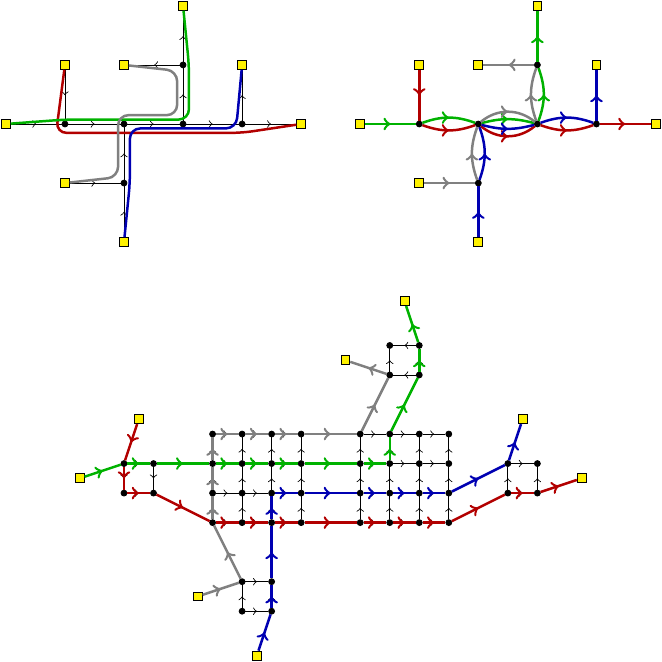}
\end{center}
\caption{The construction of $\Gbunch$ and $\Ggrid$. 
 Top left: a subgraph of the graph $G$ with four shortest terminal to terminal paths and every nonterminal vertex being either an in-2 out-1 vertex or an in-1 out-2 vertex.
 Top right: the corresponding subgraph of $\Gbunch$ with the lifts of the four paths. (The formal definition of a \emph{lift} appears later in text.)
 Bottom: the corresponding subgraph of the graph $\Ggrid$ with the lifts of the four paths.}\label{fig:canonical:construct}
\end{figure}

\paragraph{Basic properties}
It is straightforward to observe that in this manner every arc $e$ of $\Ggrid$ is used in at most
one canonical path $\Pgrid(s,t)$ and, if this is the case, then $e$ is assigned label $(s,t)$.
From the fact that no shortest path of $G$ contains
a terminal as an internal vertex we infer that canonical paths in neither $\Gbunch$ nor
$\Ggrid$ contain a terminal as an internal vertex. Since all nonterminal vertices of $\Ggrid$
have in- and out-degree bounded by $2$, and, for vertices with both in- and out-degree equal $2$, 
     the labels of the incident arcs alternate, 
we infer that at every nonterminal vertex of $\Ggrid$ at most two canonical paths can intersect
and, if they intersect, they intersect transversally. 
This proves properties~\ref{i:blow-up:disjoint}, \ref{i:blow-up:intersection},
and the first part of property~\ref{i:blow-up:path}
of Definition~\ref{def:blow-up}.

Note that in $\Gbunch$ and $\Ggrid$ we no longer have unique shortest paths, but 
we do not need them in these graphs;
in what follows we will rely on the notion of canonical paths instead.

Observe that every walk in $\Gbunch$ or $\Ggrid$ has its natural \emph{projection} in $G$ of the
same weight. This proves the second part of property~\ref{i:blow-up:path} of Definition~\ref{def:blow-up}. 
Furthermore, it is easy to observe that if a projection of a walk in $\Gbunch$ or in $\Ggrid$
is a simple path in $G$, then the preimage walk needs to be a simple path as well.

Consider a locally short walk $W$ in $G$ and let $(t_1,t_2, \ldots, t_{|T|})$ be the witnessing
permutation. That is, $W$ is the concatenation of paths $P_G(t_i,t_{i+1})$ for $i=1,2,\ldots,|T|$
(where $t_1 = t_{|T|+1}$). 
The notion of canonical paths allows us to lift $W$ to $\Wbunch$ in $\Gbunch$
and $\Wgrid$ in $\Ggrid$: $\Wbunch$ is a concatenation of
paths $\Pbunch(t_i,t_{i+1})$ while $\Wgrid$ is a concatenation of
paths $\Pgrid(t_i,t_{i+1})$. 
Note that both $\Wbunch$ and $\Wgrid$ use every edge of the corresponding
graph at most once and that the weights of $W$, $\Wbunch$, and $\Wgrid$ are equal.

In the other direction, if we have a solution $\Wgrid$ to \tspname{} in $(\Ggrid,T,\mathcal{P})$, then
there is a natural way to project $\Wgrid$ back to a solution $W$ to $(G,T)$:
whenever $\Wgrid$ traverses a terminal $t$ or a grid $\grid(v)$, go through $t$
or $v$ in $G$, respectively. Note that the weight of $W$ is not larger
than the weight of $\Wgrid$.
This shows the first part of property~\ref{i:blow-up:canonical} of Definition~\ref{def:blow-up}
(i.e., except for the ``cactuslike'' claim). 

Thus, to prove Lemma~\ref{lem:input-to-blow-up}, it remains to show the existence
of a canonical cactuslike solution of minimum weight to \tspname{} on $(\Ggrid,T,\mathcal{P})$; all other required properties
of the canonical instance and properties promised by Lemma~\ref{lem:input-to-blow-up} have been discussed
above or are straightforward.

To this end,
   we observe that two paths $\Pgrid(s_1,t_1)$ and $\Pgrid(s_2,t_2)$ intersect 
only in a very specific situation. 
Intuitively, two paths $P_G(s_1,t_1)$ and $P_G(s_2,t_2)$ can intersect and share edges multiple times;
for each such common subpath of $P_G(s_1,t_1)$ and $P_G(s_2,t_2)$, the paths
$\Pgrid(s_1,t_1)$ and $\Pgrid(s_2,t_2)$ follow the same grids $\grid(\cdot)$ and bunches in parallel,
crossing only at the first grid and only if the corresponding intersection of $P_G(s_1,t_1)$ and $P_G(s_2,t_2)$ is transversal after
contracting the edges of the common subpath.

\begin{lemma}\label{lem:grid-cross}
Let $(s_1,t_1)$ and $(s_2,t_2)$ be two distinct elements of $\mathcal{T}$ and let $v$
be a nonterminal vertex of $G$.
Then $\Pgrid(s_1,t_1)$ and $\Pgrid(s_2,t_2)$ intersect at at most one vertex of $\grid(v)$.
Furthermore, such an intersection exists if and only if all the following conditions are satisfied:
\begin{enumerate}
\item $v$ is an in-2 out-1 vertex of $G$ that lies on both $P_G(s_1,t_1)$ and $P_G(s_2,t_2)$;
\item if $e_1 = (u_1,v)$, $e_2 = (u_2,v)$ and $e = (v,w)$ are the three edges of $G$ incident
with $v$ in this counter-clockwise order, then either
\begin{itemize}
\item $e_1 \in E(P_G(s_1,t_1))$, $e_2 \in E(P_G(s_2,t_2))$, and $(s_1,t_1) \prec_e (s_2,t_2)$; or
\item $e_2 \in E(P_G(s_1,t_1))$, $e_1 \in E(P_G(s_2,t_2))$, and $(s_2,t_2) \prec_e (s_1,t_1)$.
\end{itemize}
\end{enumerate}
In particular, no two paths $\Pgrid(s_1,t_1)$ and $\Pgrid(s_2,t_2)$ intersect in a vertex
of $\grid(v)$ for an in-1 out-2 vertex $v \in V(G)$.
\end{lemma}
\begin{proof}
First, consider an in-1 out-2 vertex $v \in V(G)$. Then the fact that $\preceq_e$ orders
pairs of $\mathcal{P}(e)$ first according to the destination order and then according to the source
order implies that the cyclic order of the labels of the incoming edges of $v$ in $\Gbunch$
is the reversed cyclic order of the labels of the outgoing edges of $v$ in $\Gbunch$.
These orders stay the same around $\grid(v)$ in $\Ggrid$.
Consequently, no two paths $\Pgrid(s,t)$ intersect at $\grid(v)$.

Consider now an in-2 out-1 vertex $v \in V(G)$ and two paths $P_G(s_1,t_1)$ and $P_G(s_2,t_2)$
passing through $v$. Let $e_1 = (u_1,v)$, $e_2 = (u_2, v)$, and $e = (v, w)$ be the three
edges incident with $v$ in the counter-clockwise order.
Observe that in $\Gbunch$ the counter-clockwise around $v$ 
order of the labels in the bunch of $e_1$ is the restriction of the 
reversed order of the counter-clockwise around $v$ order of the labels in the bunch of $e$. 
Consequently, if $P_G(s_1,t_1)$ and $P_G(s_2,t_2)$ share the same edge incoming to $v$, then
$\Pgrid(s_1,t_1)$ and $\Pgrid(s_2,t_2)$ do not intersect in $\grid(v)$. 
Otherwise, by symmetry assume that $e_1 \in E(P_G(s_1,t_1))$ and $e_2 \in E(P_G(s_2,t_2))$.
Then $(s_1,t_1) \neq (s_2,t_2)$ and we have that $\Pgrid(s_1,t_1)$ and $\Pgrid(s_2,t_2)$ intersect in $\grid(v)$ if and only
if $(s_1,t_1) \prec_e (s_2,t_2)$. This finishes the proof of the lemma.
\end{proof}

\subsection{Canonical solution in a canonical instance}\label{sec:cleaning}

Consider two paths $\Pgrid(s_1,$ $t_1)$ and $\Pgrid(s_2,t_2)$ that intersect at a vertex $x \in \grid(v)$.
To prove that there exists a solution with a nice cactus-like structure, we would like to use the operation of uncrossing at $x$:
replace in the solution the two paths $\Pgrid(s_1,t_1)$ and $\Pgrid(s_2,t_2)$ with $\Pgrid(s_1,t_2)$ and $\Pgrid(s_2,t_1)$, hoping to reduce the number of crossings
by at least $1$ --- the one corresponding to $x$. While such uncrossing is simple to analyze in $G$ or $\Gbunch$, the definition of $\Ggrid$ causes some trouble due to the fact 
that say $\Pgrid(s_1,t_2)$ is not exactly a concatenation of $\Pgrid(s_1,t_1)[s_1,x]$ and $\Pgrid(s_2,t_2)[x, t_2]$, but a ``parallel shift'' of this concatenation.
Luckily, it turns out that nothing bad happens with this ``parallel shift'', but this is not immediate and requires some argumentation. 

The main observation is embedded in the following lemma. Intuitively, it means 
that our canonical paths intersect as little as possible.

\begin{lemma}\label{lem:uncross}
Let $(s_1,t_1)$ and $(s_2,t_2)$ be two distinct elements of $\mathcal{T}$.
For $i=1,2$, let $P_i$ be a path from $s_i$ to $t_i$ in $\Ggrid$ whose projection onto $G$ 
equals $P_G(s_i,t_i)$ (i.e., the projections of $P_i$ and $\Pgrid(s_i,t_i)$ are the same, $P_i$ traverses exactly the same grids
and bunches in the same order as $\Pgrid(s_i,t_i)$).
Furthermore, assume that $P_1$ and $P_2$ do not share
any edge. 
Then the number of intersections of $\Pgrid(s_1,t_1)$ and $\Pgrid(s_2,t_2)$ at nonterminal vertices
(i.e., $|V(\Pgrid(s_1,t_1)) \cap V(\Pgrid(s_2,t_2)) \setminus T|$) is not larger
than the number of transversal intersections of $P_1$ and $P_2$.
\end{lemma}

\begin{proof}
For ease of notation, let $Q_i = \Pgrid(s_i,t_i)$ for $i=1,2$.
We show how to charge every vertex $x \in V(Q_1) \cap V(Q_2) \setminus T$
to a distinct transversal intersection $f(x)$ of $P_1$ and $P_2$.

Fix $x \in V(Q_1) \cap V(Q_2) \setminus T$ and assume $x \in \grid(v)$. 
By Lemma~\ref{lem:grid-cross}, $v$ is an in-2 out-1 vertex with incident edges $e_1$, $e_2$, 
and $e$ in this counter-clockwise order with $e$ being the unique edge with its tail in $v$.
By symmetry, assume that $e_i \in P_G(s_i,t_i)$ for $i=1,2$
and that $(s_1,t_1) \prec_e (s_2,t_2)$. 
However, as $e_i \in P_G(s_i,t_i)$, we have $s_1 \succ_e^\leftarrow s_2$, in particular $s_1 \neq s_2$. 
Hence, $(s_1,t_1) \prec_e (s_2,t_2)$ implies $t_1 \prec_e^\rightarrow t_2$, in particular $t_1 \neq t_2$.

Let $R$ be the maximal subpath of the intersection of $P_G(s_1,t_1)$ and $P_G(s_2,t_2)$ that contains $v$;
$R$ starts at $v$ and ends at a vertex $w \neq v$ ($R$ is of length at least one as
the first edge of $R$ is $e$). Note that $w$ is a nonterminal vertex as $t_1 \neq t_2$.
By the definition of $R$, $w$ is an in-1 out-2 vertex; let $e'$, $e_1'$, and $e_2'$ be the three
edges of $G$ incident with $w$ in this counter-clockwise order with $e'$ being the unique edge
with its head in $w$. Since $t_1 \prec^\rightarrow_e t_2$, it follows that
$t_1 \prec^\rightarrow_{e'} t_2$ and thus $e_i' \in E(P_G(s_i,t_i))$ for $i=1,2$. See Figure~\ref{fig:canonical-cross}.

Since for $i=1,2$ the paths $P_i$ and $Q_i$ share the same projection $P_G(s_i,t_i)$ in $G$,
$P_i$ traverses an edge of the bunch of $e_i$ and an edge of the bunch of $e_i'$.
Therefore, by the assumed counter-clockwise order of the edges around $v$ and $w$,
there exists a transversal intersection of $P_1$ and $P_2$ in $\grid(u)$ for some $u \in V(R)$. 
We denote this intersection by $f(x)$ and we charge $x$ to it.

Lemma~\ref{lem:grid-cross} asserts that $x$ is the only intersection of $V(Q_1)$ and $V(Q_2)$
in all grids $\grid(u)$ for $u \in V(R)$. Therefore our charging scheme is injective
and the lemma is proven.
\end{proof}

\begin{figure}[t]
\begin{center}
\includegraphics[width=0.9\linewidth]{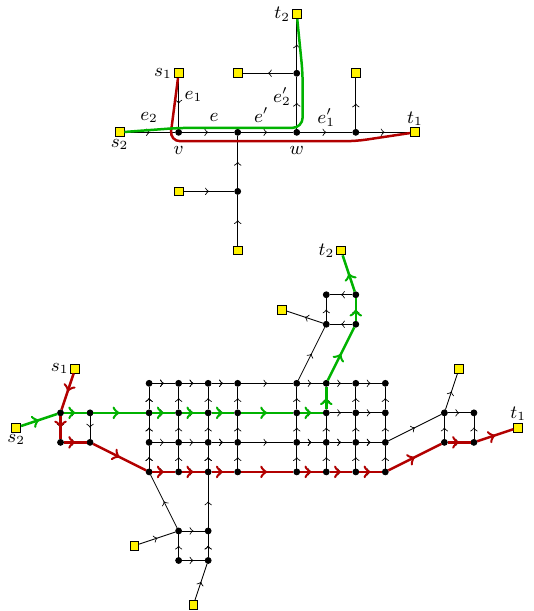}
\end{center}
\caption{The situation in the proof of Lemma~\ref{lem:uncross}. 
  The graph $G$ with paths $P_G(s_i,t_i)$ for $i=1,2$ is at the top
    and the graph $\Ggrid$ with paths $\Pgrid(s_i,t_i)$ for $i=1,2$ is at the bottom.}\label{fig:canonical-cross}
\end{figure}

\begin{corollary}\label{cor:uncross}
Let $\Wgrid_A$ be a solution to \tspname{} on $(\Ggrid,T,\mathcal{P})$ of minimum possible weight 
that visits every edge at most once and visits every nonterminal vertex at most twice. 
Then there exists a solution $\Wgrid_B$ to \tspname{} on $(\Ggrid,T,\mathcal{P})$ also of minimum possible
weight, is canonical, and the number of nonterminal vertices visited more than once on $\Wgrid_B$
is not larger than the number of transversal self-intersections of $\Wgrid_A$.
\end{corollary}
\begin{proof}
Let $W_A$ be the projection of $\Wgrid_A$ onto $G$. Since $\Wgrid_A$ is a solution to $(\Ggrid,T,\mathcal{P})$ of minimum possible weight,
$W_A$ is a solution to $(G,T)$ of minimum possible weight.
In particular, since every edge of $G$ is of positive weight, $W_A$ is locally short.

Let $\Wgrid_B$ be a canonical lift of $W_A$ to $\Ggrid$; that is, 
    if $(t_1,t_2,\ldots,t_{|T|})$ is the witnessing permutation of $W_A$
    then $\Wgrid_B$ is the concatenation of $\Pgrid(t_i,t_{i+1})$ for $1 \leq i \leq |T|$.
Clearly, $\Wgrid_B$ is of the same weight as $W_A$ and $\Wgrid_A$, so it is also a minimum weight
solution to \tspname{} on $(\Ggrid,T,\mathcal{P})$. 

For $1 \leq i \leq |T|$, let $P_i$ be the subwalk from $t_i$ to $t_{i+1}$ on $\Wgrid_A$
and let $Q_i = \Pgrid(t_i,t_{i+1})$. 
Note that since the projection of $P_i$ onto $G$ is $P_G(t_i,t_{i+1})$, $P_i$ is a simple
path in $\Ggrid$. 
From Lemma~\ref{lem:uncross} we infer that for every $1 \leq i < j \leq |T|$
the size of $V(Q_i) \cap V(Q_j) \setminus T$ is not larger than the number
of transversal intersections of $P_i$ and $P_j$. The statement follows.
\end{proof}

As already discussed, there exists a canonical walk $\Wgrid$ in $\Ggrid$ that is a
minimum weight solution to \tspname{} on $(\Ggrid,T,\mathcal{P})$.
Let $\Wgrid$ be such a canonical walk that minimizes the number of self-intersections, that is,
the number of nonterminal vertices that appear on $\Wgrid$ more than once (recall $\Wgrid$ is locally short since it is canonical). 
To finish that $(\Ggrid,T,\mathcal{P})$ is a canonical instance and finish the proof of Lemma~\ref{lem:input-to-blow-up}
it suffices to show that such a minimal $\Wgrid$ is cactuslike.

Assume the contrary; Figure~\ref{fig:cleaning} presents the thought process here.
Let $x \in \grid(v)$ be a nonterminal vertex visited more than once by $\Wgrid$.
Note that $x$ is visited by $\Wgrid$ exactly twice.
Let $W_1$ and $W_2$ be the result of splitting $\Wgrid$ at $x$.
Assume that $W_1$ and $W_2$ intersect at another vertex $x'$. Note that $x'$ needs to be a nonterminal vertex.

Let $\Wgrid_A$ be a closed walk in $\Ggrid$ that is created from $W_1$ and $W_2$ by splitting
them at $x'$. That is, we break $W_1$ and $W_2$ at $x'$ and concatenate them to obtain a single
closed walk $\Wgrid_A$. Note that $\Wgrid_A$ is visits every terminal once and its edge multiset is exactly the same
as the one of $\Wgrid$. In particular, it visits every edge of $\Ggrid$ at most once
and is also a minimum-weight solution to \tspname{} on $(\Ggrid,T,\mathcal{P})$. 

Let $\Wgrid_B$ be a canonical solution to \tspname{} on $(\Ggrid,T,\mathcal{P})$ obtained from Corollary~\ref{cor:uncross}
applied to $\Wgrid_A$.
Corollary~\ref{cor:uncross} asserts that the number of self-intersections of $\Wgrid_B$ is not larger
than the number of transversal self-intersections of $\Wgrid_A$. 
Observe that a nonterminal vertex is visited more than once by $\Wgrid_A$ if and only if it is a self-intersection
of $\Wgrid$ and, furthermore, $x$
is a self-intersection of $\Wgrid$ that is \emph{not} a transversal intersection of $\Wgrid_A$.
Consequently, the number of self-intersections of $\Wgrid$ is strictly larger
than the number of self-intersections of $\Wgrid_B$, contradicting the choice of $\Wgrid$.
This proves Property~\eqref{i:blow-up:canonical} of Definition~\ref{def:blow-up}
and thus finishes the proof of Lemma~\ref{lem:input-to-blow-up}.

\begin{figure}[t]
\begin{center}
\begin{tikzpicture}[scale=0.78]
\node (w1) at (0, 0) {$\Wgrid$};
\node (w2) at (5, 0) {$\Wgrid_A$};
\node (w3) at (10, 0) {$W_A$};
\node (w4) at (15, 0) {$\Wgrid_B$};
\begin{scope}[decoration={markings,mark=at position 0.5 with {\arrow[scale=2]{>}}}]
\draw[postaction={decorate}] (w1) -- node[above] {uncross at $x$ and $x'$}  (w2);
\draw[postaction={decorate}] (w2) -- node[above] {project onto $G$} (w3);
\draw[postaction={decorate}] (w3) -- node[above] {lift canonically} (w4);
\draw[postaction={decorate}, dashed] (w1) to[out=-30, in=-150] node[below,align=center,text width=4.4cm] {all self-intersections stay, $x$ stops to be transversal} (w2);
\draw[postaction={decorate}, dashed] (w2) to[out=-30, in=-150] node[below,align=center,text width=10cm] {Lemma~\ref{lem:uncross}, encapsulated in Corollary~\ref{cor:uncross}: the number of new self-intersections is at most the number of old \emph{transversal} self-intersections} (w4);
\end{scope}
\end{tikzpicture}
\end{center}
\caption{Thought process in Section~\ref{sec:cleaning}.}\label{fig:cleaning}
\end{figure}

\subsection{Properties of a cactuslike walk}\label{sec:reduced}
By Lemma~\ref{lem:input-to-blow-up}, we can concentrate on canonical instances and assume that there is a solution satisfying the properties in Definition~\ref{def:blow-up}. Our goal now is to show that every cactuslike walk can be decomposed into a small number of paths that interact with each other only in a limited way.
Moreover, for future use in the algorithm we will require that in a canonical instance
the paths in the decomposition belong to some small family of candidates.

To formalise the limited interaction between the paths, we need the following definition. 
A pair of paths $P$ and $Q$ are \emph{twisted} if the starting vertex $u$ of $P$ is the ending vertex of $Q$,
  the ending vertex $v$ of $P$ is the starting vertex of $Q$,
  and if $u = w_1,w_2,\ldots,w_\rho = v$ are the vertices of $V(P) \cap V(Q)$ in the order of their appearance
  on $P$, then they appear on $Q$ in the reversed order $v = w_\rho, w_{\rho-1}, \ldots, w_1 = u$. 

We can now state the decomposition lemma. 
We remark that the lemma below would be trivial if we could assume that the walk $W$ does not admit any self-intersections: then breaking $W$ into the subpaths between the terminals would clearly satisfy the conditions.

\begin{lemma}\label{lem:clean-decomposed}
Given a canonical instance $(\Ggrid,T,\mathcal{P})$ one can in polynomial time
compute a family $\mathcal{B}$ of $\Oh(|T|^{12})$ subpaths of canonical paths
such that the following holds. 
Every canonical cactuslike walk $W$ in $(\Ggrid,T,\mathcal{P})$ can be decomposed into $\ell<27|T|$ subpaths $B_1,\ldots,B_\ell$ such that the following conditions are satisfied:
\begin{enumerate}[(a)]
\item\label{p:imp} every path $B_i$ belongs to $\mathcal{B}$;
\item\label{p:cross} for every path $B_i$, either no other path $B_j$ visits an internal vertex of $B_i$,
or there exists a unique other path $B_j$ such that $B_i$ and $B_j$ are twisted;
\item\label{p:few} there are fewer than $27|T|$ self-intersections of $W$ that are not internal vertices of paths $B_i$.
\end{enumerate}
\end{lemma}
\begin{proof}
We initiate $\mathcal{B} = \mathcal{P}$, which is of size $\Oh(|T|^2)$.
The lemma follows trivially if $W$ does not contain any self-intersections, so assume otherwise.

Let $W=(e_1,\ldots,e_p)$. 
As $W$ is canonical, it is locally short; let $(t_1,\ldots,t_{|T|})$ be the witnessing permutation of the terminals, that is, $W$ is the concatenation of the simple paths $P_1,\ldots,P_{|T|}$ such that $P_i = \Pgrid(t_i,t_{i+1})$ for every $i$.
For each $j\in \{1,\ldots,|T|\}$ we choose index $\lambda_j\in \{1,\ldots,p\}$ such that $P_j$ is equal to the subwalk $(e_{\lambda_j+1},\ldots,e_{\lambda_{j+1}})$ of $W$.
We say that $(i,j)$ is a \emph{self-crossing} of $W$ if $i \neq j$ but the head of $e_i$ equals
the head of $e_j$. Note that since $(\Ggrid,T,\mathcal{P})$ is a canonical instance, $W$ visits every vertex at most twice,
every self-crossing happens at a nonterminal vertex that is visited twice by $W$ and corresponds
to a transversal intersection of two paths $P_i$. 

Create an auxiliary graph $H$ on vertex set $x_1,\ldots,x_p$, where $x_i$ can be thought of as a copy of the head of the edge $e_i$ (we also say that $x_i$ {\em{corresponds}} to the head of $e_i$).
In $H$, we put an edge between $x_i$ and $x_{i+1}$ for each $i=1,2,\ldots,p$ (where $x_{p+1}=x_1$), and moreover, for each self-crossing $(i,j)$ of $W$, we put an edge between $x_i$ and $x_j$. 
The latter edges, corresponding to self-crossings, are called {\em{internal}}.
Note that since each terminal is visited exactly once on $W$, vertices $x_{\lambda_1},\ldots,x_{\lambda_{|T|}}$ are the only vertices out of $x_1,\ldots,x_p$ that correspond to terminals.

\begin{claim}\label{cl:Houterplanar}
The graph $H$ is outerplanar and has an outerplanar embedding where the cycle $(x_1,\ldots,x_p)$ is the boundary of the outer face. Moreover, each vertex $x_{\lambda_j}$, for $j\in [|T|]$, has degree $2$ in $H$.
\end{claim}
\begin{proof}
To see that $H$ is a cycle with non-crossing chords, it suffices to show that there are no indices $i<i'<j<j'$ such that both $(i,j)$ and $(i',j')$ are self-crossings of $W$.
However, if this was the case, then the self-crossing $(i',j')$ would yield a crossing of the closed walks $W_1$ and $W_2$ obtained by splitting $W$ at the self-crossing $(i,j)$.
Since $W$ is cactuslike, this cannot happen.

To see that the vertex $x_{\lambda_j}$, corresponding to the terminal $t_j$, has degree $2$ in $H$, observe that otherwise $x_{\lambda_j}$ would be incident to some internal edge of $H$.
This means that $W$ would have a self-crossing at $t_j$, but $W$ visits each terminal at most once; a contradiction.
\cqed\end{proof}

Fix an outerplanar embedding of $H$ as in Claim~\ref{cl:Houterplanar}.
Let $S$ be a graph with vertex set consisting of the inner faces of $H$, where two faces are considered adjacent if and only if they share an edge in $H$.
Since $H$ is outerplanar and connected, it follows that $S$ is a tree.

Consider now any leaf $f$ of $S$. Then the boundary of $f$ consists of one edge of $H$ corresponding to some self-crossing $(i,j)$ of $W$, say at vertex $v$ of $G$, 
and a subpath $Q_f$ of the cycle $(x_1,\ldots,x_p)$ in $H$.
For leaves $f$ of $S$, the subpaths $Q_f$ are pairwise edge disjoint.

\begin{figure}[tb]
\begin{center}
\def\svgwidth{\textwidth}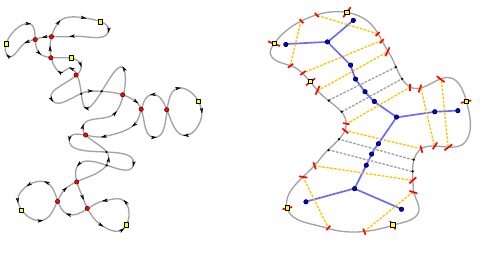
\end{center}
\caption{The original closed walk $W$ (left panel) and the outerplanar graph $H$ constructed based on $W$ (right panel); terminals and vertices visited by $W$ more than once have been named.
Terminals are depicted by yellow squares, the tree $S$ is depicted in blue,
the cycle $(x_1,\ldots,x_p)$ is depicted using solid gray edges, while dashed gray edges are the internal edges of $H$.
Note that the counterclockwise order of vertices on the outer face of $H$ corresponds to the order of visiting corresponding vertices by $W$; for instance, vertices $q,p,o$ appear in this order in $H$, because
they are visited in this order by $W$.
Special edges are colored orange, while red lines depict places where we put dividing points for defining blocks. 
They correspond to vertices depicted by red circles in the left panel, which are in the set $J$.}\label{fig:partition}
\end{figure}

\begin{claim}\label{cl:bound-leaves}
For each leaf $f$ of $S$, the subpath $Q_f$ contains at least one vertex $x_{\lambda_j}$, for some $j\in [|T|]$, as an internal vertex.
Consequently, the tree $S$ has at most $|T|$ leaves.
\end{claim}
\begin{proof}
For the first claim, observe that $Q_f$ corresponds to a closed subwalk $W_f$ of $W$ obtained by splitting $W$ at a self-crossing.
Observe that $W_f$ cannot be entirely contained in any of the paths $P_j$, since $W_f$ visits $v$ twice whereas a simple path
cannot visit any vertex more than once. Hence, $Q_f$ contains some vertex $x_{\lambda_j}$ as an internal vertex.
The second claim follows by noting that paths $Q_f$ are pairwise edge disjoint for different leaves $f$ of $S$, and there are $|T|$ vertices $x_{\lambda_j}$.
\cqed\end{proof}

Observe that in the duality of the outerplanar graph $H$ and the tree $S$, the edges of $S$ are the dual edges of the internal edges of~$H$.
By somehow abusing the notation, we identify each internal edge of $H$ with its dual edge in $S$.

We now define the set of {\em{special}} edges of the tree $S$ as follows.
First, for each vertex $f$ of $S$ of degree at least $3$ in $S$, we mark all edges incident to $f$ special.
Second, for each vertex $x_{\lambda_j}$, for $j\in [|T|]$, we find the unique index $h_j\in [p]$ such that none of vertices $x_{\lambda_j},\ldots,x_{h_j-1}$ is incident to any internal edges of $H$,
but $x_{h_j}$ is incident to such an edge (it exists as we assumed that $W$ has at least one self-intersection).
Then there is a unique special edge of $H$ that is incident both to $x_{h_j}$ and the internal face of $H$ on which $x_{\lambda_j}$ lies (this face is unique since
$x_{\lambda_j}$ has degree $2$ in $H$). We mark this internal edge special as well.

\begin{claim}\label{cl:bound-special}
There are less than $4|T|$ special edges in $S$.
\end{claim}
\begin{proof}
It is well known that in every tree with at most $k$ leaves, the total number of edges incident to vertices of degree at least $3$ is at most $3k-6$. Hence,
since $S$ has at most $|T|$ leaves by Claim~\ref{cl:bound-leaves}, less than $3|T|$ edges of $S$ were marked as special in the first step of marking.
In the second step of marking we mark one edge per each terminal, so the total upper bound of less than $4|T|$ follows.
\cqed\end{proof}

We divide the walk $W=(e_1,\ldots,e_p)$ into {\em{blocks}} as follows.
For any $i\in [p]$, declare $x_i$ a {\em{dividing point}} if either $x_i$ corresponds to a terminal (i.e. $i=\lambda_j$ for some $j\in [|T|]$), or $x_i$ is an endpoint of a special edge.
Then blocks are maximal subwalks of $W$ that do not contain any dividing points as internal vertices.
More precisely, the sequence $(e_{i+1},e_{i},\ldots,e_{i'})$ is a block if both $x_i$ and $x_{i'}$ are dividing points, but none of vertices $x_{i+1},\ldots,x_{i'-1}$ is a dividing point.
It is clear that blocks form a partition of $W$ into less than $9|T|$ subwalks, as there are less than $9|T|$ dividing points by Claim~\ref{cl:bound-special}.
Let $A_1,\ldots,A_r$ be the obtained blocks; we have $r < 9|T|$.
We now establish a number of properties of the obtained blocks.

\begin{claim}\label{cl:bound-cross}
For every block $A_i$, the internal vertices of $A_i$ are not endpoints of any other path $A_j$
and can be internal vertices of at most one other path $A_j$.
\end{claim}
\begin{proof}
Let $D_i$ be the subpath of the cycle $(x_1,\ldots,x_p)$ in $H$ that corresponds to the block $A_i$, for $i=1,\ldots,\ell$.
Note that every intersection of paths $A_i$ and $A_{j}$ at an internal vertex of $A_i$, for $i\neq {j}$, is also a self-crossing of $W$ that corresponds to an internal edges of $H$ that connects
an internal vertex of $D_i$ with a vertex of $D_{j}$.
Fix now some block $A_i$; we will argue that there is at most one other block $A_{j}$ such that $A_i$ and $A_{j}$ intersect at an internal vertex of $A_i$ and every such intersection happens at an internal vertex of $A_{j}$. This would prove the claim.

Let us consider the forest obtained by removing every special edges from the tree $S$. Observe that every connected component of this forest either
\begin{itemize}
\item consists of one vertex being a leaf of $S$, or
\item consists of one vertex of degree at least $3$ in $S$, or
\item is a path (possibly of length $0$) consisting only of vertices of degree $2$ in $S$.
\end{itemize}
This is because any edge incident to a leaf of $S$ is always marked as special by Claim~\ref{cl:bound-leaves}.
By the construction of blocks,  the set of internal faces of $H$ incident to the edges $D_i$ can be spanned by a subtree of $S$
that does not contain any special edge. Consequently, either all the edges of $A_i$ are incident to the same internal face of $H$ (and hence they form an interval on its boundary), or there is a path $R$ in $S$,
consisting only of vertices of degree $2$ connected by non-special edges, such that all the edges of $D_i$ are incident to the faces on this path.
In the former case, $A_i$ does not intersect any other block $A_j$ at an internal vertex of $A_i$, as all internal vertices of $D_i$ have degree $2$ in $H$.
In the latter case, it is easy to see that all the edges of the cycle $(x_1,\ldots,x_p)$ that are incident to some non-endpoint face of $R$ but do not lie on $D_i$, 
are in fact in the same subpath $D_{j}$ for some $j\neq i$.
Then all internal edges of $H$ incident to the internal vertices of $D_i$ have the second endpoint on $D_{j}$, so $A_{j}$ is the only block that may intersect $A_i$ at an internal vertex of $A_i$.
Furthermore, note that since the endpoints of $R$ are of degree $2$ in $S$ and every vertex of $H$ is incident with
at most one internal edge (since every vertex of $\Ggrid$ is visited at most twice by $W$), all 
internal edges of $H$ incident to the internal vertices of $D_i$ have the second endpoint in an internal vertex of $D_{j}$.
\cqed\end{proof}

\begin{claim}\label{cl:bound-few}
There are fewer than $9|T|$ self-intersections of $W$ that are not internal vertices of paths $A_i$.
\end{claim}
\begin{proof}
Observe first that since every nonterminal vertex of $\Ggrid$
lies on at most two canonical paths, every $x_i$ is incident with at most one internal edge. 
Furthermore, self-crossings of $W$ that are not crossings of two distinct blocks are exactly those self-crossings $(i,i')$ for which either $x_i$ or $x_{i'}$ is a dividing point.
Since there are less than $9|T|$ dividing points, the claim follows.
\cqed\end{proof}

Define a set $J \subseteq V(\Ggrid)$ as follows. Start from $J = T$.
Next, consider every quadruple of terminals $s_1,t_1,s_2,t_2$, where $s_1\neq t_1$, $s_2\neq t_2$, $s_1\neq s_2$, and $t_1\neq t_2$
and insert into $J$ the first intersection on $\Pgrid(s_1,t_1)$ of $\Pgrid(s_1,t_1)$ and $\Pgrid(s_2,t_2)$ (if it exists).
Clearly, $|J| \leq |T|^4$.

\begin{claim}\label{cl:get-imp}
Suppose we have indices $1\leq j,j'\leq |T|$, $j\neq j'$. Suppose further on the subpath of $(x_{\lambda_j},x_{\lambda_j+1},\ldots,x_{\lambda_{j+1}})$, vertex $x_{k}$ is the first one that is adjacent in $H$ to any of the vertices
$x_{\lambda_{j'}},x_{\lambda_{j'}+1},\ldots,x_{\lambda_{j'+1}}$ via an internal edge of $H$. Then $x_k$ corresponds to an element of $J$.
\end{claim}
\begin{proof}
It can be easily seen that if $x_k$ corresponds to a vertex $v$, then $v$ is included in the set $J$ when considering the quadruple of terminals $(t_j,t_{j+1},t_{j'},t_{j'+1})$.
\cqed\end{proof}

\begin{figure}[tb]
\begin{center}
\def\svgwidth{0.5\textwidth}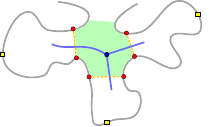
\end{center}
\caption{Situation in the case when $e$ was marked due to being incident to an internal face of $H$ of degree at least $3$ in $S$.}\label{fig:imp-cross}
\end{figure}

\begin{claim}\label{cl:bound-imp}
Every path $A_i$ has both endpoints in $J$. 
\end{claim}
\begin{proof}
We proceed with verification that all the dividing points used in the definition of blocks correspond to vertices of $J$.
This is done explicitly for terminals, so we are left with verifying this for endpoints of special edges.
Suppose that an internal edge $e=(x_i,x_{i'})$ of $H$ is special. Then $x_i$ and $x_{i'}$ correspond to the same vertex $v$ of $G$ such that $(i,i')$ is a self-crossing of $W$ at $v$.
We have two cases, depending on why $e$ was marked as special.

Suppose first that $e$ was marked as special due to being incident to some internal face $f$ of $H$ of degree at least $3$ in $S$; see Figure~\ref{fig:imp-cross}.
This means that in $S$, $f$ has at least two other incident edges, and suppose $e^1$ and $e^2$ are the edge incident to $f$ that are directly preceding and succeeding $e$ in the counter-clockwise order of edges
of $S$ incident to $f$; here, we assume that the cycle $(x_1,\ldots,x_p)$ is oriented counter-clockwise in the plane.
Further, suppose without loss of generality that $e^1,x_i,e,x_{i'},e^2$ are in this counter-clockwise order on the boundary of face $f$.
Now, let $j^1\in [|T|]$ be such that on the subpath $(x_{\lambda_{j^1}},x_{\lambda_{j^1}+1},\ldots,x_i)$ no internal vertex corresponds to a terminal,
and similarly let $j^2\in [|T|]$ be such that on the subpath $(x_{i'},x_{i'+1},\ldots,x_{\lambda_{j^2}})$ no internal vertex corresponds to a terminal.
Observe that since each leaf $f'$ of $S$ has a vertex corresponding to a terminal among internal vertices of $Q_f$ (Claim~\ref{cl:bound-leaves}),
vertices $x_{\lambda_{j^1}}$, $x_{\lambda_{j^2-1}}$, and $x_{\lambda_{j^2}}$ lie on the following parts of the cycle $(x_1,\ldots,x_p)$:
\begin{itemize}
\item denoting $e^1=x_{r_1^1}x_{r^1_2}$, where $x_{r_1^1}$ $x_{r_2^1}$, and $x_i$ lie in this order on $(x_1,\ldots,x_p)$, we have that $x_{\lambda_{j^1}}$ is an internal vertex of $(x_{r_1^1},\ldots,x_i)$;
\item $x_{\lambda_{j^2-1}}$ is an internal vertex of $(x_i,\ldots,x_{i'})$; and
\item denoting $e^2=x_{r_1^2}x_{r^2_2}$, where $x_{i'}$, $x_{r_1^2}$, and $x_{r_2^2}$ lie in this order on $(x_1,\ldots,x_p)$, we have that $x_{\lambda_{j^2}}$ is an internal vertex of $(x_{i'},\ldots,x_{r_2^2})$;
\end{itemize}
In particular, all the vertices  $x_{\lambda_{j^1}}$, $x_{\lambda_{j^2-1}}$, and $x_{\lambda_{j^2}}$ are pairwise different,
and moreover $e$ is the internal edge of $H$ connecting $(x_{\lambda_{j^1}},x_{\lambda_{j^1}+1},\ldots,x_{\lambda_{j^1+1}})$ with $(x_{\lambda_{j^2-1}},$ $x_{\lambda_{j^2-1}+1},\ldots,x_{\lambda_{j^2}})$ that has the earliest possible endpoint on
the former path. The fact that $v \in J$ follows from applying Claim~\ref{cl:get-imp} to $j=j^1$ and $j'=j^2-1$.

Suppose now, without loss of generality, that $e$ was marked special due to the following situation: $i=h_j$ for some terminal $t_j$, and $e$ is the unique edge incident to $x_i$ that is also incident to the
internal face $f$ of $H$ on whose boundary lies $x_{\lambda_j}$.
Then on the subpath $(x_{\lambda_j},x_{\lambda_{j}+1},\ldots,x_i)$, all vertices have degree $2$ in $H$, apart from $x_i$ itself, so in particular they are not incident to any internal edge of $H$.
Suppose now that $j'\in [|T|]$ is such that on the subpath $(x_{\lambda_{j'}},x_{\lambda_{j'}+1},\ldots,x_{i'})$ no internal vertex corresponds to a terminal. By Claim~\ref{cl:bound-leaves} it is easy to see that
$j\neq j'$. Moreover, from the previous observation it follows that $x_i$ is the earliest vertex on $(x_{\lambda_j},x_{\lambda_{j}+1},\ldots,x_{\lambda_{j+1}})$ that is adjacent to any vertex of 
$(x_{\lambda_{j'}},x_{\lambda_{j'}+1},\ldots,x_{\lambda_{j'+1}})$ via an internal edge of $H$, because the earlier vertices were not incident to any internal edges at all.
The fact that $v \in J$ follows from applying Claim~\ref{cl:get-imp} to $j$ and $j'$.
\cqed\end{proof}

\begin{claim}\label{cl:compute-J}
One can compute a family $\mathcal{A}$ of $\Oh(|T|^6)$ subpaths of paths in $\mathcal{P}$
that contains all blocks $A_i$. 
\end{claim}
\begin{proof}
Recall that each path $A_i$ is a subpath of a path $\Pgrid(s,t)$ with endpoints in $J$.
Since every nonterminal vertex in $\Ggrid$ participates in at most two canonical paths, an element of $J$
on $\Pgrid(s,t)$ is either an endpoint or an intersection with some other canonical path $\Pgrid(s',t')$ that is
either first on $\Pgrid(s,t)$ or first on $\Pgrid(s',t')$. There are at most $|T|^2$ choices for $(s,t)$
and, given $(s,t)$, $\Oh(|T|^2)$ choices for each of the endpoints of a path $A_i$.
The claim follows. 
\cqed\end{proof}

Having established the properties of the block $A_i$, we now show how to partition them into the desired paths $B_i$. 
If a block $A_i$ does not intersect any other block $A_j$ at an internal vertex, we leave $A_i$ untouched.

Consider a pair of blocks $A_i$ and $A_{j}$ that intersect at a vertex that is internal to both $A_i$ and $A_{j}$
(cf. Claim~\ref{cl:bound-cross}). 
Let $A_i$ be a subpath of $\Pgrid(s_1,t_1)$ and let $A_{j}$ be a subpath of $\Pgrid(s_2,t_2)$. 

Let $x$ and $y$ be two intersections of $A_i$ and $A_{j}$ such that $y$ is later on $A_i$ than $x$. 
By Lemma~\ref{lem:grid-cross}, $x \in \grid(v)$ and $y \in \grid(u)$ for two distinct vertices $v$ and $u$ of $G$.
If $y$ is also later on $A_{j}$ than $x$, then the uniqueness of shortest paths in $G$ implies that
the intersection of $P_G(s_1,t_1)$ and $P_G(s_2,t_2)$ contains $P_G(v,u)$. This is a contradiction
with Lemma~\ref{lem:grid-cross} and the fact that $A_i$ and $A_{j}$ intersect in $y \in \grid(u)$.

Hence, if $x_1,x_2,\ldots,x_\rho$ are the intersections of $A_i$ and $A_{j}$ in the order of their appearance on $A_i$,
then they appear in $A_{j}$ in the reversed order $x_\rho, x_{\rho-1}, \ldots, x_1$. 
We split $A_i$ and $A_{j}$ into three paths each (two if $\rho = 1$) at $x_1$ and $x_\rho$. 
Observe that if $\rho > 1$, then the middle parts of $A_i$ and $A_{j}$ are twisted. See Figure~\ref{fig:untwist}.

Let $B_1,B_2,\ldots,B_\ell$ be the obtained paths (i.e., all blocks $A_i$ that do not intersect any other block at an internal
vertex and the at most three subpaths of a block $A_i$ obtained as above otherwise). 
Since every block $A_i$ is split into at most three paths $B_i$ and there are less than $9|T|$ block, we have $\ell < 27|T|$. 
Also, Claim~\ref{cl:bound-few} immediately implies Point~\ref{p:few}
while Claim~\ref{cl:bound-cross} with the construction above implies Point~\ref{p:cross}.

Finally, to compute the set $\mathcal{B}$ that contains all paths $B_i$, proceed as follows.
Compute the family $\mathcal{A}$ from Claim~\ref{cl:compute-J} and initiate $\mathcal{B} = \mathcal{A}$.
Then, for every pair $A \in \mathcal{A}$ and every $A' \in \mathcal{A}$ that intersects $A$ at an internal vertex,
split $A$ into at most three parts at the first and last intersection with $A'$ (first and last refer to the order on $A$)
and insert the parts into $\mathcal{B}$. Clearly, $|\mathcal{B}| = \Oh(|T|^{12})$ and $\mathcal{B}$ contains all paths $B_i$.
\end{proof}

\begin{figure}[tbp]
\begin{center}
\includegraphics[scale=1]{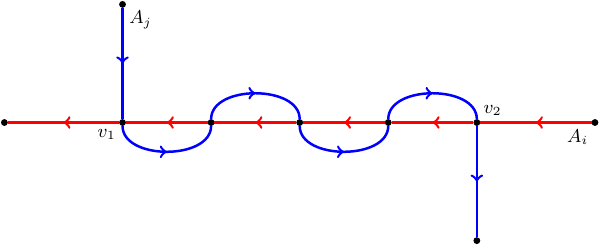}
\end{center}
\caption{Construction of paths $B_i$ at the end of the proof of Lemma~\ref{lem:clean-decomposed}.
  For every two blocks $A_i$ and $A_j$ that intersect (red and blue paths in the figure), we split
    each of them into at most three subpaths at the first and last intersection.}
      \label{fig:untwist}
\end{figure}

\subsection{Enumerating subsets of a walk}\label{sec:enum}
\newcommand{\goodfam}{\mathcal{F}}

Let $(\Ggrid,T,\mathcal{P})$ be a canonical instance. 
Our main technical result, proved in this section, is that any canonical cactuslike walk can be hierarchically decomposed using closed curves of ``complexity'' $|T|^{\Oh(\sqrt{|T|})}$. 
We first formalize what we mean by a decomposition.

\begin{definition}
Let $\alpha > 0$ be a fixed constant.
Let $(G,T)$ be a \tspname{} instance. Let $W$ be a walk that visits every terminal exactly once and let $\pi_W = (t_1,t_2,\ldots,t_{|T|})$ be a witnessing permutation.
A set $A \subseteq T$ is an \emph{$\alpha$-good section} of $(W,\pi_W)$ if $A$ can be partitioned into at most $\alpha \sqrt{|T|}$ subsets that form contiguous subsequences of $\pi_W$.

An \emph{$\alpha$-good decomposition} of $W$ and $\pi_W$ is a pair $(\tree, \beta)$ where $\tree$ is a rooted binary tree and $\beta \colon V(\tree) \to 2^T$ is a function with the following properties:
\begin{enumerate}[(1)]
\item\label{p:good} $\beta(v)$ is an $\alpha$-good section of $(W,\pi_W)$ for every $v \in V(\tree)$;
\item\label{p:root} $\beta(r) = T$ for the root $r$ of $\tree$;
\item\label{p:chld} every non-leaf node $v$ of $\tree$ has two children $v_1,v_2$ with $\beta(v_1) \cap \beta(v_2) = \emptyset$, $\beta(v) = \beta(v_1) \cup \beta(v_2)$;
\item\label{p:leaf} every leaf node $v$ of $\tree$ satisfies $|\beta(v)| \leq \alpha \sqrt{|T|}$.
\end{enumerate}
\end{definition}
Note that both $T$ and every set $A \subseteq T$ of size at most $\alpha \sqrt{|T|}$ is always a good section, regardless of the choice of $W$ and $\pi_W$.

The following section shows that if an optimum solution admits a good decomposition where every set $\beta(s)$ belongs to a known family $\mathcal{F}$, then an optimum solution can be computed efficiently using a dynamic programming algorithm. 
The main result of this section shows that existence of such a good decomposition and family $\mathcal{F}$:
\begin{lemma}\label{lem:enum}
There exists a universal constant $C > 0$ such that the following holds.
Given a canonical instance $(\Ggrid,T,\mathcal{P})$,
one can in time $|T|^{\Oh(\sqrt{|T|})} n^{\Oh(1)}$ compute a family $\goodfam \subseteq 2^T$ of size $|T|^{\Oh(\sqrt{|T|})}$ such that
for every canonical cactuslike walk $W$ and its witnessing permutation $\pi_W$, there exists a $C$-good decomposition $(\tree,\beta)$ of $(W,\pi_W)$
such that every set $\beta(s)$ for $s \in V(\tree)$ belongs to $\goodfam$.
\end{lemma}

The rest of this section is devoted to the proof of Lemma~\ref{lem:enum}. 
Fix the walk $W$ as in the statement.

For every terminal $t$ fix a face $f_t$ incident with $t$. 
Let $D_0$ be a minimal tree in the dual of $\Ggrid$ that spans all faces $f_t$. 
We augment the graph $\Ggrid$ by adding $D_0$ to it as follows.
First, we add $V(D_0)$ to $\Ggrid$. For every edge $f_1f_2 \in E(D_0)$ that crosses an edge $e$ of $\Ggrid$,
we subdivide $e$ with a vertex $z_e$ (distributing the weight of $e$ among two parts arbitrarily) and add arcs $(f_1,z_e)$ and $(f_2,z_e)$ of weight $+\infty$ each.
Finally, for every $t \in T$, we add an arc $(f_t,t)$ of weight $+\infty$
and proclaim $\Tedges = \{(f_t,t)~|~t \in T\}$ the set of terminal edges.
By sligtly abusing the notation, we keep the name $\Ggrid$ for the modified graph. Let $D$ be the subgraph of $\Ggrid$ consisting of all edges of weight $+\infty$;
note that $D$ (without directions of arcs, which are in fact irrelevant) is a tree spanning all terminals and every terminal is a leaf of $D$. 
Finally, by interpreting the canonical paths to the modified $\Ggrid$ in the natural way, we have that $(\Ggrid,T,\mathcal{P})$ is still a canonical instance;
here the essential observation is that the new edges of weight $+\infty$ do not change the structure of the shortest paths in $\Ggrid$.
Intuitively, the purpose of $D$ is to control the homotopy types of closed curves in the plane punctured at the terminals, by examining how they cross with $D$.

We apply Lemma~\ref{lem:clean-decomposed}, obtaining a family $\mathcal{B}$ of $\Oh(|T|^{12})$ subpaths
of the paths in $\mathcal{P}$ such that the walk $W$ can be decomposed into $\ell = \Oh(|T|)$
paths $B_1,B_2,\ldots,B_\ell$, all belonging to $\mathcal{B}$.
Let $\Qq = \{B_1,B_2,\ldots,B_\ell\}$.

\paragraph{Graphs $H$, $H^\times$, and an sc-branch decomposition.}
We define a subgraph $H$ of $\Ggrid$ as the union of $D$
and all paths from $\Qq$.

Although $H$ is a plane graph, it can have an unbounded number of vertices and potentially large branchwidth. 
Let $H^\times$ be the graph obtained from $H$ by contracting, for every $Q \in \Qq$, all internal vertices of $Q$ into one vertex $u_Q$.
Thus, $Q$ gets contracted into a path $Q^\times$ consisting of two edges and three vertices: the former
endpoints and $u_Q$.
Recall that since the paths of $\Qq$ are vertex-disjoint except for possibly having common endpoints and pairs of twisted paths, the contractions on different paths $Q\in \Qq$ do not interfere with each other, except for the pair of twisted paths. However, if $Q_1$ and $Q_2$ are twisted,
then $u_{Q_1} = u_{Q_2}$ and $Q_1^\times$ and $Q_2^\times$ are two paths on the same three vertices.

Furthermore, since we contract only edges of paths of $\Qq \subseteq \mathcal{B}$, we do not contract any edge of $D$. The edges of $D$ are still present in $H^\times$, but $D$ may no longer be a tree.

We have the following bound.
\begin{claim}\label{cl:Hx-decomp}
The graph $H^\times$ admits an sc-branch decomposition $(\tree, \brleaf, \brnoose)$ of width $\Oh(\sqrt{|T|})$.
\end{claim}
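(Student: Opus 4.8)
The plan is to bound the branchwidth of $H^\times$ by $\Oh(\sqrt{|T|})$; an sc-branch decomposition of that width then follows immediately from Theorem~\ref{thm:scbranch} (applied to each connected component of $H^\times$ separately and then combined along a path of fresh internal nodes whose middle sets are empty and realised by nooses lying inside faces --- the disconnected case is routine and I would not dwell on it). To bound $\mathrm{bw}(H^\times)$ I would show that after suppressing all vertices of degree $2$, the graph $H^\times$ has only $\Oh(|T|)$ vertices, and then invoke Theorem~\ref{thm:planar-branch}.

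The heart of the argument is a classification of the vertices of $H^\times$. Each of them is either one of the $|\Pp^\bullet| = \Oh(|T|)$ vertices $u_Q$, or a vertex of $H$ that is not an internal vertex of any path of $\Pp^\bullet$. A vertex $v$ of the second kind either lies in $S^\bullet$ (the set of endpoints of the paths of $\Pp^\bullet$, which has size $\Oh(|T|)$), or lies on $\grW^\bullet$ and on no path of $\Pp^\bullet$ at all; for the latter I would use that the paths of $\Pp^\bullet$ are pairwise vertex-disjoint except at their endpoints, and that no path of $\Pp^\bullet$ --- being a subpath of some $B_i$, hence of the $T$-short walk $W$ --- can pass through a vertex of $V_F$, since such vertices are incident in $G$ only to arcs of weight $+\infty$ (never traversed by $W$), nor through a terminal except at its endpoints (Lemma~\ref{lem:clean-decomposed}). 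For such a $v$ the edges of $H^\times$ incident to $v$ are exactly the images of the edges of $\grW^\bullet$ incident to $v$ --- contracting a path $Q$ can only move the far endpoint of such an edge onto $u_Q$, never delete it --- so $\deg_{H^\times}(v) = \deg_{\grW^\bullet}(v)$. Hence every vertex of $H^\times$ of degree $\neq 2$ is one of the $u_Q$'s, or lies in $S^\bullet$, or is a vertex of degree $\neq 2$ of $\grW^\bullet$; and since $\grW^\bullet$ is a tree whose leaves are exactly the $|T|$ terminals and whose vertices of degree at least $3$ lie among the branch vertices of $\tree^\ast$ and the faces $f_t$ --- at most $2|T|$ in total, by minimality of $\tree^\ast$ --- there are altogether only $\Oh(|T|)$ vertices of degree $\neq 2$ in $H^\times$. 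The point of the contraction defining $H^\times$ is precisely this: the $\Omega(n)$-many crossings of the paths of $\Pp^\bullet$ with $\grW^\bullet$ are all internal to those paths and therefore get absorbed into the $\Oh(|T|)$ vertices $u_Q$.

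With this in hand, let $\bar H$ be obtained from $H^\times$ by exhaustively suppressing degree-$2$ vertices (replacing each by the single edge through it, deleting any resulting loop) and setting aside the connected components that are cycles --- a cycle has branchwidth at most $2$, so these are harmless. By the previous paragraph $\bar H$ is a planar multigraph on $\Oh(|T|)$ vertices, so by Theorem~\ref{thm:planar-branch} (parallel edges do not push branchwidth above $2$) we get $\mathrm{bw}(\bar H) = \Oh(\sqrt{|T|})$. Each non-cyclic connected component of $H^\times$ is a subdivision of the corresponding component of $\bar H$, and subdividing edges does not increase branchwidth except that a value below $2$ may rise to $2$; hence $\mathrm{bw}(H^\times) = \Oh(\sqrt{|T|})$, and applying Theorem~\ref{thm:scbranch} componentwise finishes the proof. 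The step I expect to be the main obstacle is the vertex classification: one must check carefully that contracting the internal vertices of the paths of $\Pp^\bullet$ really does collapse every crossing of such a path with $\grW^\bullet$ into one of the $\Oh(|T|)$ vertices $u_Q$ (in particular that a crossing cannot sit on a path of $\Pp^\bullet$ other than as an endpoint, where it is already counted in $S^\bullet$), and that branch vertices and leaves of $\grW^\bullet$ are never internal to a path of $\Pp^\bullet$ --- both facts resting on $\grW$ consisting of arcs of weight $+\infty$ that the $T$-short walk $W$ cannot use.
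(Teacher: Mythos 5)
Your proof is correct and follows essentially the same route as the paper: suppress degree-$2$ vertices, observe that the surviving vertices (contracted vertices $u_Q$, path endpoints, and branch/leaf vertices of $\grW^\bullet$) number only $\Oh(|T|)$, invoke Theorem~\ref{thm:planar-branch} and then Theorem~\ref{thm:scbranch}, and transfer the decomposition back to $H^\times$. The only difference is cosmetic: the paper simply notes that $H^\times$ is connected (so your componentwise patching is unnecessary), and your explicit accounting of the degree-$\geq 3$ vertices of $\grW^\bullet$ is in fact slightly more careful than the paper's one-line vertex classification.
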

\begin{proof}
First, note that $H^\times$ is connected.
By Theorems~\ref{thm:planar-branch} and~\ref{thm:scbranch}, it suffices to show only that $H^\times$ has $\Oh(|T|)$ vertices of degree at least $3$.
To this end, note that every vertex of $H^\times$ of degree at least 3 is either a vertex of $D$ of degree at least 3 or one of the three vertices of the contracted path $Q^\times$ for some $Q \in \Qq$
(note that every terminal falls into the latter case).
The claim follows from the fact that $|\Qq| = \Oh(|T|)$ and that $D$ has less than $|T|$
vertices of degree at least $3$.
\cqed\end{proof}
Let $(\tree,\brleaf,\brnoose)$ be the sc-branch decomposition of $H^\times$ given by Claim~\ref{cl:Hx-decomp}. Our goal is to show that this sc-branch decomposition can be turned into an appropriate good decomposition of the solution $W$. For this purpose, for every noose $\gamma$ appearing in the sc-branch decomposition, we have to show two main properties:
\begin{enumerate}
\item[(1)] The terminals inside $\gamma$ appear on $O(\sqrt{|T|})$ contigious subseqences of the solution $W$ (see Claims \ref{cl:gamma-touched} and \ref{cl:good}).
  \item[(2)] We can compute a set $\mathcal{F}$ such that the set of terminals inside $\gamma$ appears in $\mathcal{F}$ (see Claim~\ref{cl:gamma-enum}).
\end{enumerate}
After establishing these two properties, Lemma~\ref{lem:enum} follows in a straightforward way.
 
\paragraph{Branch decomposition gives good decomposition.}
We now show that a good decomposition of $W$ can be inferred from the sc-branch decomposition $(\tree, \brleaf, \brnoose)$.
Root the tree $\tree$ at an arbitrary leaf $r$ such that $\brleaf(r)$ is not of the form $(f_t,t)$ for a terminal $t$
and define $\beta(r) = T$.
For every node $s \in V(\tree) \setminus \{r\}$ with a parent edge $f$, 
we define $\beta(s)$ as follows. Let $\{\Tedges_1,\Tedges_2\}$ be the partition of the terminal edges induced by $\brnoose(f)$
and assume that the side of $\brnoose(f)$ that contains $\brleaf(r)$ corresponds to the set $\Tedges_2$.
Then, we put $\beta(s) = \{t \in T~|~(f_t,t) \in \Tedges_1\}$.

We now verify that $(\tree,\beta)$ is a good decomposition of $W$ (formally, after removing the root in order to have a binary tree).

Let $\gamma$ be a noose with respect to $H^\times$. A path $Q \in \Qq$ is \emph{touched}
by $\gamma$ if $\gamma$ visits $u_Q$ or one of the endpoints of $Q$. 
We denote the set of paths touched by $\gamma$ by $\Qq_\gamma \subseteq \Qq$ and observe the following.
\begin{claim}\label{cl:gamma-touched}
If a noose $\gamma$ with respect to $H^\times$ visits at most $r$ vertices of $H^\times$,
   then there are at most $4r$ paths touched by $\gamma$.
\end{claim}
\begin{proof}
Recall that every vertex of $\Ggrid$ lies on at most two canonical paths, and hence $W$
visits every vertex of $\Ggrid$ at most twice. Consequently, every vertex $v$ of $H^\times$
that is not of the form $u_{Q_0}$ for some $Q_0 \in \Qq$ lies on at most four paths of $\Qq$:
for every of the at most two visits of $W$ in $v$, $v$ is either an internal vertex of some path in $\Qq$
or an ending vertex of one path in $\Qq$ and a starting vertex of another path in $\Qq$.
Furthermore, a vertex $u_{Q_0}$ for some $Q_0 \in \Qq$ lies on exactly two paths $Q^\times$ 
if $Q_0$ is twisted with another path, and on only one path $Q_0^\times$ otherwise.
We infer that $|\Qq_\gamma| \leq 4r$, as desired.
\cqed\end{proof}

With Claim~\ref{cl:gamma-touched} at hand, we can now verify the properties of a good decomposition.
\begin{claim}\label{cl:good}
$(\tree-r, \beta)$ is a $C$-good decomposition of $W$ and $\pi_W$ for sufficiently
large universal constant $C$.
\end{claim}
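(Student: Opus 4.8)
The plan is to verify the four defining conditions of a good decomposition for $(\tree-r,\beta)$ one by one. Since $r$ is a leaf of the ternary tree $\tree$, removing it drops the degree of its unique neighbour $s_0$ to $2$ while every other internal node keeps degree $3$; hence $\tree-r$, rooted at $s_0$, is a rooted binary tree, and $\beta$ is already defined on $V(\tree-r)=V(\tree)\setminus\{r\}$ (each such $v$ has a parent edge $f$ of $\tree$ when $\tree$ is rooted at $r$; for $v=s_0$ this edge is $rs_0$).

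The key bookkeeping fact I would establish first is that every terminal arc $(t,w_t)$ survives, as a single edge $e_t$ incident to $t$, into $H^\times$: it is the first or last arc of some block $B_i$ of Lemma~\ref{lem:clean-decomposed}, so it lies on a path of $\Pp^\bullet$ having $t$ as an endpoint, and the contraction producing $H^\times$ contracts only \emph{internal} vertices of such paths, so neither $t$ nor $e_t$ is affected; moreover $e_t\neq e_{t'}$ for $t\neq t'$. Unwinding the definition of $\beta$, this gives the description: for $v\neq r$ with parent edge $f$, a terminal $t$ lies in $\beta(v)$ if and only if $e_t$ lies on the side of $\brnoose(f)$ not containing $\brleaf(r)$, equivalently, if and only if the leaf $\brleaf^{-1}(e_t)$ lies in the component of $\tree-f$ containing $v$. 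With this description, condition~\eqref{p:chld} reduces to a statement about trees: if $v$ is internal in $\tree-r$ with children $v_1,v_2$ reached by edges $f_1,f_2$, then the set of leaves of $\tree$ in the $v$-side component of $\tree-f$ is the disjoint union of the sets of leaves in the $v_i$-side component of $\tree-f_i$, $i=1,2$; intersecting with the leaves of the form $\brleaf^{-1}(e_t)$ yields $\beta(v_1)\cap\beta(v_2)=\emptyset$ and $\beta(v)=\beta(v_1)\cup\beta(v_2)$.

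For condition~\eqref{p:root} I would compute $\beta(s_0)$ directly: its parent edge is $rs_0$, and $\brnoose(rs_0)$ encloses, among the edges of $H^\times$, only $\brleaf(r)$, which by the choice of $r$ is not of the form $(t,w_t)$ and hence is distinct from every $e_t$; so no $e_t$ lies on the $\brleaf(r)$-side, and therefore $\beta(s_0)=T$. Condition~\eqref{p:good} then follows for each $v\in V(\tree-r)$: writing $\{T_1',T_2'\}$ for the partition arc-induced by the parent noose $\brnoose(f)$ of $v$ and $(T_0,\{T_1,T_2\})$ for the tri-partition it induces, the observation recorded just before this claim gives $T_1\subseteq\beta(v)=T_1'\subseteq T_0\cup T_1$ (with the convention that $T_1,T_1'$ refer to the side not containing $\brleaf(r)$), so Claim~\ref{cl:gamma-few-cuts} applied with $T'=\beta(v)$ shows $\beta(v)$ is a good section of $(W,\pi_W)$. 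Finally, for condition~\eqref{p:leaf}: if $v$ is a leaf of $\tree-r$ with parent edge $f$, then $\brnoose(f)$ encloses only the single edge $\brleaf(v)$, so the only $e_t$ it can enclose is $\brleaf(v)$ itself; hence $|\beta(v)|\leq 1\leq C\sqrt{|T|}$.

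I expect the only delicate point to be the description of $\beta$ underlying condition~\eqref{p:chld}: correctly threading the chain of identifications between a side of the planar noose $\brnoose(f)$, the bipartition of $E(H^\times)$ induced by the tree edge $f$, the component structure of $\tree-f$, and the position of the contracted image $e_t$ of the terminal arc $(t,w_t)$ — in particular, pinning down that each $e_t$ is a well-defined edge of $H^\times$. Once this dictionary is in place, conditions~\eqref{p:root} and~\eqref{p:leaf} are one-line computations, and condition~\eqref{p:good} is a direct application of Claim~\ref{cl:gamma-few-cuts}.
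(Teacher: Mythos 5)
Your proposal is correct and takes essentially the same route as the paper: you verify the four conditions of a good decomposition directly, with condition~\eqref{p:good} obtained from Claim~\ref{cl:gamma-few-cuts} via the observation relating induced and arc-induced partitions, condition~\eqref{p:chld} from the leaf-partition property of the branch decomposition, and conditions~\eqref{p:root} and~\eqref{p:leaf} from the choice of $\brleaf(r)$ and the single-edge nooses at leaves. Your explicit bookkeeping that each terminal arc $(t,w_t)$ persists as a distinguished edge of $H^\times$ incident to $t$ is just a careful spelling-out of what the paper leaves implicit in defining the arc-induced partitions.
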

\begin{proof}
We start with Property~\ref{p:good}. 
Consider a noose $\brnoose := \brnoose(f)$ for some edge $f$ of $\tree$. 
Let $\{\Tedges_1,\Tedges_2\}$ be the partition of $\Tedges$ induced
by $\brnoose$ and $T_i = \{t \in T~|~(f_t,t) \in \Tedges_i\}$ for $i=1,2$.

By the properties of the sc-branch decomposition $(\tree, \brleaf, \brnoose)$, $\brnoose$ visits at most $\alpha \sqrt{|T|}$ vertices of $H^\times$ for some universal constant $\alpha$.
Recall that the walk $W$ is partitioned into paths $B_1,B_2,\ldots,B_\ell$. 
Claim~\ref{cl:gamma-touched} asserts that at most $4\alpha\sqrt{|T|}$ paths $B_i$ are touched by $\brnoose$.

The removal of the paths of $\Qq_\brnoose$ from the walk $W$ splits $W$ into 
at most $4\alpha\sqrt{|T|}$ subwalks. By the definition of $\Qq_\brnoose$, the set of terminals
visited by each such subwalk either lies on one side of $\brnoose$ or is a single
terminal lying on $\brnoose$. Consequently, the set of terminals visited by each such subwalk
is either fully contained in $T_1$ or fully contained in $T_2$. Property~\ref{p:good} follows from the definition of $\beta$.

Let us now verify the remaining properties of a good decomposition one-by-one.
We have $\beta(r) = T$ by definition, and note that by the choice of $\brleaf(r)$ we have $\beta(r') = T$ for the unique child $r'$ of $r$ in $\tree$.
This ensures property~\ref{p:root}.
For property~\ref{p:chld}, pick a non-leaf non-root node $s$ of $\tree$, and observe that it has always exactly two children, say $s_1$ and $s_2$. 
Let $f$ be the edge of $\tree$ connecting $s$ with its parent, and let $f_1,f_2$ be edges connecting $s$ with $s_1,s_2$, respectively.
By the properties of a branch decomposition, the set of edges on the side of $\brnoose(f)$ that does not contain $\brleaf(r)$ is partitioned into sets defined in the same manner for $\brnoose(f_1)$ and 
$\brnoose(f_2)$. As $\beta(s),\beta(s_1),\beta(s_2)$ are defined by including every terminal $t$ depending on whether the edges $(f_t,t)$ is included in the sets above, property~\ref{p:chld} follows.
Finally, for property~\ref{p:leaf}, note that for a leaf $s$ with parent edge $f$, the noose $\brnoose(f)$ encloses a single edge, and thus $|\beta(s)| \leq 1$.
\cqed\end{proof}

Thus, our goal now is to construct a small family of subsets of $T$ that contains all sets $\beta(s)$
for nonroot nodes $s$ of $\tree$.

\paragraph{Enumeration algorithm.}
Intuitively, the partition of the terminal edges induced by a noose $\brnoose=\brnoose(f)$ for some $f \in E(\tree)$ may be guessed as follows.
By Claim~\ref{cl:gamma-touched}, we have $|\Qq_\brnoose| = \Oh(\sqrt{|T|})$.
Furthermore, $\Qq_\brnoose \subseteq \Qq \subseteq \mathcal{B}$ while
the set $\mathcal{B}$ is known to the algorithm and of size $\Oh(|T|^{12})$.
Consequently, we can guess the set $\Qq_\brnoose$
and reconstruct the parts of $H^\times$ visited by $\brnoose$. 
This will be sufficient to invoke Lemma~\ref{lem:noose-enum}.

\begin{claim}\label{cl:gamma-enum}
In time $|T|^{\Oh(\sqrt{|T|})} n^{\Oh(1)}$ one can enumerate a family $\Aa$ of $|T|^{\Oh(\sqrt{|T|})}$ subsets of $T$ 
such that for every nonroot vertex $s \in V(\tree)$ we have $\beta(s) \in \Aa$.
\end{claim}
\begin{proof}
Consider a nonroot vertex $s \in V(\tree)$ with parent edge $f$
and noose $\brnoose := \brnoose(f)$.
Claim~\ref{cl:gamma-touched} ensures that $|\Qq_\brnoose| = \Oh(\sqrt{|T|})$
while $\Qq_\brnoose \subseteq \Qq \subseteq \mathcal{B}$ and $|\mathcal{B}| = \Oh(|T|^{12})$.
We branch into $|T|^{\Oh(\sqrt{|T|})}$ subcases, guessing (considering all possible options)
for the set $\Qq_\brnoose$.

Given $\Qq_\brnoose$, construct the graph $H_\brnoose$ as the union of $D$ and all paths
from $\Qq_\brnoose$. Note that $H_\brnoose$ is a subgraph of $H$ that contains all terminal edges. 
Construct $H_\brnoose^\times$ from $H_\brnoose$ in the same way we constructed $H^\times$ from
$H$: for every $Q \in \Qq_\brnoose$, contract $Q$ into a three-vertex path $Q^\times$
consisting of the endpoints of $Q$ and an internal vertex $u_Q$. 

The graph $H_\brnoose^\times$ is not necessarily a subgraph of $H^\times$, but it contains
all terminal edges and, since we added all paths touched by $\brnoose$ to $H_\brnoose$
and contracted them while constructing $H_\brnoose^\times$, the curve $\brnoose$
(after some possible shifts within the faces of $H_\brnoose^\times$ to accommodate
 differences in planar drawings of $H_\brnoose^\times$ and $H^\times$)
is a noose with respect to $H_\brnoose^\times$ that visits $\Oh(\sqrt{|T|})$ vertices
of $H_\brnoose^\times$ and partitions $\Tedges$ in the same way as in $H^\times$.

However, the graph $H_\brnoose^\times$ may not be connected, so we cannot use Lemma~\ref{lem:noose-enum} directly. Instead, we crucially use now the fact that $H_\brnoose$ contains the tree $D$
that is connected and contains all terminal edges.
We have that $H_\brnoose$ contains a connected component $C_\brnoose$ that contains $D$
and, consequently, $H_\brnoose^\times$ contains a connected component $C_\brnoose^\times$
that contains the image of $C_\brnoose$. In particular, $C_\brnoose^\times$ contains
all terminal edges. Hence, to understand how $\gamma$ partitions terminal edges, 
it suffices to apply Lemma~\ref{lem:noose-enum} to $C_\brnoose^\times$, instead of the entire
graph $H_\brnoose^\times$.

More precisely, by applying Lemma~\ref{lem:noose-enum}
to $C_\brnoose^\times$, $\Tedges$, and $\ell = \Oh(\sqrt{|T|})$, we obtain
a family $\Aa(\Qq_\brnoose)$ of $|T|^{\Oh(\sqrt{|T|})}$ partitions of
$\Tedges$ that contains the one induced by $\brnoose$.

Consequently, for every of the $T^{\Oh(\sqrt{|T|})}$ guesses of $\Qq_\brnoose$, for every
partition $\{\Tedges_1,\Tedges_2\} $ $\in \Aa(\Qq_\brnoose)$ and every $i=1,2$,
we may output $\{t\in T~|~(f_t,t) \in \Tedges_i\}$ as an element of $\Aa$ and conclude.
\cqed\end{proof}

Claims~\ref{cl:good} and~\ref{cl:gamma-enum} conclude the proof of Lemma~\ref{lem:enum}.

\subsection{Dynamic programming algorithm}\label{sec:dp}
In this section we show that, given the family $\goodfam$ obtained using Lemma~\ref{lem:enum},
one can find a shortest walk visiting all terminals in time $|\goodfam|^{\Oh(1)} \cdot |T|^{\Oh(\sqrt{|T|})} \cdot n^{\Oh(1)}$
by a standard dynamic programming approach.
More formally, we show the following lemma.

\begin{lemma}\label{lem:dp}
Given a canonical instance $(\Ggrid,T,\mathcal{P})$ and a family $\goodfam$ of subsets of $T$,
one can in time $|\goodfam|^{\Oh(1)} \cdot |T|^{\Oh(\sqrt{|T|})} \cdot n^{\Oh(1)}$ compute a canonical walk $W_0$ of total length not greater than
the minimum length of a canonical walk $W$ for which there exists a good decomposition $(\tree,\beta)$ 
satisfying $\{\beta(s) \colon s \in V(\tree)\} \subseteq \goodfam$.
\end{lemma}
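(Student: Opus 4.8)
The proof is a by-now standard ``restricted'' Bellman--Held--Karp dynamic programming over connectivity patterns, so I will mainly describe the state space and transitions and flag the two points that actually need care: (a) ensuring that the walk reconstructed at the end is a genuine $T$-short walk, and (b) ensuring that the intermediate states needed to realise a good decomposition never leave the maintained state space. We may assume $G$ is strongly connected, since otherwise no $T$-short walk exists and the statement is vacuous. A \emph{state} is a pair $(S,\mathcal M)$ where $S\subseteq T$ satisfies $S\in\Bb$ or $|S|\le C\sqrt{|T|}$, and $\mathcal M$ is a collection of at most $C\sqrt{|T|}$ pairwise vertex-disjoint ordered pairs of terminals of $S$ (a pair may be degenerate, i.e. of the form $(t,t)$). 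There are $(|\Bb|+|T|^{\Oh(\sqrt{|T|})})\cdot|T|^{\Oh(\sqrt{|T|})}$ states. The intended value $\mathrm{opt}(S,\mathcal M)$ is the minimum total weight of a collection $\{P_{(a,b)}:(a,b)\in\mathcal M\}$ of walks, where $P_{(a,b)}$ runs from $a$ to $b$ and is a concatenation of shortest paths between the consecutive terminals it visits, and such that the multiset of terminals visited by the whole collection is exactly $S$, each once. I would precompute $P_G(t,t')$ for all terminals $t,t'$ in $n^{\Oh(1)}$ time; note that by the preprocessing (every terminal $t$ has a unique neighbour $w_t$, joined to it by $0$-weight arcs) together with uniqueness of shortest paths, such a path visits no terminal other than $t$ at its start and $t'$ at its end.

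The transitions are: \textbf{(base)} $\mathrm{opt}(\{t\},\{(t,t)\})=0$, realised by the one-vertex walk at $t$; \textbf{(merge)} given states $(S_1,\mathcal M_1)$ and $(S_2,\mathcal M_2)$ with $S_1\cap S_2=\emptyset$, regard $\mathcal M_1\cup\mathcal M_2$ as a set of arcs and add a set $\mathcal N$ of ``bridge'' arcs between endpoints so that the result is a vertex-disjoint union of directed paths, obtaining a pattern $\mathcal M$ on $S:=S_1\cup S_2$; if $|\mathcal M|\le C\sqrt{|T|}$ we record $\mathrm{opt}(S,\mathcal M)\le \mathrm{opt}(S_1,\mathcal M_1)+\mathrm{opt}(S_2,\mathcal M_2)+\sum_{(b,c)\in\mathcal N}\wei(P_G(b,c))$, the realising collection being obtained by splicing $P_G(b,c)$ along each bridge. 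There are $|T|^{\Oh(\sqrt{|T|})}$ choices of $\mathcal N$; growing a single terminal is the special case $(S_2,\mathcal M_2)=(\{c\},\{(c,c)\})$, and $\mathcal N=\emptyset$ is permitted. Since $|S_1|,|S_2|\ge 1$, every merge strictly increases $|S|$, so the values are well defined by processing states in order of increasing $|S|$; together with the $\Oh(\sqrt{|T|})$ shortest-path lookups per transition and the reconstruction below, the total time is $|\Bb|^{\Oh(1)}\cdot|T|^{\Oh(\sqrt{|T|})}\cdot n^{\Oh(1)}$. Finally I would output the closed walk of weight $\min_{a,b\in T}(\mathrm{opt}(T,\{(a,b)\})+\wei(P_G(b,a)))$ obtained from the collection witnessing the minimising state --- a single walk from $a$ to $b$ through all of $T$ --- by appending $P_G(b,a)$ (if no state $(T,\{(a,b)\})$ is reachable, output any $T$-short walk). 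Standard backtracking over the stored transitions recovers the walk itself. \textbf{Soundness} is a routine induction along the recursion: every stored value is attained by a genuine collection of the stipulated form, the key point at each merge being that a bridge $P_G(b,c)$ with $b\in S_1$, $c\in S_2$, $S_1\cap S_2=\emptyset$ visits no terminal besides $b$ and $c$, so no terminal gets visited twice and the visited-terminal set is exactly $S_1\uplus S_2$. Hence the output is the weight of an actual closed walk that visits every terminal exactly once and whose legs between consecutive terminals are shortest paths, i.e.\ a $T$-short walk.

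For \textbf{completeness}, let $W$ be $T$-short with a good decomposition $(\tree,\beta)$ satisfying $\beta(s)\in\Bb$ for all $s$, write $\pi_W=(t_1,\dots,t_{|T|})$ with legs $L_i=P_G(t_i,t_{i+1})$ (indices cyclic). For a node $s$, since $\beta(s)$ is a good section it splits into at most $C\sqrt{|T|}$ maximal contiguous blocks of $\pi_W$; let $\mathcal M(s)$ record the (first, last) terminal of each block --- so $(\beta(s),\mathcal M(s))$ is a state --- and let $w(s)$ be the total weight of the legs $L_i$ both of whose endpoints lie in a single block of $\beta(s)$. A bottom-up induction yields $\mathrm{opt}(\beta(s),\mathcal M(s))\le w(s)$. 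At a leaf $|\beta(s)|\le C\sqrt{|T|}$, so $(\beta(s),\mathcal M(s))$ is assembled from singletons by merges all of whose states have set of size $\le C\sqrt{|T|}$ (hence in the state space), the bridges being precisely the within-block legs counted by $w(s)$. At an internal node with children $s_1,s_2$ we have $\beta(s)=\beta(s_1)\uplus\beta(s_2)$; each $\beta(s_j)$-block lies inside a single $\beta(s)$-block, and within a $\beta(s)$-block the $\beta(s_1)$- and $\beta(s_2)$-blocks alternate and are joined by legs, so one single merge of $(\beta(s_1),\mathcal M(s_1))$ and $(\beta(s_2),\mathcal M(s_2))$ --- both states by hypothesis --- with those joining legs as bridges produces $(\beta(s),\mathcal M(s))$, giving $\mathrm{opt}(\beta(s),\mathcal M(s))\le w(s_1)+w(s_2)+\sum_{\text{bridges}}\wei(L_i)=w(s)$ (the equality because $w(s)=w(s_1)+w(s_2)$ plus the weights of exactly those bridging legs, and $\wei(P_G(\cdot,\cdot))=\wei(L_i)$ as $W$ is $T$-short). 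Applying this at the root $r$ with $\beta(r)=T$ --- choosing the point where the cyclic order is cut to be a boundary between $\beta(r_1)$ and $\beta(r_2)$, say between $t_m$ and $t_{m+1}$, and taking $\mathcal M(r)=\{(t_{m+1},t_m)\}$ --- gives $\mathrm{opt}(T,\{(t_{m+1},t_m)\})\le w(r)=\wei(W)-\wei(L_m)$, whence the output is at most $\mathrm{opt}(T,\{(t_{m+1},t_m)\})+\wei(P_G(t_m,t_{m+1}))\le\wei(W)$.

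\textbf{Main obstacle.} There is nothing deep here; the two things genuinely requiring verification are (a) that the reconstructed object is a bona fide $T$-short walk --- which is exactly the place we use that shortest paths between terminals avoid all other terminals, a consequence of the preprocessing --- and (b) the internal-node step of the completeness induction, namely that the block structure of a good section at a node is obtained by splicing the block structures at its two children. Point (b) is where it is essential that a good decomposition is a \emph{binary} tree and that $\beta$ maps every node into $\Bb$: this is precisely what guarantees that the correct merge combines exactly the two children, so that no state outside $\Bb\cup\{S:|S|\le C\sqrt{|T|}\}$ is ever needed as an intermediate and the DP over the restricted state space suffices.
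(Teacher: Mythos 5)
Your proposal is correct and follows essentially the same route as the paper: a restricted Bellman--Held--Karp dynamic program over states $(S,\mathcal{M})$ with $S$ drawn from $\Bb$ (plus small sets) and a connectivity pattern of $\Oh(\sqrt{|T|})$ terminal pairs, with realizations merged by splicing shortest paths between terminals, and completeness proved by a bottom-up induction along the good decomposition tree. The only differences are cosmetic — you maintain the invariant that realizations are concatenations of shortest paths visiting each terminal once (so the answer is directly a $T$-short walk) and enumerate the bridge set explicitly, whereas the paper allows arbitrary realizations, combines them by an inner DP, and re-extracts the witnessing permutation at the end — so this is the same proof in substance.
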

\begin{proof}
Without loss of generality we assume that $\goodfam$ contains all subsets $A\subseteq T$ with $|A|\leq C\sqrt{|T|}$, where the constant $C$ comes from Lemma~\ref{lem:enum}; this is because the number of such subsets is $|T|^{\Oh(\sqrt{|T|})}$, 
so we may just add them to $\goodfam$.

A \emph{state} consists of a set $A \in \goodfam$ and a family $\mathcal{M}$ of $\Oh(\sqrt{|T|})$ ordered pairs of (not necessarily different) terminals from $A$. 
Note that there are  $|\goodfam| \cdot |T|^{\Oh(\sqrt{|T|})}$ states.

A \emph{realization} of a state $(A, \mathcal{M})$ is a mapping $P$ that
assigns to every pair $(t,t') \in \mathcal{M}$ a walk $P(t,t')$ from $t$ to $t'$ in $\Ggrid$
in such a manner that the walks $\{P(t,t') \colon (t,t') \in \mathcal{M}\}$ together visit all terminals of $A$.
The {\em{weight}} of a realization is the sum of the weights of all walks in it.
In our dynamic programming algorithm we shall compute a realization $P_{(A,\mathcal{M})}$ for every
state $(A, \mathcal{M})$, in the order of increasing size of $A$.

Given two walks $Q_1$ and $Q_2$ with endpoints in $T$, a \emph{concatenation} of $Q_1$ and $Q_2$ is a walk consisting of the walk $Q_1$, then the canonical path from the ending point of $Q_1$ to the starting vertex of $Q_2$, and 
then the walk $Q_2$.
This definition naturally generalizes to concatenations of longer sequences of walks.

For states with $|A| \leq C\sqrt{|T|}$, we compute a minimum weight realization by brute force,
as there are $|T|^{\Oh(\sqrt{|T|})}$ ways to arrange $A$ into a set of sequences corresponding to
the terminals visited by different paths of the realization of minimum weight.

For states $(A,\mathcal{M})$ with 
larger sets $A$, we iterate over all partitions of the form $A = A_1 \uplus A_2$ with $A_1,A_2 \in \goodfam$ and $|A_1|,|A_2|<|A|$,
and all states $(A_1,\mathcal{M}_1)$ and $(A_2,\mathcal{M}_2)$ with precomputed realizations $P_1$ and $P_2$,
respectively.
We iterate over all possibilities of concatenating paths from the images of 
$P_1$ and $P_2$ by brute force.
More formally, we iterate over all possible families $\mathcal{Z}$ of sequences
of elements of $\mathcal{M}_1$ and $\mathcal{M}_2$ that uses every pair from $\mathcal{M}_1$ and
$\mathcal{M}_2$ exactly once. 
Since $|\mathcal{M}_1|,|\mathcal{M}_2| = \Oh(\sqrt{|T|})$,
there are $|T|^{\Oh(\sqrt{|T|})}$ choices for the family $\mathcal{Z}$. 
For every such family $\mathcal{Z}$, we construct a mapping $P$ as follows:
for every sequence $((s_1,t_1),(s_2,t_2),\ldots,(s_r,t_r)) \in \mathcal{Z}$
we make the concatenation of the walks $(P_1 \cup P_2)(s_i, t_i)$ for $1 \leq i \leq r$
and let $P$ map $(s_1,t_r)$ to this concatenation.
If in the end $P$ has $\mathcal{M}$ as a domain, we consider $P$ as a candidate 
realization of $(A,\mathcal{M})$ and finally choose a realization of minimum weight among
all choices of $(A_1,\mathcal{M}_1)$, $(A_2,\mathcal{M}_2)$, and $\mathcal{Z}$.

Finally, we iterate over all states $(T, \{(t,t')\})$ for terminals $t,t' \in T$ and
set $\pi_{t,t'}$ to be the order in which $P_{(T,\{(t,t')\})}(t,t')$ traverses the terminals.
For each such choice, compute a canonical walk $W_{t,t'}$ with the witnessing permutation $\pi_{t,t'}$,
and return the minimum-weight walk found.

Clearly, the algorithm returns a canonical walk. Consider a canonical walk $W$ for which 
there exists a witnessing permutation $\pi_W$ and a good decomposition $(\tree,\beta)$
such that $\{\beta(v) \colon v \in V(T)\} \subseteq \goodfam$.
From the definition of a good decomposition, for every node $s \in V(\tree)$ there exists
a collection $\mathcal{P}_s$ of at most $C\sqrt{|T|}$ subwalks of $W$ that visit exactly
the terminals of $\beta(s)$. Furthermore, we can choose these collections in such a manner
that every subwalk starts and ends at a terminal, 
for the root $r$ the collection $\mathcal{P}_r$ consists of a single subwalk of $W$ from the first to the last terminal of $\pi_W$,
and for every node $s$ with children $s^1$ and $s^2$, the walks of $\mathcal{P}_s$ are concatenations of some walks of $\mathcal{P}_{s^1}$ and $\mathcal{P}_{s^2}$,
where every walk in $\mathcal{P}_{s^1}$ and $\mathcal{P}_{s^2}$ is used exactly once.
Let $\mathcal{M}_s$ be the family of pairs of endpoints of $\mathcal{P}_s$.
Then a standard inductive argument shows that the realization
for $(\beta(s),\mathcal{M}_s)$ is of weight at most the total weight of the walks in $\mathcal{P}_s$; to ensure the correctness of computation for states with $|A|\leq C\sqrt{|T|}$ we use the assumption that $\goodfam$
contains all subsets $A$ satisfying this condition.
Consequently, if $t,t'$ are the first and the last terminal on $\pi_W$, then
the computed realization of $(T,\{(t,t')\})$ is of weight at most the weight of the subwalk of $W$ from $t$ to $t'$.
Hence the canonical walk computed for $\pi_{t,t'}$ is of weight not larger than the weight of $W$, which concludes the proof.
\end{proof}

By pipelining Lemma~\ref{lem:enum} with Lemma~\ref{lem:dp} we obtain Lemma~\ref{lem:alg-blow-up}.

\subsection{Wrap up}

We conclude the proof of Theorem~\ref{thm:dirtsp}.
Let $(G,T)$ be an input instance.
By Lemma~\ref{lem:input-to-blow-up} we obtain an equivalent canonical instance $(\Ggrid,T,\mathcal{P})$.
Then Lemma~\ref{lem:alg-blow-up} allows us to find a minimum-weight solution to \tspname{} in $(\Ggrid,T,\mathcal{P})$ in time $|T|^{\Oh(\sqrt{|T|})} n^{\Oh(1)}$
which can be projected back to a minimum-weight solution to \tspname{} in $(G,T)$.
This concludes the proof of Theorem~\ref{thm:dirtsp}.

\bibliographystyle{plainurl}

\bibliography{references}

\end{document}